\newlist{steps}{enumerate}{1}
\setlist[steps, 1]{label = \emph{Step} \arabic*:}
\theoremstyle{plain}
\newtheorem{theorem}{Theorem}[section]
\newtheorem{corollary}[theorem]{Corollary}
\newtheorem{lemma}[theorem]{Lemma}
\newtheorem{remark}[theorem]{Remark}
\newtheorem{assumption}[theorem]{Assumption}
\newtheorem{proposition}[theorem]{Proposition}
\newtheorem{definition}[theorem]{Definition}
\theoremstyle{remark}
\numberwithin{equation}{section}
\begin{document}		
	\title{}
\title{Price Impact on Term Structure}
\author[]{Damiano Brigo\thanks{Corresponding author. {\tt damiano.brigo@imperial.ac.uk} \ Department of Mathematics, Imperial College London, South Kensington Campus, Weeks Hall, 16--18 Princes Gardens, London.}}
\author[]{Federico Graceffa}
\author[]{Eyal Neuman}
\affil[]{Department of Mathematics, Imperial College London}
\maketitle

\begin{abstract}
We introduce a first theory of price impact in presence of an interest-rates term structure. We explain how one can formulate instantaneous and transient price impact on zero-coupon bonds with different maturities, including a cross price impact that is endogenous to the term structure. We connect the introduced impact to classic no-arbitrage theory for interest rate markets, showing that impact can be embedded in the pricing measure and that no-arbitrage can be preserved. We extend the price impact setup to coupon-bearing bonds and further show how to implement price impact in a HJM framework. We present pricing examples in presence of price impact and numerical examples of how impact changes the shape of the term structure. Finally, we show that our approach is applicable by solving an optimal execution problem in interest rate markets with the type of price impact we developed in the paper.   
\end{abstract}

\bigskip

{\bf Key words:} Price impact, term structure models, fixed-income market impact, cross impact, impacted risk-neutral measure, impacted yield curve, optimal execution, impacted bond price, impacted Eurodollar futures price.

\bigskip
	

\section{Introduction}
The main aim of this work is to present a combined theory of the term structure of interest rates and of price impact, with applications to optimal execution. This objective entails the inclusion of a specific type of cross price impact that is specific to fixed income. 

Term structure modeling with a view to derivatives valuation and hedging has been developed over several decades. For the purposes of this work and what we might call the classic theory we refer to the monographs by Bjork \cite{bjork1997interest}, Brigo and Mercurio \cite{brigo2007interest} and Filipovic \cite{filipovic2009term}. After the crisis that started in 2007, the gap between two of the rates that were used as benchmarks for risk free rates, namely interbank and overnight rates, widened considerably, peaking in October 2008, following the defaults of several financial institutions in the space of one month \cite{brigopallatorre}. This highlighted the fact that interbank rates could no longer be used as benchmarks for risk free rates and neither could they be used to derive zero-coupon curves that were not contaminated by credit and liquidity risk. This led to the necessity to model multiple interest rate curves, treating interbank rates as risky rates affected by credit and liquidity risk and adopting overnight based rates as new risk free rates. 

This multiple interest rate curve academic literature was initiated by practitioners, see in particular the monograph by Henrard \cite{henrard2014interest}. Substantial contributions were made later by numerous academics, where we refer to a monograph by Grbac and Runggaldier \cite{grbac2015interest}, Crepey et al. \cite{crepey2015levy}, Grbac et al. \cite{grbac2015affine}, Cuchiero et al. \cite{cuchiero2019affine,cuchiero2016general}, Nguyen and Seifried \cite{nguyen2015multi} and finally to Bormetti et al. \cite{bormetti2018}, for multiple curves in conjunction with valuation adjustments and credit risk.

Further recent developments include the presence of negative interest rates in many currencies, see for example the BIS report  \cite{negativeratesbis}, and the ongoing project of eliminating current interbank rates like the London Interbank Offer Rate (LIBOR), replacing them with new types of risk free rates inspired by overnight rates. This puts the multiple-curve area in a state of uncertainty, while negative rates prompted the mainstream resurgence of Gaussian models that were previously justified only in very special economies exhibiting negative rates, such as for example Switzerland in the seventies. 

Given the state of market uncertainty on benchmark interest rates, products and markets, we will not consider these recent developments in this work, except for allowing for negative rates in our formulation. We are interested in developing a combination of term structure modeling and price impact in the classic theory of interest rates. We are confident that multiple curves and further discussion of negative rates, if still present in the market after reforms and updated central bank policies, can be incorporated in further work after the classic theory has been developed.

Despite the fact that the bond market size is considerably larger than the equity market size, relevance of liquidity risk in the context of bonds was pointed out in several papers. A report from the Federal Reserve Bank of New York from 2003 \cite{chordia2005empirical}, for instance, showed that bond and stock markets have common factors driving their liquidity. A strong relationship between liquidity in the Treasury bond market and in the stock market was later highlighted also by Goyenko and Ukhov \cite{goyenko2009stock}. Another significant contribution appears in a recent paper by Schneider and Lillo \cite{schneider2019cross}, which studied cross price impact among sovereign bonds.

The financial crisis of 2008-2009 have also raised concerns about the inventories kept by intermediaries. 
Regulators and policy makers took advantage of two main regulatory changes (Reg NMS in the US and MiFID in Europe) and enforced more transparency on the transactions and hence on market participants positions, which pushed the trading processes toward electronic platforms \cite{citeulike:12047995}.
Simultaneously, consumers and producers of financial products asked for less complexity and more transparency.

This tremendous pressure on the business habits of the financial system, shifted it from a customized and high margins industry, in which intermediaries could keep large (and potentially risky) inventories, to a mass market industry where logistics have a central role. As a result, investment banks nowadays unwind their risks as fast as possible.
 In the context of small margins and high velocity of position changes, trading costs  are of paramount importance. A major factor of the trading costs is the price impact: the faster the trading rate, the more the buying or selling pressure will move the price in a detrimental way. 
 
Academic efforts to quantify and reduce the transaction costs of large trades trace back to the seminal papers of Almgren and Chriss \cite{OPTEXECAC00} and Bertsimas and Lo \cite{BLA98}. In both models one large market participant (for instance an asset manager or a bank) would like to buy or sell a large amount of shares or contracts during a specified duration. 
The cost minimization problem turned out to be quite involved, due to multiple constraints on the trading strategies.
On one hand, the price impact demands to trade slowly, or at least at a pace which takes into account the available liquidity (see \cite{citeulike:13497373} and references therein).
On the other hand, traders have an incentive to trade rapidly, because they do not want to carry the risk of an adverse price move far away from their decision price. The importance of optimal trading in the industry generated a lot of variations for the initial mean-variance minimization of the trading costs (see \cite{citeulike:12047995,cartea15book,olivier16book} for details). These type of problems are usually formulated as optimisation problems in the context of stochastic control where the agent tries to minimize the transaction costs which result by the price impact and to reduce the risk associated with holding the assets for too long (see e.g. \cite{citeulike:5282787,GLFT, citeulike:13675369}). 

In this spirit, we will follow the term structure theory as developed by Bjork \cite{bjork1997interest}, modifying it to allow for the inclusion of price impact.  We will start from the simplest possible dynamics, namely one factor short rate models, and extend it later to instantaneous forward rate models. We will introduce price impact formulated on zero-coupon bonds, since they are a possible choice of building blocks for the term structure. We will later connect this with impact on coupon bearing bonds that are more commonly traded. 

We assume that an agent who is executing a large order of bonds is creating two types of price impact which are extensively used in the literature. The first one is an instantaneous (or temporary) price impact, which affects the asset  price only while trading, and fades away immediately after. This type of price impact occurs due to the fact that a large buy (sell) trade consumes the liquidity which is available in the market by ``walking through'' the first few price levels of the limit order book (see e.g. \cite{OPTEXECAC00}  and Chapter 6.3 of \cite{cartea15book}). However, empirical studies have shown that price impact also has a \emph{transient} effect. A short of liquidity due to a large trade creates an imbalance between supply and demand, which in turn pushes the price in a detrimental direction. This effect decays within a short time period after each trade (see \cite{citeulike:13497373} and \cite{Ob-Wan2005}). Execution in presence of transient price impact was studied extensively in the context of optimal control problems (see e.g.  \cite{GSS, GSS2, AFS2,neuman2020optimal,stat-adp18}).  In this work we incorporate these two types of price impact models into a bonds trading framework, as was done in \cite{neuman2020optimal} for equities. 

Applying these price impact models to the term structure will be challenging. At every point in time the term structure of interest rates is a high dimensional object, or even an infinite dimensional one when considering all possible maturities for interest rates or zero-coupon bonds at a given time. This is a unique feature of term structure modeling, where differently from FX or equity modeling for example, we model a whole curve dynamics rather than a point dynamics. We can expect that trading a bond with a specific maturity may impact the price of bonds with different maturities {\emph{on the same currency curve}}. In this sense, the cross price impact is endogenous to the same underlying asset, differently from what happens in other markets.  
We will investigate how price impact interacts with no-arbitrage dynamics, and we will encapsulate the effect of price impact in the definition of a new no-arbitrage pricing measure embedding impact itself.  This will be done by extending the market price of risk to an impacted version embedding the bond price impact speed. The impacted zero-coupon bond dynamics will then be written as the unimpacted bond dynamics but under a different measure. We will also introduce an impacted physical measure that could be useful for risk management and risk analysis. Finally, we will illustrate our theory by proposing an application to optimal execution. 

We will not limit ourselves to short rate models. We will also see how in the Heath-Jarrow-Morton model the no-arbitrage drift condition for instantaneous forward rate dynamics can be maintained under price impact by resorting to the modified pricing measure.


The paper is structured as follow. 
In section \ref{sec:road} we propose a roadmap for readers who are not familiar with at least one of the two areas of the paper, namely term structure modeling and market impact.  
In Section \ref{sec:main} we introduce the short rate models setup and the main theoretical results. In particular, we introduce price impact for zero-coupon bonds, and we look at the impacted market price of risk, absence of arbitrage and the impacted risk neutral measure. We define the impacted yield curve and extend impact to coupon bearing bonds. We further show how to use the impact setup in a HJM framework.  
Section \ref{sec:examples} features a few examples including valuation of impacted Eurodollar futures with the Hull and White model.
Section \ref{sec:numerical_results}	
presents some numerical results illustrating how the yield curve behaves under impact.
Finally, we introduce a result on optimal execution with impacted bonds in Section \ref{sec:optexec}.
Sections \ref{sec:proofs} and \ref{sec:proof_linear_feedback} include proofs that have not been included in the main text. 

%
%

\section{Roadmap: how to read this paper}\label{sec:road}

This paper combines areas that are typically disjointed in the literature, such as term structure modeling and price impact/optimal execution. It is possible that most readers will be familiar with one area but not the other one. We expect to find few readers who may read the whole paper without effort. We therefore provide a roadmap for the reader who wishes to have a quick understanding of what the paper deals with and how it is organized, with a particular focus on where finding what and what needs to be read more carefully depending on the reader's background. 

{\bf Main section of the paper.} 
Section \ref{sec:main} is the section where both the term structure model is introduced and  the impact model is postulated and developed in detail, linking the developments to no-arbitrage pricing. The reader should start ideally from here. 

\medskip

{\bf Interest rate and bond dynamics under different measures.} In Section 
\ref{subsec:def} we introduce our initial assumption on the interest rate dynamics, assuming it is a one-dimensional short rate model dynamics. This will be relaxed later with a Heath-Jarrow-Morton setting in Section \ref{subsec:HJM} but the paper arguments and innovations are best appreciated in a short-rate setting. We start postulating a diffusion dynamics for the short rate $r$ under the physical measure, giving then also the dynamics of the bond price $P$ under the same measure. We introduce the market price of risk leading to the risk neutral measure dynamics for the zero-coupon bond and show that, as classically done in Bjork with no-arbitrage assumptions related to self-financing and locally risk free portfolios, it does not depend on the bond maturity. With this market price of risk we then derive the dynamics of the short rate $r$ under the risk neutral measure. All this is standard theory that the term structure expert will find immediate and may want to read quickly. For readers who are more into price impact but relatively unfamiliar with term structure modeling, we recommend a careful reading of this introductory part. 

\medskip

{\bf Inventory and impact.} We define the trader's inventory $X$, the execution speed $v$ and the instantaneous and transient impacts, leading to the impacted bond price $\tilde{P}$ in Section \ref{subsec:def}. This part will be easier to the market microstructure expert but, given that applications of price impact to full term structure modeling are not known, we recommend to read carefully this part  given the nuances involved in developing an impact theory under term structure.
We then show how it is possible to incorporate the effect of price impact in a new definition of impacted market price of risk, leading to an impacted risk neutral measure.  
 
{\bf Impacted risk neutral and physical probability measures.} 
This measure is introduced in Section \ref{subsec:rnm}. Under this measure, the impacted zero-coupon bond dynamics is the same as the risk neutral dynamics of the unimpacted bond provided one replaces the Brownian motion under the original risk neutral measure with a Brownian motion under the new impacted measure. The impacted risk neutral measure allows us to define an impacted physical measure as well, and again the bond price dynamics under the impacted physical measure is the same as the bond price dynamics under the original physical measure, provided that the Brownian motion is replaced with a Brownian motion under the new measure. In this setting, we can show that the impacted zero-coupon bond is the impacted risk neutral expectation of the stochastic discount factor. In a way, the impacted risk neutral measure behaves indeed as much as possible as a risk neutral measure in presence of impact. 

One of the main purposes of introducing the risk neutral measure in classic derivatives pricing is the idea that the existence of this measure implies absence of arbitrage. The reader might then wonder whether we can say the same of the impacted risk measure. This is dealt with in the Section \ref{subsec:pricing}.

\medskip

{\bf Derivatives pricing and absence of arbitrage.} First, we show that the existence of an impacted equivalent pricing measure implies that the impacted model is arbitrage free. It then follows that we can extend our result for zero-coupon bonds, namely we can price interest-rate derivatives under impact by taking the impacted risk neutral expectation of their discounted cash flows. We present an example involving Eurodollar futures. 

\medskip

An important part of the theory of price impact for interest rates is what we might call {\bf endogenous cross-impact}. With this we point to the impact of a zero-coupon bond with a given maturity on the price of a zero-coupon bond with a different maturity {\emph{on the same currency curve}}. Exogenous cross impact would be the impact of a bond in one curve on the bond on another curve, but this is not what we deal with here. The theory of endogenous cross impact is developed in Section \ref{subsec:yield}. This has to be read very carefully by all readers, as it represents a very peculiar aspect of any theory of price impact applied to a term structure. In term structure modeling the underlying of derivatives is a whole curve, so price impact has to be postulated also for one part of the curve over other parts. This is not a problem in markets where assets are points rather than curves, like stocks or FX rates. 

\medskip

{\bf Coupon bonds.} The theory developed up to this point of the paper deals with zero-coupon bonds. However, most traded bonds are coupon bonds, so it is necessary to develop a theory of price impact for coupon bearing bonds. This is done in Section \ref{subsec:coupon}. First, by looking at prices as expectations of discounted cash flows, we deduce that the price of impacted coupon bearing bond is simply obtained by the unimpacted one, by replacing all unimpacted zero-coupon bonds in its expression with impacted ones. 
We find a way to connect the impact on the coupon bearing bond to the impact of single zero-coupon bonds, and find assumptions under which the impacted dynamics may even be the same.

\medskip

Finally, as explained earlier, Section \ref{subsec:HJM} goes into quite some detail to show that the theory developed so far can be equally formulated in the {\bf Heath-Jarrow-Morton (HJM) framework}, using instantaneous forward rates to model the term structure. It is well known to interest rate experts that the no-arbitrage condition in the HJM framework is expressed by requiring that the drift in the forward rate dynamics is a specific transformation of the dynamics volatility. We show that the same relationship between drift and volatility holds for impacted instantaneous forward rates under the impact-adjusted risk neutral measure. The classic HJM theory then carries over to the impacted case. This part may be helpful to readers who wish to see our framework work outside the short rate models setting, but if the reader is not too interested in the specific interest rate model used, she can skip this part and still get all the ideas on impacted term structure modeling.

\bigskip

{\bf Pricing examples.} These are presented in Section \ref{sec:examples}, including pricing impacted Eurodollar Futures with Gaussian short rate models. This section is interesting for readers who wish to see how the model can be effectively applied to price interest rate derivatives under price impact.

\bigskip

{\bf Numerical results.} Section \ref{sec:numerical_results} presents numerical examples of impacted yield curves, showing how the term structure changes once price impact is included. We analyze the interplay between the bond pull-to-par and cross-price impact parameters. This can be a good section for readers who wish to develop some intuition on how price impact may change the shape of the term structure.  

\bigskip

{\bf Optimal execution with impacted bonds.} For the reader wondering how this theory of price impact can be included in classic problems such as optimal execution, we solve the optimal execution problem in a specific setting. This is presented in Section \ref{sec:optexec}. The example is rather technical and the solution involves a system of forward-backward SDEs, showing that optimal execution with term structure models can be technically demanding, and this is recommended only for readers who are interested in optimal execution. 

\bigskip

Concluding, the best way to read the paper is sequentially, skipping the parts that are not of interest and looking more carefully at areas outside the specific expertise of the reader. We tried to provide a guide to what can be skipped and on the different emphasis for different areas of expertise above.

\section{Model setup and main results}\label{sec:main}

\subsection{Impacted market price of risk, impacted risk neutral measure and absence of arbitrage} \label{subsec:def}
We introduce our initial assumption on the interest rate dynamics, assuming it is a one-dimensional short rate model dynamics. This will be relaxed later with a Heath-Jarrow-Morton setting in Section \ref{subsec:HJM}.

Let us fix a maturity $T>0$ and let $(\Omega, \mathcal{F}, (\mathcal{F}_t)_{t \geq 0}, \mathbb{P})$ be a filtered probability space satisfying the usual conditions, on which there is a standard $(\mathcal{F}_t)_{t \in [0,T]}$-Brownian motion $W^\mathbb{P}$. We consider the following dynamics of the short rate under the real world measure $\mathbb{P}$
\begin{equation} \label{r-sde} 
d r(t) = \mu(t,r(t)) dt + \sigma(t,r(t)) d W^\mathbb{P}(t),
\end{equation}
where $\mu(t,r), \sigma(t,r)$ are given real valued functions, assumed to be regular enough to ensure the SDE has a unique strong solution. For example one can assume that both $\mu$ and $
\sigma$ are Lipschitz continuous in the $r$ coordinate, and has at most linear growth in $r$ uniformly in $t\in [0,T]$. 
We moreover assume that $\sigma(t,r(t))$ is $\mathbb{P}$-a.s.  strictly positive for any $t>0$.  

Assume that the dynamics of a zero-coupon bond with maturity at time $T$, under the real world measure, is given by
\begin{equation} \label{class-b}
d P(t,T) = \mu_T(t,r(t)) dt + \sigma_T(t,r(t)) d W^\mathbb{P}(t),
\end{equation}
with $\mu_T,\sigma_T$ depending on the maturity $T$ and regular enough as in \eqref{r-sde}. Typically, one might assume the price process of the $T$-bond to be of the form $P(t,T) = F(t,r(t);T)$ for some function $F$ smooth in three variables. Then, under suitable assumptions (see Assumption 3.2 in Chapter 3.2 of \cite{bjork1997interest}) one can define for any finite maturity $T>0$ the stochastic process
\begin{equation}
\lambda(t) = \frac{\mu_T(t,r(t)) - r(t) P(t,T)}{\sigma_T(t,r(t))},
\label{market_price_of_risk}
\end{equation}
and show that $\lambda$ may depend on $r$ but it does not actually depend on $T$. Such process is called \emph{market price of risk}. Provided the Novikov condition holds, this process can be used to define a change of measure from the real world measure $\mathbb{P}$ to the risk neutral measure $\mathbb{Q}$:
\begin{equation}
\frac{d \mathbb{Q}}{d \mathbb{P}} = \exp \left( \int_0^t \lambda(s) d W^\mathbb{P}(t)- \frac{1}{2} \int_0^t \lambda^2(s) ds\right).
\label{risk_neutral_measure_Q}
\end{equation}
The dynamics of the short rate under $\mathbb{Q}$ becomes
\begin{equation*}
d r(t) = [\mu(t,r(t)) - \lambda(t) \sigma(t,r(t))] dt + \sigma(t,r(t)) d W^\mathbb{Q}(t).
\end{equation*}
The model will be fully specified once the stochastic process $\lambda$ is defined.
Our strategy for establishing a mathematical framework that encompasses both risk neutral pricing and price impact in the context of interest rates derivatives consists, first of all, in specifying the dynamics for an impacted bond with maturity $T$ under the real world measure $\mathbb{P}$. 

We consider a trader with an initial position of $x_T>0$ zero-coupon bonds with maturity $T$. Let $0< \tau \leq T$ denote some finite deterministic time horizon. In an optimal execution problem, the objective of the trader would be to complete her transaction by time $\tau$, starting from the $x_T$ position at time $0$. In this sense, we should avoid confusion between $T$, which is the traded bond maturity, and $\tau$, which is the trading horizon of the $T$-maturity bond. The number of bonds the trader holds at time $t \in [0,\tau]$ is given by
\begin{equation}  \label{inv} 
X_T(t) = x_T- \int_{0}^t v_T(s) ds.
\end{equation} 
where the function $v_T$ denotes the trader's selling rate, which takes negative values in case of a buy strategy. In what follows we assume that $v_T= \{v_T(t)\}_{0\leq t\leq \tau}$ is progressively measurable and has a $\mathbb P$-a.s. bounded derivative (in the $t$-variable), that is, there exists $M>0$ such that 
 \begin{equation} \label{v-bnd} 
 \sup_{0\leq t\leq \tau^{+}} |\partial_{t}  v_{T}(t)|  < M, \quad \mathbb P-\rm{a.s.}, 
\end{equation} 
where $0\leq t\leq \tau$. After the trading stops, we assume that $v_{T}(t)= 0$.
We denote the class of such trading speeds as $\mathcal{A}_{T}$.  
 
The main idea behind the assumption of the differentiability of $v_{T}$ is that the overall impact we add to the zero-coupon bond should affect the drift only (see \eqref{impacted_bond_differential_form}). Moreover due to price impact effect, we allow  bond price which are large than $1$ for some time intervals but we do need to control their upper bound. 

We consider a price impact model with both transient and instantaneous impact, which is a slight generalization of the model which was considered in \cite{neuman2020optimal}. The impacted bond price is therefore given by 
\begin{equation} \label{impacted_bond}
\tilde{P}(t,T) = P(t,T) - l(t,T)  v_T(t) - K(t,T) \Upsilon_T^v(t). 
\end{equation}
Here $\Upsilon_T^v$ represents the transient impact effect and it has the form 
\begin{equation}\label{def:transient_impact}
\Upsilon_T^v(t):= y e^{-\rho t} + \gamma \int_0^t e^{-\rho(t-s)} v_T(s) ds,
\end{equation}
where $y,\rho$ and $\gamma$ are positive constants. The term $v_T(t)$ in \eqref{impacted_bond} represents the instantaneous price impact, where we absorb in the function $l$ any constants that should factor it. 
Lastly, $l$ and $K$ are differentiable functions with respect to both variables $(t,T)$ which take positive values on $0\leq t <T$ and for any $0\leq \tau <T$ we have
\begin{equation}  \label{l-pos}
\inf_{0\leq t \leq \tau} l(t,T) >0.
\end{equation} 
Moreover we assume that   
\begin{equation}  \label{k-assump} 
\begin{aligned} 
&\sup_{0\leq t\leq \tau} |\partial_{t}l(t,T)| <\infty, \quad \lim_{t \rightarrow T} l(t,T) =0,  \\
&    \sup_{0\leq t\leq \tau} |\partial_{t}K(t,T)| <\infty, \quad \lim_{t \rightarrow T} K(t,T) =0.
\end{aligned} 
\end{equation}  
While the assumption on boundedness of the derivatives of functions $K$ and $l$ arise from technical reasons which has similar motivation as the reason for \eqref{v-bnd}, the assumptions on the behaviour at expiration is meant to enforce the boundary condition on the price of the impacted bond at expiration, which is $\tilde{P}(T,T) =1$. Note that $K$ and $l$ are time-dependent versions of the parameters $\lambda,k$ in \cite{neuman2020optimal}. A prominent example of such functions is 
\[
l(t,T) = \kappa \left(1-\frac{t}{T} \right)^\alpha, \quad K(t,T) = \left(1-\frac{t}{T} \right)^\beta, 
\]
for some constants $\alpha , \beta \geq1$ and $\kappa>0$. 

We define for convenience the overall price impact:
\begin{equation}\label{def:overall_price_impact}
I_T(t) := l(t,T) v_T(t) + K(t,T) \Upsilon_T^v(t).
\end{equation}
Then, since $v_{T}$ is in $\mathcal A_{T}$ we can rewrite \eqref{impacted_bond} as follows: 
\begin{equation} \label{impacted_bond_differential_form}
d \tilde{P}(t,T) = d P(t,T) - J_T(t) dt, \quad \tilde P(T,T) = 1,
\end{equation}
with
\begin{equation} \label{impact_density}
\begin{aligned}
J_T(t) & := \partial_t I_T(t) \\
& = \partial_t l(t,T) v_T(t) + l(t,T) \partial_t v_T(t) + \partial_t K(t,T) \Upsilon_T^v(t) + K(t,T) [-\rho \Upsilon_T^v(t) + v_T(t)].
\end{aligned}
\end{equation} 

Our model so far describes how trading a $T$-bond affects its price. Next, we show the existence of an \emph{impacted market price of risk process} which will be a generalization of \eqref{market_price_of_risk}. Using this process we will define an equivalent martingale measure, under which bonds and derivatives prices can be computed. Such a measure will be called an \emph{impacted risk neutral measure}. It is important to remark that, as in the classic case, this change of measure will be unique for all bond maturities. 

Before stating the main theorem of this section, let us first introduce a few important definitions.

\begin{definition}[Impacted portfolio]\label{def:impacted_portfolio}
Let $\hat{T}<+\infty$ be some finite time horizon. An \emph{impacted portfolio} is a $(n+1)-$dimensional, bounded progressively measurable process $\tilde{h} = (\tilde{h}_t)_{t \in [0, \hat T]}$ with $\tilde{h}_t = (\tilde{h}_t^0,\tilde{h}_t^1,\dots,\tilde{h}_t^n)$, where $\tilde{h}_t^i$ represents the number of shares in the impacted bond $\tilde{P}(t,T_i)$ held in the portfolio at time $t$. The value at time $t$ of such a portfolio $\tilde{h}$ is defined as
\begin{equation*}
\tilde{V}(t) \equiv \tilde{V}(t,\tilde{h}) := \sum_{i=0}^n \tilde{h}^i(t) \tilde{P}(t,T_i).
\end{equation*}
\end{definition}

\begin{definition}[Self-financing]\label{def:self_financing}
Let $\hat{T}<+\infty$ be some finite time horizon. and let $\tilde{h}$ be an impacted portfolio as in Definition \ref{def:impacted_portfolio}. We say that $\tilde{h}$ is \emph{self-financing} if its value $\tilde{V}$ is such that
\begin{equation} \label{imp-port} 
d \tilde{V}(t,\tilde{h}) = \sum_{i=0}^n \tilde{h}^i(t) d  \tilde P(t,T_i), \quad \textrm{for all } 0\leq t \leq \hat T. 
\end{equation}
\end{definition}

\begin{definition}[Locally risk free]\label{def:locally_risk_free}
Let $\tilde{h}$ be an impacted portfolio as in Definition \ref{def:impacted_portfolio} and let $\tilde{V}$ be its value. Let also $\alpha=(\alpha_t)_{t \in [0,\hat T]}$ be an adapted process. We say that $\tilde{h}$ is \emph{locally risk-free} if, for almost all $t$,
\begin{equation*}
d \tilde{V}(t) = \alpha(t) \tilde{V}(t) \implies \alpha(t) = r(t),
\end{equation*}
where $r(t)$ is the risk-free interest rate introduced in \eqref{r-sde}.
\end{definition}

Here is the main result of this section.

\begin{theorem}[Impacted market price of risk]\label{thm_impacted_market_price}
Let $\hat{T}<+\infty$ be some finite time horizon and let $\mathbb T:=(0,\hat{T}]$.
Let $J_T$ be the impact density defined in \eqref{impact_density}. Given an impacted portfolio $\tilde{h}$ as in \eqref{imp-port}, we assume that it is self-financing and locally risk-free, as in Definitions \eqref{def:self_financing} and \eqref{def:locally_risk_free}, respectively. Then, there exists a progressively measurable stochastic process $\tilde{\lambda}(t)$ such that 
\begin{equation} \label{def:lambda_tilde}
\tilde{\lambda}(t) = \frac{\mu_{T_i}(t,r(t)) - r(t) \tilde{P}(t,T_i) - J_{T_i}(t)}{\sigma_{T_i}(t,r(t))}, \quad t\geq 0, 
\end{equation}
for each maturity $T_i $, $i=1,..,n$, with $\tilde{\lambda}$  depending on the short rate $r$ but not on $T_i$.
 \end{theorem}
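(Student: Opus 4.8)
The plan is to adapt the classical two-bond replication argument of Björk to the impacted setting, exploiting the fact that, by \eqref{impacted_bond_differential_form}, the entire impact enters the dynamics of $\tilde{P}(t,T)$ only through the finite-variation term $-J_T(t)\,dt$, leaving the diffusion coefficient $\sigma_T$ untouched. This is precisely what the differentiability assumptions \eqref{v-bnd} and \eqref{k-assump} were arranged to guarantee, and it is the structural feature that lets the standard market-price-of-risk machinery go through essentially verbatim, with the drift merely shifted by $J_T$.

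First I would fix any two maturities $T_i,T_j$ from the list and build a self-financing portfolio $\tilde{h}$ holding only the two impacted bonds $\tilde{P}(\cdot,T_i)$ and $\tilde{P}(\cdot,T_j)$. Using \eqref{imp-port} together with \eqref{impacted_bond_differential_form} and \eqref{class-b}, its value satisfies
\[
d\tilde{V}(t) = \big[\tilde{h}^i(\mu_{T_i}-J_{T_i}) + \tilde{h}^j(\mu_{T_j}-J_{T_j})\big]\,dt + \big[\tilde{h}^i\sigma_{T_i} + \tilde{h}^j\sigma_{T_j}\big]\,dW^{\mathbb{P}}(t).
\]
The next step is to choose the holdings so as to kill the Brownian term, i.e. to impose $\tilde{h}^i\sigma_{T_i}+\tilde{h}^j\sigma_{T_j}=0$; since $\sigma$, and hence $\sigma_{T_i}$, does not vanish (inheriting this from the strict positivity of $\sigma$ and non-degeneracy of the pricing function), one may take the nontrivial choice $\tilde{h}^i=\sigma_{T_j}$, $\tilde{h}^j=-\sigma_{T_i}$, which renders the portfolio instantaneously riskless.

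With the diffusion eliminated, the locally risk-free condition of Definition \eqref{def:locally_risk_free} forces the remaining drift to equal $r(t)\tilde{V}(t)=r(t)\big[\tilde{h}^i\tilde{P}(t,T_i)+\tilde{h}^j\tilde{P}(t,T_j)\big]$. Substituting the hedging weights and regrouping the $\mu$- and $J$-terms yields
\[
\sigma_{T_j}\big[\mu_{T_i}-J_{T_i}-r\,\tilde{P}(t,T_i)\big] = \sigma_{T_i}\big[\mu_{T_j}-J_{T_j}-r\,\tilde{P}(t,T_j)\big],
\]
which, after dividing by $\sigma_{T_i}\sigma_{T_j}$, is exactly the statement that the ratio in \eqref{def:lambda_tilde} takes the same value for $T_i$ and $T_j$. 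Since the pair was arbitrary, this common value defines a single process $\tilde{\lambda}(t)$ independent of the maturity, and its progressive measurability follows from that of $r$, $\mu_{T_i}$, $\sigma_{T_i}$, $\tilde{P}$ and $J_{T_i}$ (the last being progressively measurable by \eqref{impact_density} and $v_T\in\mathcal{A}_T$).

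The step I expect to be the main obstacle is the careful justification that the argument is non-degenerate and that impact truly contributes nothing to the diffusion. One must check that the hedging portfolio can be chosen with $\tilde{V}(t)\neq 0$, so that the implication in Definition \eqref{def:locally_risk_free} is not vacuous, and, more importantly, one must verify rigorously that $J_T$ is a genuine $dt$-density with no $dW^{\mathbb{P}}$ component. This rests on differentiating $I_T$ in \eqref{def:overall_price_impact}: the boundedness of $\partial_t v_T$, $\partial_t l$ and $\partial_t K$ in \eqref{v-bnd}--\eqref{k-assump}, together with the absolute continuity of the transient kernel $\Upsilon_T^v$ in \eqref{def:transient_impact}, are what ensure $dI_T=J_T\,dt$ with $J_T$ as in \eqref{impact_density}; any $dW^{\mathbb{P}}$ leakage there would corrupt the diffusion coefficient and destroy the maturity-independence of $\tilde{\lambda}$.
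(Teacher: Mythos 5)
Your proposal is correct and follows essentially the same route as the paper's proof: the classical Bj\"ork two-bond hedging argument, eliminating the Brownian term, invoking the locally risk-free condition to pin the drift at $r(t)\tilde{V}(t)$, and deducing that the ratio $\bigl(\mu_{T_i}-J_{T_i}-r\tilde{P}(\cdot,T_i)\bigr)/\sigma_{T_i}$ is maturity-independent. The only difference is cosmetic---you use the unnormalized holdings $\tilde{h}^i=\sigma_{T_j}$, $\tilde{h}^j=-\sigma_{T_i}$ while the paper works with relative weights $\alpha_{T_i}$ summing to one---and your closing remarks on the non-vacuousness of the locally risk-free implication and on $J_T$ being a pure $dt$-density are sound (indeed the latter fixes a sign/typo slip in the paper's display \eqref{dyn_imp_bonds}).
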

 The proof of Theorem \ref{thm_impacted_market_price} is given in Section \ref{sec:proofs}.

 \begin{remark}[Self-financing in presence of price impact]
In presence of price impact it is of course not obvious that the self-financing condition should hold. Adjusted self-financing conditions have been proposed, for instance, by Carmona and Webster \cite{carmona2013self}. We notice that, in their work, the adjustment consists of two parts: the covariation between the inventory and the price process, and the bid-ask spread. In our work we will assume the inventory is a finite variation process and that the bid ask spread is negligible, thereby obtaining the classic self-financing condition.
\end{remark}

\begin{remark} [Intrinsic price impact] 
From Theorem \ref{thm_impacted_market_price} it follows that \emph{endogenous cross price impact} naturally emerges in our framework. Indeed, once an agent trades a bond with maturity $T_1$, the process $\tilde{\lambda}$ is uniquely determined. Note that $\tilde \lambda$ does not depend on the maturity. For any bond with maturity $T_2 \in \mathbb{T}$, which is not traded, we have $J_{T_2}\equiv 0$ but by \eqref{def:lambda_tilde}, the price $\tilde P(t,T_2)$ will be affected by the trade on the bond with maturity $T_1$. We remark that, by \emph{endogenous}, we mean that the bonds with different maturities $T_1$ and $T_2$ are thought of as belonging to the \emph{same currency curve}. If we were to discuss multiple interest rate curves, then exogenous cross price impact should be taken into account as well.
\end{remark} 

\subsection{Impacted risk-neutral measure}\label{subsec:rnm}
We previously introduced two measures: the real world measure $\mathbb{P}$ and the classic risk neutral measure $\mathbb{Q}$, as defined in \eqref{risk_neutral_measure_Q}. Now we use the result of Theorem \ref{thm_impacted_market_price} to define a third measure, which we call \emph{impacted risk neutral measure} and denote by $\tilde{\mathbb{Q}}$. This is defined as follows:
\begin{equation}   \label{radon_nikodym_der_Q_tilde}
\frac{d \tilde{\mathbb{Q}}}{d \mathbb{P}} = \exp \left\{ \int_0^t \tilde{\lambda}(s) d W^\mathbb{P}(s) - \frac{1}{2} \int_0^t \tilde{\lambda}^2(s) ds \right\}.
\end{equation}
The well posedness of $\tilde{\mathbb{Q}}$ can be checked via the Novikov condition. It might be useful to recall that the usual approach does not consist in determining the conditions on $\mu_T,\sigma_T$ under which the Novikov condition is fulfilled. Rather, one chooses a specific short rate model to begin with. Then, one can specify the market price of risk process, exploiting the fact that it depends on $t$ and $r$, but not on $T$. For example, in the case of Vasicek model, the market price of risk is assumed to be $\lambda(t) = \lambda r(t)$, for some constant $\lambda$. At this point, Novikov condition can be checked much more easily. Since we proved that $\tilde{\lambda}$ depends on $t$ and $r$ only, we can assume the two processes to have the same structure and follow the same idea. In the case of Vasicek model, for example, we can assume $\tilde{\lambda}(t) = \tilde{\lambda} r(t)$, for some constant $\tilde{\lambda}$ incorporating the impact. Consequently, determining the existence and well-posedness of $\tilde{\mathbb{Q}}$ is fundamentally equivalent to determining the existence and well-posedness of $\mathbb{Q}$.

The Girsanov change of measure from the classic risk neutral measure to the impacted one is given by
\begin{equation*}
\frac{d \tilde{\mathbb{Q}}}{d \mathbb{Q}} = \frac{d \tilde{\mathbb{Q}}}{d \mathbb{P}} \frac{d \mathbb{P}}{d \mathbb{Q}}.
\end{equation*}
with
\begin{equation*}
\frac{d \mathbb{P}}{d \mathbb{Q}} = \exp \left\{- \int_0^t \lambda(s) d W^\mathbb{Q}(s) - \frac{1}{2} \int_0^t \lambda^2(s) ds \right\}.
\end{equation*}
where $\lambda$ was defined in \eqref{market_price_of_risk}.
Hence,
\begin{equation*}
\frac{d \tilde{\mathbb{Q}}}{d \mathbb{Q}} = \exp \left\{ \int_0^t \tilde{\lambda}(s) d W^\mathbb{P}(s) - \frac{1}{2} \int_0^t \tilde{\lambda}^2(s)ds + \int_0^t \lambda(s) d W^\mathbb{Q}(s) - \frac{1}{2} \int_0^t \lambda^2(s)ds \right\}.
\end{equation*}
Since $W^\mathbb{P}(t) := W^\mathbb{Q}(t) + \int_0^t \lambda(s) ds$ is a Brownian motion under the measure $\mathbb{P}$, we have
\begin{equation*}
\frac{d \tilde{\mathbb{Q}}}{d \mathbb{Q}} = \exp \left\{ \int_0^t (\tilde{\lambda}(s) - \lambda(s)) d W^\mathbb{Q}(s) - \frac{1}{2} \int_0^t \left(\lambda^2(s) + \tilde{\lambda}^2(s) - 2 \lambda(s) \tilde{\lambda}(s) \right) ds \right\}.
\end{equation*}
In other words,
\begin{equation*}
W^{\tilde{\mathbb{Q}}}(t) := W^\mathbb{Q}(t) - \int_0^t (\tilde{\lambda}(s) - \lambda(s)) ds,
\end{equation*}
is a Brownian motion under the measure $\tilde{\mathbb{Q}}$. It is then straightforward to notice that the impacted zero-coupon bond under the impacted measure $\tilde{\mathbb{Q}}$ will be described by the dynamics
\begin{equation} \label{qt-p}
d \tilde{P}(t,T) = r(t) \tilde{P}(t,T) dt + \sigma_T(t,r(t)) dW^{\tilde{\mathbb{Q}}}(t).
\end{equation}
We further remark that, in principle, we could start by defining a new measure $\tilde{\mathbb{P}}$ to get rid of the additional drift due to impact. Just rewrite the dynamics of the impacted zero-coupon bond as
\begin{equation*}
d \tilde{P}(t,T) = \mu_T(t,r(t)) dt + \sigma_T(t,r(t)) \left( \frac{J_T(t)}{\sigma_T(t,r(t))} dt  + d W^{\mathbb{P}}(t) \right).
\end{equation*}
This suggests to define
\begin{equation*}
\frac{d \tilde{\mathbb{P}}}{d \mathbb{P}} = \exp \left\{ \int_0^t \frac{J_T(s)}{\sigma_T(s,r(s))} d W^\mathbb{P}(s) - \frac{1}{2} \int_0^t \left(\frac{J_T(s)}{\sigma_T(s,r(s))} \right)^2 ds \right\}.
\end{equation*}
The impacted bond under this measure would follow the dynamics
\begin{equation}
d \tilde{P}(t,T) = \mu_T(t,r(t)) dt + \sigma_T(t,r(t)) dW^{\tilde{\mathbb{P}}}(t).
\label{impacted_bond_under_impacted_P}
\end{equation}
At this point, $\tilde{\mathbb{Q}}$ can be defined from $\tilde{\mathbb{P}}$ by using the classic market price of risk $\lambda(t)$. In other words,
\begin{equation*}
\frac{d \tilde{\mathbb{Q}}}{d \tilde{\mathbb{P}}} = \frac{d \tilde{\mathbb{Q}}}{d \mathbb{Q}} \frac{d \mathbb{Q}}{d \tilde{\mathbb{P}}} = \frac{d \tilde{\mathbb{P}}}{d \mathbb{P}} \frac{d \mathbb{Q}}{d \tilde{\mathbb{P}}} = \frac{d \mathbb{Q}}{d \mathbb{P}}.
\end{equation*}
Putting everything together, we have the following commuting diagram
\[
\begin{tikzcd}
\mathbb{P} \arrow{r}{\lambda}  \arrow[swap]{dr}{\tilde{\lambda}} \arrow[swap]{d}{\tilde{\lambda}-\lambda} & \mathbb{Q} \arrow{d}{\tilde{\lambda} - \lambda} \\
\tilde{\mathbb{P}} \arrow{r}{\lambda} & \tilde{\mathbb{Q}}
\end{tikzcd}
\]
By \eqref{qt-p} and usual arguments it follows that discounted impacted traded prices, that is $\{\tilde P(\cdot,T)/B(t)\}_{t\geq 0}$, are martingales for any $0\leq T \leq \hat T$ under $\tilde{\mathbb{Q}}$. Here $B$ is the usual money market account at time $t$ given by 
\begin{equation}\label{def:bank_account}
B(t) = e^{\int_0^t r(s) ds}.
\end{equation}
We therefore have 
\begin{equation} \label{discounted_impacted_ZC_bond}
\frac{\tilde{P}(t,T)}{B(t)} = \mathbb{E}^{\tilde{\mathbb{Q}}} \left[ \frac{\tilde{P}(T,T)}{B(T)} \Bigg| \mathcal{F}_t \right].
\end{equation}
Multiplying both sides by $B(t)$ and exploiting the boundary condition $\tilde{P}(T,T)=1$, we obtain the fundamental equation
\begin{equation}
\tilde{P}(t,T) = \mathbb{E}^{\tilde{\mathbb{Q}}} \left[ e^{-\int_t^T r(s) ds} \big | \mathcal{F}_t \right].
\label{impacted_ZCB_expectation}
\end{equation}

\begin{remark}[Interpretation impacted real world measure]
From \eqref{impacted_bond_under_impacted_P} we observe that, financially speaking, under the impacted real world measure $\tilde{\mathbb{P}}$, impacted bond dynamics $\tilde P(\cdot,T)$ has the same dynamics as the classic bond (without price impact modeling) in \eqref{class-b} under $\mathbb{P}$. In particular we have that $\tilde P(T,T)=1$ for all maturities $T$.
\end{remark} 

\subsection{Applications to pricing of interest rate derivatives} \label{subsec:pricing}

We start this section by remarking that the notion of arbitrage we use in our work is the classic one (see e.g. Harrison and Kreps \cite{harrison1979martingales} or Harrison and Pliska \cite{harrison1981martingales}), adjusted with impacted portfolios.
\begin{definition}[Arbitrage portfolio]
An \emph{arbitrage portfolio} is an impacted self-financing portfolio $\tilde{h}$ such that its corresponding value process $\tilde{V}$ satisfies
\begin{enumerate}
\item $\tilde{V}(0) = 0$ . 
\item $\tilde{V}(T)  \geq 0$ $\mathbb{P}$-a.s.
\item $\mathbb{P}(\tilde{V}(T)>0)>0$
\end{enumerate}
\end{definition}
Using this definition of arbitrage we show that the first fundamental theorem of asset pricing holds in our setting.

\begin{theorem}[Absence of arbitrage]\label{thm_absence_arbitrage}
Assume that there exists an impacted equivalent martingale measure $\tilde{\mathbb{Q}}$ as in \eqref{radon_nikodym_der_Q_tilde}. Then, our impacted model is arbitrage  free.
\end{theorem}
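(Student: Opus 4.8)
The plan is to establish the ``existence of an equivalent martingale measure implies no arbitrage'' direction of the first fundamental theorem, adapted to impacted portfolios, by contradiction. I would suppose that $\tilde{h}$ is an arbitrage portfolio in the sense of the preceding definition, so that $\tilde{h}$ is self-financing in the sense of Definition \ref{def:self_financing} with $\tilde{V}(0)=0$, $\tilde{V}(T)\geq 0$ $\mathbb{P}$-a.s., and $\mathbb{P}(\tilde{V}(T)>0)>0$, and then derive that the discounted value process is a $\tilde{\mathbb{Q}}$-martingale issuing from $0$, which forces $\tilde{V}(T)=0$ and contradicts the third condition.

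First I would pass to discounted quantities. Writing $\tilde{V}^*(t):=\tilde{V}(t)/B(t)$ and $\tilde{P}^*(t,T_i):=\tilde{P}(t,T_i)/B(t)$, with $B$ as in \eqref{def:bank_account}, I would exploit that $B$ has finite variation with $dB = r B\, dt$, so that Itô's product rule carries no covariation correction. Combining the self-financing identity \eqref{imp-port}, namely $d\tilde{V}=\sum_i \tilde{h}^i\, d\tilde{P}(t,T_i)$, with the value relation $\tilde{V}=\sum_i \tilde{h}^i \tilde{P}(t,T_i)$ from Definition \ref{def:impacted_portfolio}, a short computation yields the discounted self-financing relation
\[
d\tilde{V}^*(t)=\sum_{i=0}^n \tilde{h}^i(t)\, d\tilde{P}^*(t,T_i).
\]

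Next I would invoke the martingale property of the discounted impacted bonds. From \eqref{qt-p} we have $d\tilde{P}(t,T_i)=r(t)\tilde{P}(t,T_i)\,dt+\sigma_{T_i}(t,r(t))\,dW^{\tilde{\mathbb{Q}}}(t)$, whence $d\tilde{P}^*(t,T_i)=(\sigma_{T_i}/B)\,dW^{\tilde{\mathbb{Q}}}(t)$, so each $\tilde{P}^*(\cdot,T_i)$ is a $\tilde{\mathbb{Q}}$-local martingale and, under the integrability already used to establish \eqref{discounted_impacted_ZC_bond}, a true $\tilde{\mathbb{Q}}$-martingale. Consequently $\tilde{V}^*$ is a stochastic integral of the bounded, progressively measurable process $\tilde{h}$ against these martingales, hence a $\tilde{\mathbb{Q}}$-local martingale. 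The hard part will be upgrading this to a genuine martingale: I expect the main obstacle to be the $L^2$ verification, precisely the place where the boundedness of $\tilde{h}$ built into Definition \ref{def:impacted_portfolio} together with the regularity of the volatilities $\sigma_{T_i}$ must be used to control $\sum_i \tilde{h}^i \sigma_{T_i}/B$ on $[0,\hat T]$ under $\tilde{\mathbb{Q}}$, thereby ruling out doubling-type strategies and ensuring $\tilde{V}^*$ is a true martingale.

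Finally I would close the contradiction. The martingale property together with $\tilde{V}(0)=0$ gives $\mathbb{E}^{\tilde{\mathbb{Q}}}[\tilde{V}^*(T)]=\tilde{V}^*(0)=0$. Since $\tilde{\mathbb{Q}}\sim\mathbb{P}$, the almost-sure sign conditions transfer between the two measures, so $\tilde{V}(T)\geq 0$ $\mathbb{P}$-a.s. yields $\tilde{V}^*(T)\geq 0$ $\tilde{\mathbb{Q}}$-a.s. A non-negative random variable with vanishing $\tilde{\mathbb{Q}}$-expectation is zero $\tilde{\mathbb{Q}}$-a.s., hence $\tilde{V}(T)=0$ $\tilde{\mathbb{Q}}$-a.s., and by equivalence $\mathbb{P}(\tilde{V}(T)>0)=0$, contradicting the third defining property of an arbitrage portfolio. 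Therefore no arbitrage portfolio can exist, and the impacted model is arbitrage free.
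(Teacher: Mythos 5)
Your proposal is correct and follows essentially the same route as the paper: both argue by contradiction, using that the discounted value process $\tilde{V}/B$ of a self-financing impacted portfolio is a $\tilde{\mathbb{Q}}$-martingale (via the martingale property of the discounted impacted bonds from \eqref{qt-p}) together with the equivalence $\tilde{\mathbb{Q}}\sim\mathbb{P}$ to contradict the arbitrage conditions. The only difference is cosmetic: the paper asserts the martingale property of $\tilde{V}/B$ directly and derives a strict inequality $0=\mathbb{E}^{\tilde{\mathbb{Q}}}[\tilde{V}(T)/B(T)]>0$, whereas you spell out the discounted self-financing identity and the local-to-true martingale upgrade, then conclude $\tilde{V}(T)=0$ a.s.; these endgames are logically equivalent.
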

The proof of Theorem \ref{thm_absence_arbitrage} is given in Section \ref{sec:proofs}.

A key consequence Theorem \ref{thm_absence_arbitrage} is that our term structure model with price impact is indeed free of arbitrage. This allows to price interest rate derivatives by taking the expectation of discounted payoffs under the impacted risk neutral measure $\tilde{\mathbb{Q}}$. As a benchmark example, we consider the price of an impacted Eurodollar future. In the classic context, a Eurodollar-futures contract provides its owner with the payoff (see Chapter 13.12 of \cite{brigo2007interest})
\begin{equation*}
N (1-L(S,T)),
\end{equation*}
where $N$ denotes the notional and $L(S,T)$ is the LIBOR rate, defined as (see Chapter 1 of \cite{brigo2007interest}, Definition 1.2.4) 
\begin{equation}
L(S,T) := \frac{1 - P(S,T)}{\tau(S,T) P(S,T)},
\label{libor_rate}
\end{equation}
with $\tau(S,T)$ denoting the year fraction between $S$ and $T$. Motivated by this, we introduce the impacted counterpart of the LIBOR rate in \eqref{libor_rate}, i.e.
\begin{equation} \label{l-t}
\tilde{L}(S,T) := \frac{1 - \tilde{P}(S,T)}{\tau(S,T) \tilde{P}(S,T)},
\end{equation}
with $\tau$ defined as above and the impacted zero-coupon bond in place of the classic one. This new rate $\tilde{L}$ is interpreted as the simply-compounded rate which is consistent with the impacted bond. This corresponds to the classic LIBOR rate which is the constant rate at which one needs to invest $P(t,T)$ units of currency at time $t$ in order to get an amount of one unit of currency at maturity $T$. Then, the fair price of an impacted Eurodollar future at time $t$ is (see \cite{brigo2007interest}, Chapter 13, eq. (13.19))
\begin{equation}\label{fair_price_impacted_Eurodollar}
\begin{split}
\tilde{C}_t & = \mathbb{E}_t^{\tilde{\mathbb{Q}}} [N (1-\tilde{L}(S,T))], \\
& = N \left(1 + \frac{1}{\tau(S,T)} - \frac{1}{\tau(S,T)} \mathbb{E}_t^{\tilde{\mathbb{Q}}} \left[\frac{1}{\tilde{P}(S,T)} \right] \right),
\end{split}
\end{equation}
where the discounting was left out due to continuous rebalancing (see again Chapter 13.12 of \cite{brigo2007interest}). We will demonstrate in Section \ref{sec:examples} how such expectation can be computed analytically provided the short rate model is simple enough as in Vasicek and Hull-White models. 

\begin{remark}[Linear and nonlinear pricing equations]
Our success in retaining analytical tractability and linearity in the pricing equation may look surprising at first. In the context of equities, pricing derivatives in presence of price impact typically leads to nonlinear PDEs. This, in turn, motivated the study of super-replicating strategies and the so-called gamma constrained strategies. Several works provide also necessary and sufficient conditions ensuring the parabolicity of the pricing equation, hence the existence and uniqueness of a self-financing, perfectly replicating strategy. We refer, for example, to Abergel and Loeper \cite{abergel2013pricing}, Bourchard, Loeper et al. \cite{bouchard2016almost,bouchard2017hedging} and Loeper \cite{loeper2018option}. The point we would like to stress here is that the nonlinearity of the pricing equation is a consequence of the trading strategy having a diffusion term, or a consequence of the presence of transaction costs. In other words, under the assumption that trading strategies have bounded variation and no transaction costs are present, the pricing PDE becomes linear again. Hence, our work is actually in agreement to what can be found in the context of equities.
\end{remark}

\subsection{Cross price impact and impacted yield curve} \label{subsec:yield}
In this section we discuss how trading a bond $P(t,T)$ impacts the yield curve. For the sake of analytical tractability, we will consider affine short-rate models, that is, those models where bond prices are of the form
\begin{equation}
P(t,T) = A(t,T) e^{-B(t,T) r(t)}, \quad 0\leq t \leq T, 
\label{affine:bond_price}
\end{equation}
for some deterministic, smooth functions $A$ and $B$ and $r$ is given by \eqref{r-sde}. The remarkable property of these models is that they can be completely characterized as in the following theorem (see, e.g., Filipovic \cite{filipovic2009term}, Section 5.3, Brigo and Mercurio \cite{brigo2007interest}, Section 3.2.4, Bjork \cite{bjork1997interest} Section 3.4 and references therein). 

\begin{lemma}[Characterization affine short-rate models]
The short rate model \eqref{r-sde} is affine if and only if there exist deterministic, continuous functions $a,\alpha,b,\beta$ such that the diffusion and the drift terms in \eqref{r-sde} are of the form
\begin{align*}
\begin{split}
\sigma^2(t,r) & = a(t) + \alpha(t) r, \\
\mu(t,r) & = b(t) + \beta(t) r,
\end{split}
\end{align*}
and the functions $A,B$ satisfy the following system of ODEs
\begin{align*}
\begin{split}
- \frac{\partial}{\partial t}  \ln A(t,T) & = \frac{1}{2} a(t) B^2(t,T) - b(t) B(t,T), \ \ \ A(T,T) = 1, \\
\frac{\partial}{\partial t} B(t,T) & = \frac{1}{2} \alpha(t) B^2(t,T) - \beta(t) B(t,T) - 1, \ \ \ B(T,T) = 0, \\
\end{split}
\end{align*}
for all $t \leq T$.
\end{lemma}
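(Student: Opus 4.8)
The plan is to reduce the statement to the \emph{term structure equation} for bond prices together with the exponential-affine ansatz, and to prove the two implications separately. First I would recall that, writing $P(t,T)=F(t,r(t);T)$ and using the representation $P(t,T)=\mathbb{E}^{\mathbb{Q}}[e^{-\int_t^T r(s)\,ds}\mid\mathcal{F}_t]$ (the unimpacted analogue of \eqref{impacted_ZCB_expectation}), the Feynman--Kac theorem gives that $F$ solves
\[
\partial_t F + \mu(t,r)\,\partial_r F + \tfrac12\sigma^2(t,r)\,\partial_{rr}F - rF = 0,\qquad F(T,r;T)=1,
\]
where the drift entering here is the one under the pricing measure; to match the lemma's notation I would identify it with $\mu$ (equivalently, absorb the market price of risk $\lambda$ from \eqref{market_price_of_risk} into this coefficient).

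For the implication ``affine coefficients $\Rightarrow$ affine bond prices'', I would assume $\sigma^2=a(t)+\alpha(t)r$ and $\mu=b(t)+\beta(t)r$, substitute $F=A(t,T)e^{-B(t,T)r}$, and compute $\partial_t F=(\partial_t A - A\,\partial_t B\,r)e^{-Br}$, $\partial_r F=-AB e^{-Br}$, $\partial_{rr}F=AB^2e^{-Br}$. Dividing the PDE by $F$ and grouping the terms constant in $r$ and those linear in $r$ yields
\[
\Big(\tfrac{\partial_t A}{A} - bB + \tfrac12 aB^2\Big) + \Big(-\partial_t B - \beta B + \tfrac12\alpha B^2 - 1\Big)r = 0 .
\]
Since this must hold for all $r$, each bracket vanishes, giving precisely the two ODEs of the statement, while $F(T,r;T)=1$ for every $r$ forces $B(T,T)=0$ and $A(T,T)=1$. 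Conversely, defining $B$ as the solution of the Riccati ODE and $A$ by quadrature, I would check that $Ae^{-Br}$ solves the PDE with the correct terminal data; uniqueness of the Feynman--Kac solution then identifies it with $P(t,T)$, so the model is affine.

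For the converse ``affine bond prices $\Rightarrow$ affine coefficients'', I would plug $F=Ae^{-Br}$ into the PDE without assuming anything on $\mu,\sigma^2$, obtaining for each maturity $T$
\[
\tfrac12\sigma^2(t,r)\,B(t,T)^2 - \mu(t,r)\,B(t,T) = \big(\partial_t B(t,T)+1\big)r - \partial_t\ln A(t,T),
\]
whose right-hand side is affine in $r$ with coefficients depending on $t$ and $T$ only. The key step is to regard $\tfrac12\sigma^2$ and $\mu$ as two unknowns and evaluate this identity at two distinct maturities $T_1,T_2$; the resulting $2\times2$ linear system has coefficient determinant $B(t,T_1)B(t,T_2)\big(B(t,T_2)-B(t,T_1)\big)$, and when this is nonzero we can solve for $\sigma^2$ and $\mu$ as linear combinations, with $t$-only coefficients, of the two affine-in-$r$ right-hand sides. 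Hence $\sigma^2$ and $\mu$ are themselves affine in $r$, and re-separating constant and linear terms as in the first direction recovers the ODE system.

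The main obstacle is the non-degeneracy required in the converse: for each fixed $t$ I must produce two maturities with $B(t,T_1)B(t,T_2)\big(B(t,T_2)-B(t,T_1)\big)\neq 0$, i.e. $B(t,\cdot)$ neither identically zero nor constant in $T$. This is where the hypothesis that the term structure genuinely depends on maturity enters, and it is the only place the argument can fail (the degenerate case $B\equiv0$ corresponds to a deterministic short rate). A secondary point I would flag explicitly is the measure under which the PDE is written: Feynman--Kac uses the pricing-measure drift $\mu-\lambda\sigma$, so matching the lemma's plain $\mu$ presupposes that this is the relevant coefficient, as is standard when \eqref{r-sde} is posited directly under the martingale measure. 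The remaining work (solvability and regularity of the Riccati ODE for $B$ and the integral for $A$) is routine and I would dispatch it with standard ODE theory.
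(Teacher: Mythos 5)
The paper offers no proof of this lemma at all, deferring to the cited references (Filipovic \S 5.3, Brigo--Mercurio \S 3.2.4, Bj\"ork \S 3.4), and your argument is exactly the standard proof found there: the term-structure PDE with the exponential-affine ansatz and separation of the constant and linear parts in $r$ for sufficiency, and the two-maturity $2\times 2$ linear system with determinant $B(t,T_1)B(t,T_2)\big(B(t,T_2)-B(t,T_1)\big)$ for necessity. Both caveats you flag are the right ones and are resolved as you indicate: the lemma implicitly reads \eqref{r-sde} under the pricing measure (equivalently, absorbs $\lambda$ into the drift), and the non-degeneracy of $B(t,\cdot)$ needed in the converse follows from $B(T,T)=0$ together with $\partial_T B(t,T)\vert_{T=t}=1$, which is forced by $P(t,t)=1$ and recovery of the short rate $f(t,t)=r(t)$, consistent with the paper's standing assumption that $\sigma$ is strictly positive.
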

As explained in \cite{filipovic2009term}, the functions $a,\alpha,b,\beta$ can be further specified by observing that any non-degenerate short rate affine model, that is a model with $\sigma(t,r) \ne 0$ for all $t>0$, can be transformed, by means of an affine transformation, in two cases only, depending on whether the state space of the short rate $r$ is the whole real line $\mathbb{R}$ or only the positive part $\mathbb{R}_+$. In the first case, it must hold $\alpha(t)=0$ and $a(t) \geq 0$, with $b,\beta$ arbitrary. In the second case, it must hold $a(t)=0, \alpha(t), b(t) \geq 0$ and $\beta$ arbitrary.

Let $\hat{T}<+\infty$ be some finite time horizon. The yield curve at a pre-trading time $t_0$ (i.e. before price impact effects kick in) according to classic theory of interest rates is defined by 
\begin{equation}
Y(t,T) := P(t,T)^{-1/T} - 1, \quad 0\leq t \leq t_0,
\label{def:classic_yield}
\end{equation}
for all maturities $0 \leq T \leq \hat T$. Next, we consider the impacted bond dynamics 
\begin{equation*}
d \tilde{P}(t,T) = d P(t,T) - J_T(t) dt, 
\end{equation*}
where $J_T$ was defined in \eqref{impact_density}. Recall that the dynamics of $r(t)$ is given in \eqref{r-sde}. Applying Ito's formula on $P(t,T)$ in  \eqref{affine:bond_price} we get
\begin{multline*}
d P(t,T) = e^{-B(t,T) r(t)} \bigg[\frac{\partial A}{\partial t} - A(t,T) \frac{\partial B}{\partial t} r(t) + \frac{1}{2} A(t,T) B^2(t,T) \sigma^2(t,r(t)) + \\
- A(t,T) B(t,T) \mu(t,r(t)) \bigg] dt - \sigma(t,r(t)) B(t,T) A(t,T) e^{-B(t,T) r(t)} d W^{\mathbb{P}}(t). \\
\end{multline*}
From this equation, we readily extract the drift and the diffusion of the zero-coupon bond with maturity $T$: 
\begin{align} \label{affine:drift_vol_zero_coupon_bond}
\begin{split}
\mu_T(t,r(t)) &:= e^{-B(t,T) r(t)} \bigg[\frac{\partial A}{\partial t} - A(t,T) \frac{\partial B}{\partial t} r(t) + \frac{1}{2} A(t,T) B^2(t,T) \sigma^2(t,r(t))  \\
& \quad - A(t,T) B(t,T) \mu(t,r(t)) \bigg], \\
\sigma_T(t,r(t)) & := - \sigma(t,r) A(t,T) B(t,T) e^{-B(t,T) r(t)}.
\end{split}
\end{align}
Next, we consider the effect of an agent trading on the bond with maturity $T$ on a bond which is not traded by the agent with maturity $S$. We call this effect the \emph{endogenous cross-impact} on the bond with maturity $S$. Recall that in this case the dynamics of the $S$-bond is given by
\begin{equation} \label{sde:impacted_bond_S}
d \tilde{P}(t,S) = \mu_S(t,r(t)) dt + \sigma_S(t,r(t)) d W^{\mathbb{P}}(t), 
\end{equation}
where the coefficients $\mu_S$ and $\sigma_S$ are given by analogous formulas to \eqref{affine:drift_vol_zero_coupon_bond}. Since we are trading the $T$-bond only, $J_S$ in \eqref{impact_density} will be identically equal to zero. Hence, the definition of the impacted market price of risk \eqref{def:lambda_tilde} implies the following relationship
\begin{equation*}
\frac{\mu_T(t,r(t)) - r(t) \tilde{P}(t,T) - J_T(t)}{\sigma_T(t,r(t))} = \frac{\mu_S(t,r(t)) - r(t) \tilde{P}(t,S)}{\sigma_S(t,r(t))}.
\end{equation*}
This equation tells us how the drift the $S$-bond has to change in order to avoid arbitrage. That is, this equation describes the \emph{cross-price impact}. Specifically we have
\begin{equation*}
\mu_S(t,r(t)) = \frac{\sigma_S(t,r(t))}{\sigma_T(t,r(t))} \left[\mu_T(t,r(t)) - r(t) \tilde{P}(t,T) - J_T(t)\right] + r(t) \tilde{P}(t,S).
\end{equation*}
Substituting this drift in \eqref{sde:impacted_bond_S} we get 
\begin{align}\label{sde:cross_impacted_bond_S}
\begin{split}
d \tilde{P}(t,S) &= r(t) \tilde{P}(t,S) dt + \frac{\sigma_S(t,r(t))}{\sigma_T(t,r(t))} \left[\mu_T(t,r(t)) - r(t) \tilde{P}(t,T) - J_T(t) \right] dt \\
& \quad + \sigma_S(t,r(t)) d W^{\mathbb{P}}(t). \\
\end{split}
\end{align}
Finally, we define the impacted yield curve for all $t_0 \leq T \leq \hat T$ as follows: 
\begin{equation} \label{def:impacted_yield}
\tilde{Y}(t,T) := \tilde{P}(t, T)^{-1/T} - 1.
\end{equation}

\begin{remark}[Cross impacted bonds at maturity]\label{rem:cross_impacted_bonds_at_maturity}
We have shown in \eqref{impacted_bond_differential_form} that according to our model $\tilde P(T,T) =1$. However, we should also ensure that all cross-impacted bonds with maturity $S \not = T$ reach value $1$ at their  maturities. This of course, would make the model much more involved and we may lose tractability. 
\end{remark}

\subsection{Coupon bonds}\label{subsec:coupon}
It is worth recalling that the zero-coupon bond $P(t,T)$ is rarely traded. In practice, its price is derived using some bootstrapping procedure applied, for instance, to coupon bonds. In the classic theory, coupon bonds are defined as
\begin{equation*}
B(t,T) = \sum_{i=1}^n c_i P(t,T_i) + N P(t,T_n),
\end{equation*}
where $N$ denotes the reimbursement notional, $(c_i,T_i)_{i=1}^n$ are the coupons and the maturities at which the coupons are paid, respectively. In order to determine an expression for the impacted coupon bond, we start from its cash flow
\begin{equation*}
C(t) := \sum_{i=1}^n c_i D(t,T_i) + N D(t,T_n),
\end{equation*}
where $D(t,T)$ is the stochastic discount factor defined by
\begin{equation*}
D(t,T) := e^{- \int_t^T r(s) ds},
\end{equation*}
where $r$ is given by \eqref{r-sde}.
Then, we define the impacted coupon bond as the expectation of this cash flow under the impacted risk neutral measure $\tilde{\mathbb{Q}}$ (see \eqref{radon_nikodym_der_Q_tilde}):
\begin{equation*}
\tilde{B}(t,T) := \mathbb{E}^{\tilde{\mathbb{Q}}} \left[C(t) \right].
\end{equation*}
Substituting the expression of $C$ immediately yields
\begin{align}\label{impacted_coupon_bond_linear_combination}
\begin{split}
\tilde{B}(t,T) & = \sum_{i=1}^n c_i \mathbb{E}^{\tilde{\mathbb{Q}}} \left[ D(t,T_i) \right] + N \mathbb{E}^{\tilde{\mathbb{Q}}} \left[ D(t,T_n) \right] \\
& = \sum_{i=1}^n c_i \tilde{P}(t,T_i) + N \tilde{P}(t,T_n),
\end{split}
\end{align}
where $\tilde{P}(\cdot, T_i)$ is the (directly) impacted price of a zero-coupon bond as defined in \eqref{impacted_bond}. Note that \eqref{impacted_coupon_bond_linear_combination} gives the price of the impacted coupon bond in terms of impacted zero-coupon bonds. Since zero-coupon bonds are not always traded, we would like to get a direct pricing formula for impacted coupon bonds. Let $\{v_{T_i}\}_{i=1}^n$ be admissible trading speeds on zero-coupon bonds with maturities $\{T_i\}_{i=1}^n$ as defined in Section \ref{sec:main}, that is $v_{T_i} \in \mathcal A_{T_i}$ for any $i=1,...n$.
From \eqref{impacted_bond} and \eqref{impacted_coupon_bond_linear_combination} we get
\begin{subequations}
\begin{align*}
\tilde{B}(t,T) & = \sum_{i=1}^n c_i \tilde{P}(t,T_i) + N \tilde{P}(t,T_n) \\
& = B(t,T) - \sum_{i=1}^n c_i l(t,T_i) v_{T_i}(t) - N l(t,T_n) v_{T_n}(t)  \\
& \quad - \sum_{i=1}^n c_i K(t,T_i) y e^{-\rho t} - N K(t,T_n) ye^{-\rho t} \\
& \quad - \gamma \int_0^t e^{-\rho (t-s)} \left(\sum_{i=1}^n c_i K(t,T_i) v_{T_i}(s) + N K(t,T_n) v_{T_n}(s) \right) ds.
\end{align*}
\end{subequations}
Let us now assume that $l=\kappa K$ at all times and for all maturities, where $\kappa>0$ is a constant. Then, the impacted coupon bond dynamics can be written as
\begin{align}
\begin{split}
\tilde{B}(t,T) & = B(t,T) - y e^{-\rho t} \left[\sum_{i=1}^n c_i K(t,T_i) + N K(t,T_n) \right] \\
& \quad - \int_0^t e^{-\rho (t-s)} \kappa\delta(s-t) \left[\sum_{i=1}^n c_i K(t,T_i) v_{T_i}(s) + N K(t,T_n) v_{T_n}(s) \right] ds \\
& \quad - \gamma \int_0^t e^{-\rho (t-s)} \left[\sum_{i=1}^n c_i K(t,T_i) v_{T_i}(s) + N K(t,T_n) v_{T_n}(s) \right] ds,
\end{split}
\end{align}
where $\delta$ denotes the Dirac delta. Notice that under this assumption the impacted zero-coupon bond dynamics defined in \eqref{impacted_bond} boils down to
\begin{equation}\label{impacted_bond_simplified}
\tilde{P}(t,T) = P(t,T) - K(t,T) \left[y e^{-\rho t} + \int_0^t e^{-\rho (t-s)} v_T(s) \left(\gamma + \kappa\delta(s-t)\right) ds \right].
\end{equation}
This suggest we can define
\begin{equation*}
K^B(t,T) := \sum_{i=1}^n c_i K(t,T_i) + N K(t,T_n).
\end{equation*}
and the trading speed relative to the coupon bond as
\begin{equation}
v^B(t,s) := \frac{1}{K^B(t,T)} \left[\sum_{i=1}^n c_i K(t,T_i) v_{T_i}(s) + N K(t,T_n) v_{T_n}(s) \right].
\label{trading_speed_coupon_bond}
\end{equation}
Therefore, we obtain the following price impact model for the coupon bond: 
\begin{equation}\label{impacted_coupon_bond_simplified}
\tilde{B}(t,T) = B(t,T) - K^B(t,T) \left[y e^{-\rho t} + \int_0^t e^{-\rho (t-s)} v^B(t,s) \left(\gamma +  \kappa \delta(s-t)\right) ds \right].
\end{equation}
Interestingly, under the simplifying assumption that the functions $l$ and $K$ are equal up to some constant, we observe that the impacted zero-coupon bond $\tilde{P}(t,T)$ in \eqref{impacted_bond_simplified} and the impacted coupon bond $\tilde{B}(t,T)$ in \eqref{impacted_coupon_bond_simplified} are described by the same kind of dynamics.

This is particularly useful because, provided enough data on traded coupon bonds are available, one might attempt to use \eqref{trading_speed_coupon_bond} and \eqref{impacted_coupon_bond_simplified} to bootstrap the trading speeds $v_{T_i}$ relative to the zero-coupon bonds. Using the price impact model \eqref{impacted_bond}, it would be then possible to price impacted zero-coupon bonds consistently with market data. Finally, using these impacted zero-coupon bonds as building blocks, it would be possible to price, consistently with market data, more complicated and less liquid impacted interest rate derivatives, as discussed in Section \ref{subsec:pricing}.  

\subsection{HJM framework}\label{subsec:HJM}
In this section we turn our discussion to incorporating price impact into the Heath, Jarrow and Morton framework \cite{heath1992bond}, in order to model the forward curve. Notice that this approach, although it may look different, has some common aspects to the framework developed in Section \ref{subsec:def}. Namely, we start by adding artificially a price impact term to the forward rate dynamics. This corresponds to adding price impact to zero-coupon bonds in Section \ref{subsec:def}. The important difference is that, here, we are creating an impacted interest rate, which was not done in Section \ref{subsec:def}. Then we will develop the connection between the price impact of zero-coupon bonds and the price impact term incorporated into the forward rate, in order to reveal the financial interpretation of the latter. Note that both the zero-coupon bonds and the forward rate can be used as building blocks for the whole interest rates theory. We are therefore interested in showing the connection between the two in the presence of price impact. For a thorough discussion on the HJM framework in the classic interest rate theory, we refer to Chapter 6 of the book by Filipovic \cite{filipovic2009term}. 

Given an integrable initial forward curve $T \to \tilde{f}(0,T)$, we assume that the impacted forward rate process $\tilde{f}(\cdot,T)$ is given by 
\begin{equation}
\tilde{f}(t,T) = \tilde{f}(0,T) + \int_0^t\left( \alpha(s,T) + J^f(s,T)\right) ds + \int_0^t \sigma(s,T) d W^{\mathbb{P}}(s),
\label{impacted_forward_rate}
\end{equation}
for any $0\leq t \leq T$ and each maturity $T>0$. Here $W^{\mathbb{P}}$ is a Brownian motion under the measure $\mathbb{P}$ and $\alpha(\cdot,T)$, $J^f(\cdot,T)$ and $\sigma(\cdot,T)$ are assumed to be progressively measurable processes and satisfy for  any $T>0$
\begin{align*}
\begin{split}
\int_0^T \int_0^T (|\alpha(s,t)| + |J^f(s,t)| )ds dt & < \infty, \\
\sup_{s,t \leq T} |\sigma(s,t) | & < \infty.
\end{split}
\end{align*}
While the roles of $\alpha(\cdot,T)$ and $\sigma(\cdot,T)$ above are as in standard HJM model, the stochastic process $J^f$ represents the impact density relative to the forward rate and accounts for the fact that the forward curve is affected by the trading activity. From a modelling perspective, it plays a completely analogous role as the quantity $J_T$ defined in \eqref{impact_density} for the impacted zero-coupon bond. In fact, in Proposition \eqref{prop:relationship_Jf_Jp} we will determine the mathematical relationship linking these two quantities. Such relationship will allow us to understand how the forward curve is impacted by trading zero-coupon bonds. 

In this framework, the impacted short rate model is given by
\begin{equation}
\tilde{r}(t) := \tilde{f}(t,t) = \tilde{f}(0,t) + \int_0^t\left( \alpha(s,t) + J^f(s,t)\right)  ds + \int_0^t \sigma(s,t) d W^{\mathbb{P}}(s),
\label{impacted_short_rate}
\end{equation}
and the impacted zero-coupon bond is defined as follows 
\begin{equation}
\tilde{P}(t,T) = e^{-\int_t^T \tilde{f}(t,u) du}.
\label{impacted_bond_HJM}
\end{equation}

Next we derive the explicit dynamics of $\{\tilde{P}(t,T)\}_{0\leq t\leq T}$. The following corollary is an impacted version of Lemma 6.1 in \cite{filipovic2009term}.

\begin{corollary}[Impacted zero-coupon bond in HJM framework]\label{cor:impacted_zcb_HJM} 
For every maturity $T$ the impacted zero-coupon bond defined in \eqref{impacted_bond_HJM} follows the dynamics
\begin{equation}
\tilde{P}(t,T) = \tilde{P}(0,T) + \int_0^t \tilde{P}(s,T) \left(\tilde{r}(s) + \tilde{b}(s,T) \right) ds + \int_0^t \tilde{P}(s,T) \nu(s,T) d W^{\mathbb{P}}(s),\ \ \ t \leq T,
\label{impaced_ZCB_HJM}
\end{equation}
where $\tilde{r}$ is the impacted short rate defined in \eqref{impacted_short_rate} and 
\begin{equation} \label{b-v-rel}
\begin{aligned}
\nu(s,T) & := - \int_s^T \sigma(s,u) du, \\
\tilde{b}(s,T) & := - \int_s^T \alpha(s,u) du - \int_s^T J^f(s,u) du + \frac{1}{2} \nu^2(s,T).
\end{aligned}
\end{equation} 
\end{corollary}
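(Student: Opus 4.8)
The plan is to follow the classical derivation of Lemma 6.1 in Filipovic \cite{filipovic2009term} almost verbatim, the only novelty being that the extra drift contribution $J^f$ appearing in \eqref{impacted_forward_rate} is carried along at every step exactly as the standard HJM drift $\alpha$. First I would write $\tilde{P}(t,T) = e^{Y(t,T)}$ with log-price $Y(t,T) := -\int_t^T \tilde{f}(t,u)\,du$, so that by \eqref{impacted_bond_HJM} it suffices to compute the dynamics of $Y$ and then apply Itô's formula to the exponential.

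To obtain $dY$ I would differentiate $\int_t^T \tilde{f}(t,u)\,du$ in $t$, noting two sources of $t$-dependence: the moving lower limit, which by the Leibniz rule produces $-\tilde{f}(t,t)\,dt = -\tilde{r}(t)\,dt$ via the definition of the impacted short rate \eqref{impacted_short_rate}, and the integrand $\tilde{f}(t,u)$, whose differential $d_t\tilde{f}(t,u) = (\alpha(t,u)+J^f(t,u))\,dt + \sigma(t,u)\,dW^{\mathbb{P}}(t)$ is read off from \eqref{impacted_forward_rate}. Integrating these increments against $du$ over $[t,T]$ yields
\[
dY(t,T) = \Bigl[\tilde{r}(t) - \int_t^T\!\bigl(\alpha(t,u)+J^f(t,u)\bigr)\,du\Bigr]dt + \Bigl(\int_t^T \sigma(t,u)\,du\Bigr)dW^{\mathbb{P}}(t),
\]
and recognising $-\int_t^T \sigma(t,u)\,du = \nu(t,T)$ gives the diffusion coefficient in \eqref{b-v-rel}.

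The step requiring genuine justification, rather than formal manipulation, is the interchange of the Lebesgue integral $\int_t^T(\cdot)\,du$ with the Itô integral $\int_0^t \sigma(s,u)\,dW^{\mathbb{P}}(s)$ (and similarly with the $ds$-integrals). This is precisely where the stochastic Fubini theorem enters, and it is legitimised by the integrability hypotheses imposed right after \eqref{impacted_forward_rate}, namely $\int_0^T\!\int_0^T(|\alpha(s,t)|+|J^f(s,t)|)\,ds\,dt < \infty$ together with $\sup_{s,t\le T}|\sigma(s,t)| < \infty$; the boundedness of $\sigma$ in particular secures the $L^2$ condition needed for the stochastic version. I expect this measurability-and-integrability bookkeeping, together with checking that the added $J^f$ term satisfies the same hypotheses as $\alpha$, to be the only real obstacle, and it is handled exactly as in the unimpacted case.

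Finally I would apply Itô's formula to $\tilde{P}(t,T) = e^{Y(t,T)}$. Since $Y$ has quadratic variation $d\langle Y\rangle(t,T) = \nu^2(t,T)\,dt$, the exponential picks up the Itô correction $\tfrac12\nu^2(t,T)\,dt$, producing
\[
d\tilde{P}(t,T) = \tilde{P}(t,T)\Bigl[\bigl(\tilde{r}(t) + \tilde{b}(t,T)\bigr)dt + \nu(t,T)\,dW^{\mathbb{P}}(t)\Bigr]
\]
with $\tilde{b}(t,T) = -\int_t^T\alpha(t,u)\,du - \int_t^T J^f(t,u)\,du + \tfrac12\nu^2(t,T)$, which is exactly \eqref{b-v-rel}. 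Integrating from $0$ to $t$ with initial value $\tilde{P}(0,T)$ then yields \eqref{impaced_ZCB_HJM}, completing the argument.
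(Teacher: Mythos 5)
Your proposal is correct and matches the paper's (implicit) argument: the paper proves this corollary simply by deferring to Lemma 6.1 of Filipovic \cite{filipovic2009term}, whose proof is exactly your route --- write $\tilde{P}(t,T)=e^{Y(t,T)}$ with $Y(t,T)=-\int_t^T \tilde{f}(t,u)\,du$, interchange integrals via the stochastic Fubini theorem (justified by the integrability hypotheses stated after \eqref{impacted_forward_rate}), and apply It\^o's formula to the exponential, with the impact density $J^f$ carried along as part of the drift exactly like $\alpha$. Your identification of the $\tilde{r}(t)\,dt$ term from the moving lower limit, of $\nu(t,T)$ as the diffusion coefficient, and of the It\^o correction $\tfrac12\nu^2(t,T)$ reproduces \eqref{b-v-rel} and \eqref{impaced_ZCB_HJM} as intended.
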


We now show that the impact $J^f$ can be expressed in terms of the impact relative to the zero-coupon bond, and vice versa. In order to show this correspondence in terms of agent's trading speed, we need to make an additional assumption on the trading speeds on zero-coupon bonds. We assume that the price impact in the forward curve is a result of trading by one or many agents over a continuum of zero-coupon bonds with maturities $T$ and trading speeds $\{T \geq 0 \, : \,  v_T \in \mathcal A_T\}$ so that 
\begin{equation} \label{sp-der}  
|\partial_T v_T(t)| <\infty, \quad \textrm{for all }  0\leq t \leq T,  \quad \mathbb{P}-\textrm{a.s.} 
\end{equation} 
Note that this assumption in fact makes sense in bond trading, which has discrete maturities, as it claims that when there is a highly traded $T_i$-bond, you would find that also the neighbouring $T_{i-1}$, $T_{i+1}$ are liquid. Assumption \eqref{sp-der} implies that $\partial_T I_T(t)$ is well defined as needed in the following Proposition.  We recall that $f$ represents the unimpacted forward rate which is given by setting $J^{f} \equiv 0$ in \eqref{impacted_forward_rate}.
 
\begin{proposition}[Forward rate and zero-coupon bond price impact relation] \label{prop:relationship_Jf_Jp}
Let $I_T(t)$ be the overall impact defined in \eqref{def:overall_price_impact} and $\tilde{P}(\cdot,T)$ the impacted zero-coupon bond price in \eqref{impacted_bond_HJM}. Assume $\tilde{f}(0,t) =  f(0,t)$, meaning that the initial value of the forward curve is not affected by trading. Then, the forward rate impact $J^f$ introduced in \eqref{impacted_forward_rate} is given by
\begin{equation}\label{relationship_Jf_Jp}
J^f(t,T) = - \frac{\partial}{\partial T} \log \left( 1- \frac{I_T(t)}{P(t,T)} \right) , \quad \textrm{for all } 0\leq t \leq T \ \textrm{such that }\tilde{P}(t,T) >0.  
\end{equation}
\end{proposition}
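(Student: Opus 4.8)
The plan is to exploit that the impacted bond has two representations that must coincide: the HJM definition \eqref{impacted_bond_HJM}, $\tilde P(t,T) = \exp(-\int_t^T \tilde f(t,u)\,du)$, and the direct impact specification of Section \ref{subsec:def}, which by \eqref{impacted_bond} and \eqref{def:overall_price_impact} reads $\tilde P(t,T) = P(t,T) - I_T(t)$. Forcing these to agree is exactly what couples the forward-rate impact $J^f$ to the bond impact $I_T$, and the proof amounts to extracting that coupling.

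First I would form the ratio $\tilde P(t,T)/P(t,T)$. Writing the unimpacted bond in forward-rate form, $P(t,T) = \exp(-\int_t^T f(t,u)\,du)$, the ratio with \eqref{impacted_bond_HJM} becomes
\[
\frac{\tilde P(t,T)}{P(t,T)} = \exp\!\left(-\int_t^T \big[\tilde f(t,u) - f(t,u)\big]\,du\right).
\]
I would then compute the forward-rate difference: subtracting the unimpacted dynamics (set $J^f\equiv 0$ in \eqref{impacted_forward_rate}) from the impacted dynamics \eqref{impacted_forward_rate}, the $\alpha$-drift and stochastic-integral terms cancel identically and the hypothesis $\tilde f(0,\cdot) = f(0,\cdot)$ removes the initial term, leaving the purely finite-variation identity $\tilde f(t,u) - f(t,u) = \int_0^t J^f(s,u)\,ds$.

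Next I would equate this ratio with the direct-impact value $1 - I_T(t)/P(t,T)$, valid on the set where $\tilde P(t,T) > 0$ so the logarithm is well defined, and take logarithms (using Fubini, legitimate under the integrability hypotheses on $J^f$ following \eqref{impacted_forward_rate}):
\[
\log\!\left(1 - \frac{I_T(t)}{P(t,T)}\right) = -\int_t^T\!\int_0^t J^f(s,u)\,ds\,du.
\]
Differentiating in $T$ isolates the forward impact: the fundamental theorem of calculus applied to the right-hand side returns $-\int_0^t J^f(s,T)\,ds$, while the left-hand derivative $\partial_T\log(1 - I_T(t)/P(t,T))$ is well defined precisely because assumption \eqref{sp-der} guarantees $\partial_T I_T(t)$ exists. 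Rearranging gives \eqref{relationship_Jf_Jp}.

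The step deserving the most care is the time variable: the bond-level identity only ever constrains the accumulated forward impact $\int_0^t J^f(s,T)\,ds$, so to arrive at the instantaneous density $J^f(t,T)$ displayed in \eqref{relationship_Jf_Jp} one must read the left-hand side of that formula as this time-cumulant, i.e.\ interpret the identity at the level of accumulated impact built into \eqref{impacted_forward_rate}. I would state this identification explicitly, as it is the only point where the notation of the statement suppresses the $\int_0^t(\cdot)\,ds$ structure.
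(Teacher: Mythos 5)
Your proof is correct and follows essentially the same route as the paper's: write $\tilde f(t,u) - f(t,u) = \int_0^t J^f(s,u)\,ds$ using $\tilde f(0,\cdot)=f(0,\cdot)$, equate the HJM representation \eqref{impacted_bond_HJM} with the direct-impact representation $\tilde P(t,T) = P(t,T) - I_T(t)$ from \eqref{bla}, take logarithms, and differentiate in $T$. Your closing caveat is actually a point in your favour rather than a deviation: the paper's own proof writes $\int_t^T J^f(s,u)\,du$ with the time variable $s$ left unbound, silently suppressing the inner $\int_0^t(\cdot)\,ds$, so what is strictly established is the identity for the accumulated impact $\int_0^t J^f(s,T)\,ds$, exactly the reading you make explicit.
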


The proof of Proposition \ref{prop:relationship_Jf_Jp} is given in Section \ref{sec:proofs}.

\begin{remark} Note that the requirement that $\tilde{P}(t,T)>0$ ensures that the logarithm on the right-hand side of \eqref{relationship_Jf_Jp} is well defined, as \eqref{bla} in the proof suggests. The proof also gives another relation between $J^{f}(\cdot, T)$ and $I_{T}$ which always holds but is perhaps not as direct. 
\end{remark} 

A well known feature of the classic HJM framework is that, under the risk neutral measure, the drift of the forward rate is completely specified by the volatility through the so called \emph{HJM condition}. In order to understand how this condition is affected by the introduction of price impact, we will follow Theorem 6.1 of \cite{filipovic2009term}. In particular, we have the following key result.

\begin{theorem}[HJM condition with price impact]\label{hjm_condition_market_impact}
Let $\mathbb{P}$ be the real world measure under which the impacted forward rate as in \eqref{impacted_forward_rate}. Let $\tilde{\mathbb{Q}} \sim \mathbb{P}$ be an equivalent probability measure of the form
\begin{equation}\label{def:Q_tilde_rnd_HJM}
\frac{d \tilde{\mathbb{Q}}}{d \mathbb{P}} = \exp \left\{\int_0^t \tilde{\gamma}(s) d W^{\mathbb{P}}(s) - \frac{1}{2} \int_0^t \tilde{\gamma}^2(s) ds \right\},
\end{equation}
for some progressively measurable stochastic process $\tilde{\gamma}= \{\tilde \gamma(t) \}_{t\geq 0}$ such that $\int_0^t \tilde{\gamma}^2(s) ds < \infty$, for all $t>0$, $\mathbb P$-a.s. Then, $\tilde{\mathbb{Q}}$ is an equivalent (local) martingale measure if and only if 
\begin{equation} 
\tilde{b}(t,T) = - \nu(t,T) \tilde{\gamma}(t), \quad \textrm{for all } 0\leq t \leq T. 
\label{HJM_condition}
\end{equation}
with $\tilde{b}(\cdot,T)$ and $\nu(\cdot,T)$ defined as in \eqref{b-v-rel}. In this case, the dynamics of the impacted forward rate under the measure $\tilde{\mathbb{Q}}$ is given by 
\begin{equation}
\tilde{f}(t,T) = \tilde{f}(0,T) + \int_0^t \left( \sigma(s,T) \int_s^T \sigma(s,u) du \right) ds + \int_0^t \sigma(s,T) d W^{\tilde{\mathbb{Q}}}(s).
\label{impacted_forward_rate_Q_tilde}
\end{equation}
Moreover, the prices of impacted zero-coupon bonds are
\begin{equation}
\tilde{P}(t,T) = \tilde{P}(0,T) + \int_0^t \tilde{P}(s,T) \tilde{r}(s) ds + \int_0^t \tilde{P}(s,T) \nu(s,T) d W^{\tilde{\mathbb{Q}}}(s).
\label{impacted_zc_bond_Q_tilde}
\end{equation}
\end{theorem}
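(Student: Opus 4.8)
The plan is to follow the classical HJM argument (Theorem 6.1 of \cite{filipovic2009term}), carrying the extra impact term $J^f$ through each step. Writing $B(t) := \exp\{\int_0^t \tilde r(s)\,ds\}$ for the money-market account generated by the impacted short rate \eqref{impacted_short_rate}, the statement that $\tilde{\mathbb{Q}}$ is an equivalent (local) martingale measure means precisely that the discounted impacted bond prices $Z(\cdot,T) := \tilde P(\cdot,T)/B(\cdot)$ are $\tilde{\mathbb{Q}}$-local martingales for every maturity $T \le \hat T$. The whole proof therefore reduces to computing the $\tilde{\mathbb{Q}}$-drift of $Z(\cdot,T)$ and reading off when it vanishes.

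First I would apply Girsanov's theorem to \eqref{def:Q_tilde_rnd_HJM}: under $\tilde{\mathbb{Q}}$ the process $W^{\tilde{\mathbb{Q}}}(t) := W^{\mathbb{P}}(t) - \int_0^t \tilde\gamma(s)\,ds$ is a Brownian motion. Substituting $dW^{\mathbb{P}} = dW^{\tilde{\mathbb{Q}}} + \tilde\gamma\,dt$ into the bond dynamics \eqref{impaced_ZCB_HJM} supplied by Corollary \ref{cor:impacted_zcb_HJM} gives
\begin{equation*}
d\tilde P(t,T) = \tilde P(t,T)\bigl(\tilde r(t) + \tilde b(t,T) + \nu(t,T)\tilde\gamma(t)\bigr)\,dt + \tilde P(t,T)\,\nu(t,T)\,dW^{\tilde{\mathbb{Q}}}(t).
\end{equation*}
Applying the It\^o product rule to $Z(t,T) = \tilde P(t,T)\,e^{-\int_0^t \tilde r(s)\,ds}$ cancels the $\tilde r$ term and yields
\begin{equation*}
dZ(t,T) = Z(t,T)\bigl(\tilde b(t,T) + \nu(t,T)\tilde\gamma(t)\bigr)\,dt + Z(t,T)\,\nu(t,T)\,dW^{\tilde{\mathbb{Q}}}(t).
\end{equation*}
Since $Z(\cdot,T)$ is a continuous semimartingale, it is a $\tilde{\mathbb{Q}}$-local martingale if and only if its finite-variation part vanishes $dt\otimes d\mathbb{P}$-a.e., i.e. $\tilde b(t,T) + \nu(t,T)\tilde\gamma(t) = 0$ for all $t\le T$. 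This is exactly the impacted HJM condition \eqref{HJM_condition}, establishing both implications simultaneously.

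To obtain the forward-rate dynamics \eqref{impacted_forward_rate_Q_tilde}, I would differentiate the integrated condition in $T$. Using $\partial_T\nu(t,T) = -\sigma(t,T)$, and hence $\partial_T\bigl[\tfrac12\nu^2(t,T)\bigr] = \sigma(t,T)\int_t^T\sigma(t,u)\,du$, differentiating \eqref{HJM_condition} together with the definition \eqref{b-v-rel} of $\tilde b$ produces the pointwise identity
\begin{equation*}
\alpha(t,T) + J^f(t,T) = \sigma(t,T)\int_t^T \sigma(t,u)\,du - \sigma(t,T)\tilde\gamma(t).
\end{equation*}
Plugging this drift into the forward rate \eqref{impacted_forward_rate} and again replacing $dW^{\mathbb{P}}$ by $dW^{\tilde{\mathbb{Q}}} + \tilde\gamma\,dt$, the two terms involving $\tilde\gamma$ cancel, leaving exactly \eqref{impacted_forward_rate_Q_tilde}; the bond formula \eqref{impacted_zc_bond_Q_tilde} then follows immediately by setting the drift correction $\tilde b + \nu\tilde\gamma$ to zero in the displayed $\tilde{\mathbb{Q}}$-dynamics of $\tilde P$.

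The routine substitutions are straightforward, so I expect the main obstacle to be the regularity bookkeeping rather than any new idea. Specifically, the ``only if'' direction requires the drift to vanish for \emph{every} $T$ off a common null set, and the passage from the integrated condition \eqref{HJM_condition} to the pointwise relation for $\alpha + J^f$ requires differentiating under the integral sign in $T$. Both are justified by the standing HJM integrability hypotheses on $\alpha$, $J^f$ and $\sigma$ (finiteness of the double integrals and boundedness of $\sigma$) together with a Fubini/stochastic-Fubini argument, exactly as in the unimpacted case; the impact term $J^f$ enters the drift linearly and inherits the same integrability, so it introduces no difficulty beyond careful accounting.
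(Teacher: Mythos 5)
Your proposal is correct and follows essentially the same route as the paper's proof: discount the impacted bond price from Corollary \ref{cor:impacted_zcb_HJM}, apply Girsanov for \eqref{def:Q_tilde_rnd_HJM}, read off that the drift $\tilde{b}(t,T) + \nu(t,T)\tilde{\gamma}(t)$ must vanish (using continuity in $T$ for the ``only if'' direction), then differentiate the integrated condition in $T$ to get $\alpha(t,T) + J^f(t,T) = \sigma(t,T)\int_t^T \sigma(t,u)\,du - \sigma(t,T)\tilde{\gamma}(t)$ and substitute back into \eqref{impacted_forward_rate} and \eqref{impaced_ZCB_HJM}. The only differences are cosmetic --- you apply Girsanov before rather than after the It\^o product rule, and you explicitly take the num\'eraire to be $\exp\{\int_0^t \tilde{r}(s)\,ds\}$ with the impacted short rate, which is the consistent reading of the paper's bank account in the HJM section.
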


The proof of Theorem \ref{hjm_condition_market_impact} is given in Section \ref{sec:proofs}.

In our context such a measure $\tilde{\mathbb{Q}}$ would be clearly interpreted as an impacted risk-neutral measure, completely analogous to the measure defined in \eqref{radon_nikodym_der_Q_tilde}. In fact, the stochastic process $\tilde{\gamma}$ in the HJM condition \eqref{HJM_condition} is the counterpart in the HJM framework, of the impacted market price of risk $\tilde{\lambda}$ defined in Section \ref{sec:main}. Indeed, combining equations \eqref{market_price_of_risk} and \eqref{impaced_ZCB_HJM} we obtain for $0\leq t\leq T$,
\begin{equation*}
\tilde{\lambda}(t) = \frac{\tilde{P}(t,T) \left(r(t) - \tilde{b}(t,T) \right) - r(t) \tilde{P}(t,T)}{\tilde{P}(t,T) \nu(t,T)} = - \frac{-\tilde{b}(t,T)}{\nu(t,T)} = \tilde{\gamma}(t).
\end{equation*}

The HJM framework adjusted with price impact discussed in this section is therefore perfectly consistent with the price impact model for zero-coupon bonds introduced in Section \ref{subsec:def}.

We remark once again that, in the classic theory of interest rates, the meaning of the HJM condition lies in the fact that the drift of the forward rate is constrained under the risk neutral measure. Similarly, looking at the market price of risk, we notice that a constraint is present for the drift of the zero-coupon bond. In particular, its drift, under the risk neutral measure, has to be precisely the risk free interest rate. The interesting point we would like to make here is that, once we incorporate price impact, the same kind of constraints holds, only under the newly defined impacted measure $\tilde{\mathbb{Q}}$.

We conclude this section by making two remarks. We first address the question of when the measure defined in Theorem \ref{hjm_condition_market_impact} is an equivalent martingale measure, instead of just local martingale measure. The second remark concerns the Markov property of the impacted short rate. In both cases, we see that the classic results carry over to the price impact framework, thanks to the key fact that the impact component affects only the drift of the forward rate.

\begin{remark}[Impacted risk neutral measure is an EMM]
Let $\nu(t,T)$ be defined as in \eqref{b-v-rel}. From Corollary 6.2 of \cite{filipovic2009term} it follows that the measure $\tilde{\mathbb{Q}}$ defined in Theorem \ref{hjm_condition_market_impact} is an equivalent martingale measure if either
\begin{equation*}
\mathbb{E}^{\tilde{\mathbb{Q}}} \left[ e^{\frac{1}{2} \int_0^T \nu^2(t,T) dt }\right] < \infty, \quad \text{for all} \ T \geq 0, 
\end{equation*}
or 
\begin{equation*}
f(t,T) \geq 0, \quad \textrm{for all } 0\leq t\leq T. 
\end{equation*}
\end{remark}

\begin{remark}[Markov property of the short rate]
As pointed out in Chapter 5 of \cite{brigo2007interest}, one of the main drawbacks of HJM theory is that the implied short rate dynamics is usually not Markovian. Here we simply remark that, since the volatility of the forward rate $\sigma(t,T)$ is not affected by price impact, if the Markov property of the short rate $r(t)$ is ensured under the measure $\mathbb{Q}$ when there is no trading, hence no price impact, then it is also preserved in the presence of price impact under the $\tilde{\mathbb{Q}}$.
\end{remark}

\section{Examples}{\label{sec:examples}}	

\subsection{Pricing impacted Eurodollar futures with Vasicek model}{\label{subsec:vasicek_example}}
In this section we illustrate the argument outlined in Section \ref{subsec:pricing} by computing the explicit price of a Eurodollar-futures contract when the underlying short rate follows an Ornstein-Uhlenbeck process \cite{vasicek1977equilibrium}. The dynamics under the risk neutral measure $\mathbb{Q}$ is given by
\begin{equation}
d r(t) = k (\theta - r(t)) dt + \sigma d W^\mathbb{Q}(t),
\label{sde:vasicek}
\end{equation}
with $k,\theta,\sigma$ positive parameters. The dynamics of the short rate under the real world measure $\mathbb{P}$ can be expressed as
\begin{equation}
d r(t) = k (\theta - r(t)) dt + \sigma (d W^\mathbb{P}(t) - \lambda(t) dt),
\label{short_rate_P_lambda}
\end{equation}
where we highlight the classic market price of risk process $\lambda$ defined in \eqref{market_price_of_risk}. Another representation for $r(t)$ under $\mathbb{P}$ is 
\begin{equation}
d r(t) = \tilde{k} (\tilde{\theta} - r(t)) dt + \sigma (d W^\mathbb{P}(t) - \tilde{\lambda}(t) dt),
\label{short_rate_P_lambda_tilde}
\end{equation}
where $\tilde{\lambda}$ is the impacted market price of risk defined in \eqref{def:lambda_tilde} and $\tilde k, \tilde \theta$ are positive constants. Combining the two equivalent representations \eqref{short_rate_P_lambda} and \eqref{short_rate_P_lambda_tilde}, we see that the following holds for any $t\geq 0$
\begin{equation}
k \theta - k r(t) - \sigma \lambda(t) = \tilde{k} \tilde{\theta} - \tilde{k} r(t) - \sigma \tilde{\lambda}(t).
\label{relationship_vasicek_pars}
\end{equation}
Similarly to what is done in the standard theory (see Brigo and Mercurio \cite{brigo2007interest}, section 3.2.1), we assume the short rate $r(t)$ has the same kind of dynamics under the measures $\mathbb{P}$, $\mathbb{Q}$ and $\tilde{\mathbb{Q}}$, that is 
\begin{equation}
\lambda(t) = \lambda r(t), \ \ \ \tilde{\lambda}(t) = \tilde{\lambda} r (t),
\label{lambda_lambda_t_vasicek}
\end{equation}
with $\lambda,\tilde{\lambda}$ constants. The whole impact is then encapsulated in the constant $\tilde{\lambda}$. By plugging \eqref{lambda_lambda_t_vasicek} into \eqref{relationship_vasicek_pars}, we deduce
\begin{align}\label{tilde_pars}
\begin{split}
\tilde{k} & = k - \sigma (\tilde{\lambda} - \lambda), \\ 
\tilde{\theta} & = \frac{k \theta}{k - \sigma (\tilde{\lambda} - \lambda)}. 
\end{split}
\end{align}
Clearly, in order to ensure all parameters are positive, we must require
\begin{equation*}
k > \sigma (\tilde{\lambda} - \lambda).
\end{equation*}

In this way, the short rate $r(t)$ is normally distributed under all three measures. In particular, plugging the Girsanov transformation from the measure $\mathbb{P}$ to the measure $\mathbb{\tilde{Q}}$, defined in \eqref{radon_nikodym_der_Q_tilde}, into equation \eqref{short_rate_P_lambda_tilde}, the short rate dynamics under $\tilde{\mathbb{Q}}$ can be conveniently rewritten as
\begin{equation} \label{vasicek_under_Q_tilde}
d r(t) = \tilde{k}(\tilde{\theta} - r(t)) dt + \sigma d W^{\tilde{\mathbb{Q}}}(t).
\end{equation}

Since the short rate under $\tilde{\mathbb{Q}}$ is Gaussian, $\{\int_t^T r(s) ds\}_{t\geq 0}$ is also a Gaussian process. At the same time, we recall the well known fact that if $X$ is a normal random variable with mean $\mu_X$ and variance $\sigma^2_X$, then $\mathbb{E}(\exp(X)) = \exp (\mu_X + \frac{1}{2} \sigma_X^2)$. Following the same argument as in (Brigo and Mercurio \cite{brigo2007interest}, Chapters 3.2.1, 3.3.2 and Chapter 4), we can use \eqref{vasicek_under_Q_tilde} in order to  express the impacted zero-coupon bond price as follows
\begin{equation*}
\tilde{P}(t,T) = A(t,T) e^{-B(t,T) r(t)}
\end{equation*}
where
\begin{align}\label{A_B_coeff_vasicek}
\begin{split}
A(t,T) & = \exp \left\{ \left(\tilde{\theta} - \frac{\sigma^2}{2 \tilde{k}^2} \right) [B(t,T) - T + t] - \frac{\sigma^2}{4 \tilde{k}} B^2(t,T) \right\}, \\
B(t,T) & = \frac{1}{\tilde{k}} \left(1 - e^{-\tilde{k} (T-t)} \right). 
\end{split}
\end{align}
Hence, the key expectation needed to compute the impacted Eurodollar future fair price in equation \eqref{fair_price_impacted_Eurodollar} is equal to
\begin{equation} \label{eu-d}
\mathbb{E}^{\tilde{\mathbb{Q}}} \left[ \frac{1}{\tilde{P}(t,T)} \right] = \frac{1}{A(t,T)} \mathbb{E}^{\tilde{\mathbb{Q}}} \left[e^{B(t,T) r(t)} \right].
\end{equation}
Since $r(t)$ is normally distributed, $B(t,T) r(t)$ will be normally distributed as well with mean and variance respectively equal to (see Brigo and Mercurio \cite{brigo2007interest}, Eq. (3.7))
\begin{subequations}
\begin{align*}
\mathbb{E}^{\tilde{\mathbb{Q}}}[B(t,T) r(t)] & = B(t,T) \left[r(0) e^{-\tilde{k}t} + \theta (1-e^{-\tilde{k}t}) \right], \\
\text{Var}^{\tilde{\mathbb{Q}}}[B(t,T) r(t)] & = B^2(t,T) \left[ \frac{\sigma^2}{2 \tilde{k}} (1 - e^{-2 \tilde{k} t}) \right].
\end{align*}
\end{subequations}
Therefore in order to get the impacted price of a Eurodollar-future  contract we need to compute the expectation in the right hand side of \eqref{eu-d} which can be written explicitly as
\begin{multline*}
\mathbb{E}^{\tilde{\mathbb{Q}}} \left[ \frac{1}{\tilde{P}(t,T)} \right] = \frac{1}{A(t,T)} \times \\
\times \exp \left\{ B(t,T) [r(0) e^{- \tilde{k} t} + \theta (1-e^{- \tilde{k}t})] + \frac{1}{2} B^2(t,T) \left[ \frac{\sigma^2}{2\tilde{k}} (1-e^{-2 \tilde{k} t}) \right] \right\}.
\end{multline*}
The main conclusion here is that defining the short rate under the impacted risk neutral measure preserves analytical tractability of interest rate derivatives precises. 

\subsection{Pricing impacted Eurodollar futures with Hull White model}{\label{subsec:hullwhite_example}}
In this section we compute the explicit price of a Eurodollar-futures contract when the underlying short rate follows a Hull White model \cite{hull1990pricing}. We start with the classic framework where there is not price impact. In this case the short rate is given by 
\begin{equation*}
d r(t) = \left[ \theta(t) - a r(t) \right] dt + \sigma d W^{\mathbb{Q}}(t),
\end{equation*}
where $a$ and $\sigma$ are positive constants and the function $\theta$ is chosen in order to fit exactly the term structure of interest rates being currently observed in the market. Denoting by $P^M(0,T)$ the unimpacted market discount factor for the maturity $T$ and defining the (unimpacted) market instantaneous forward rate at time $0$ for the maturity $T$
\begin{equation*}
f^M(0,T) := - \frac{\partial}{\partial T} \ln P^M(0,T),
\end{equation*}
the function $\theta$ is given by (see e.g. Brigo and Mercurio \cite{brigo2007interest}, Chapter 3, Eq. (3.34))
\begin{equation*}
\theta(t) = \frac{\partial f^M(0,t)}{\partial T} + a f^M(0,t) + \frac{\sigma^2}{2 a} \left(1-e^{-2 a t} \right),
\end{equation*}
where $\frac{\partial f^M(0,t)}{\partial T}$ denotes the partial derivative of $f^M$ with respect to its second variable. We start by computing the price under the classic risk neutral measure $\mathbb{Q}$. According to eq. (3.36)--(3.37) in Chapter 3 of \cite{brigo2007interest}, the short rate is normally distributed with mean and variance respectively equal to 
\begin{subequations}
\begin{align*}
\mathbb{E}^\mathbb{Q}[r(t) | \mathcal{F}_s] & = r(s) e^{-a (t-s)} + \alpha(t) - \alpha(s) e^{-a (t-s)} \\
\text{Var}^\mathbb{Q}[r(t) | \mathcal{F}_s] & = \frac{\sigma^2}{2 a} \left[1 - e^{-2 a (t-s)} \right],
\end{align*}
\end{subequations}
where 
\[
\alpha(t) := f^M(0,t) + \frac{\sigma^2}{2 a^2} (1-e^{-a t})^2.
\]
As before, we notice that the integral of the short rate will be normally distributed as well, hence the price of a zero-coupon bond under the classic risk neutral measure is given by (see eq. (3.39) in Chapter 3 of \cite{brigo2007interest}), 
\begin{equation*}
P(t,T) = A(t,T) e^{-B(t,T) r(t)},
\end{equation*}
where
\begin{align*}
A(t,T) &= \frac{P^M(0,T)}{P^M(0,t)} \exp \left\{B(t,T) f^M(0,t) - \frac{\sigma^2}{4 a} (1-e^{-2 a t}) B^2(t,T)\right\},\\
B(t,T) &= \frac{1}{a} \left[1-e^{-a (T-t)} \right].
\end{align*}
Moreover, the term $B(t,T) r(t)$ is still normally distributed and we immediately have
\begin{subequations}
\begin{align*}
\mathbb{E}^\mathbb{Q}[ B(t,T) r(t) | \mathcal{F}_s] & = B(t,T) \left(r(s) e^{-a (t-s)} + \alpha(t) - \alpha(s) e^{-a (t-s)} \right), \\
\text{Var}^\mathbb{Q}[ B(t,T) r(t) | \mathcal{F}_s] & = B^2(t,T) \frac{\sigma^2}{2 a} \left[1 - e^{-2 a (t-s)} \right].
\end{align*}
\end{subequations}
This implies that the expectation we are interested in, under the classic risk neutral measure $\mathbb{Q}$, can be written explicitly as (see Section 13.12.1 in \cite{brigo2007interest})
\begin{multline*}
\mathbb{E}^\mathbb{Q} \left[ \frac{1}{P(t,T)} \right] = \frac{1}{A(t,T)} \exp \left\{B(t,T) \mathbb{E}^\mathbb{Q}[r(t)] + \frac{1}{2} B^2(t,T) \text{Var}^\mathbb{Q}[r(t)] \right\}.
\end{multline*}
Next we derive the corresponding expression under the impacted risk neutral measure $\tilde{\mathbb{Q}}$ in \eqref{radon_nikodym_der_Q_tilde}. We assume as in Section \ref{subsec:vasicek_example} that the market price of risk and impacted market price of risk are given by
\[
\lambda(t) = \lambda r(t), \ \ \ \tilde{\lambda}(t) = \tilde{\lambda} r(t),
\]
for some constants $\lambda,\tilde{\lambda}$. Using the Girsanov change of measure from $\mathbb{Q}$ to $\tilde{\mathbb{Q}}$ defined in Section \ref{subsec:rnm}, it follows that the short rate under $\tilde{\mathbb{Q}}$ is given by  
\begin{subequations}
\begin{align*}
d r(t) & = \left[ \theta(t) - a r(t) \right] dt + \sigma d W^{\mathbb{Q}}(t) \\
& = \left[\theta(t)- a r(t) \right] dt + \sigma d W^{\tilde{\mathbb{Q}}}(t) + \sigma(\tilde{\lambda} - \lambda) r(t) dt \\
& = \left[ \theta(t) - (a - \sigma (\tilde{\lambda} - \lambda)) r(t) \right] dt + \sigma d W^{\tilde{\mathbb{Q}}}(t).
\end{align*}
\end{subequations}
Hence, we can define the impacted parameter
\begin{equation*}
\tilde{a} := a - \sigma (\tilde{\lambda} - \lambda).
\end{equation*}
The pricing formula for $\mathbb{E}^{\tilde {\mathbb{Q}}}\left[ \frac{1}{P(t,T)} \right]$ is then derived by following the same steps as in the classic case. Similarly to the Vasicek model, analytical tractability is preserved.

\section{Numerical results}{\label{sec:numerical_results}}
In this section we give a few numerical examples for the behaviour of the yield curve under price impact in the framework of short-rate affine models, which was described in Section \ref{subsec:yield}. In order to compute the cross price impact, we need the drift and the volatility of the zero-coupon bond. For the sake of simplicity, we assume the short rate is described by a Vasicek model \eqref{sde:vasicek}
\begin{equation*}
d r(t) = k (\theta - r(t)) dt + \sigma d W^\mathbb{Q}(t),
\end{equation*}
with $k,\theta,\sigma$ positive parameters. Then, the drift and the diffusion coefficients of the unimpacted zero-coupon bond are given by \eqref{affine:drift_vol_zero_coupon_bond}:
\begin{align*}
\mu_T(t,r(t)) &= e^{-B(t,T) r(t)} \bigg[\frac{\partial A}{\partial t} - A(t,T) \frac{\partial B}{\partial t} r(t) + \frac{1}{2} A(t,T) B^2(t,T) \sigma^2& \\
&\quad - A(t,T) B(t,T) k (\theta -r(t)) \bigg], \\
\sigma_T(t,r(t))& = - \sigma B(t,T) A(t,T) e^{-B(t,T) r(t)},
\end{align*}
where the functions $A,B$ are given as in \eqref{A_B_coeff_vasicek} and their derivatives are given by
\begin{equation*}
\frac{\partial B}{\partial t} = - e^{-k (T-t)}, \ \ \ \
\frac{\partial A}{\partial t} = A(t,T) \left[ \left(\theta - \frac{\sigma^2}{2 k^2} \right) \left(\frac{\partial B}{\partial t} + 1 \right) - \frac{\sigma^2}{2 k} B(t,T) \frac{\partial B}{\partial t} \right].
\end{equation*}
We can then plug all these quantities in equation \eqref{sde:cross_impacted_bond_S} to determine the dynamics of the cross-impacted zero-coupon bond and therefore the corresponding impacted yield. We set the following values for the parameters in \eqref{sde:vasicek}:
\begin{equation*}
k=0.20, \quad \theta = 0.10, \quad \sigma = 0.05, \quad r_0 = 0.01. 
\end{equation*}
We consider zero-coupon bonds with maturities $ \mathbb{T} := \left\{1,2,5,10,15 \right\}$ years and assume that an agent is trading on the bond with maturity $T=5$ years. All the other zero-coupon bonds experience cross price impact during the trading period. We fix the execution time horizon to be $\tau = 10$ days. All bonds are simulated over the time interval $[0, 9\ \text{months}]$, discretized in $N=365$ subintervals with time step $\Delta t = 1/365$. The short rate $r$ defined in \eqref{sde:vasicek} is simulated via Euler-Maruyama scheme. Since we are going to describe the average behaviour of the yield curve under market impact, we also set the number of Monte Carlo simulations to $M=10.000$. As we shall see below in the detailed algorithm, for each realization of the short rate, we will have a corresponding impacted yield curve. The idea is then to plot the average of such curves.

For the sake of simplicity, we discuss the benchmark trading strategy
\begin{equation}\label{def:benchmark_strategy}
v_T(s):=
\begin{cases} c, & \mbox{if } s \leq \tau \\ 0, & \mbox{otherwise }
\end{cases}
\end{equation}
with $c$ some positive constant if we buy, negative if we sell. In our simulations we choose $c=2$. The transient impact defined in \eqref{def:transient_impact} reads as
\begin{equation}
\Upsilon_T^v(t) = y e^{-\rho t} + \gamma e^{-\rho t} \int_0^t e^{\rho s} c \mathbbm{1}_{s \leq \tau} ds,
\label{benchmark_transient_impact}
\end{equation}
where the  parameters are set to
\begin{equation*}
\rho = 2, \ \ \ \gamma = 1, \ \ \ y = 0.01.
\end{equation*}
The functions $l,K$ introduced in \eqref{impacted_bond} are assumed to be of the form
\begin{equation*}
l(t,T) = \kappa \left(1-\frac{t}{T}\right)^\alpha, \ \ \ K(t,T) =  \left(1-\frac{t}{T}\right)^\beta
\end{equation*}
with $\kappa \geq 0, \alpha,\beta \geq 1$. In particular, we choose
\begin{equation*}
\alpha = 1, \ \ \beta=1, \ \ \kappa =0.01. 
\end{equation*}
Following \eqref{impacted_bond_differential_form}, the price of the impacted bond in $T=5$y is 
\begin{equation*}
\tilde{P}(t,T) = P(t,T) + \int_0^t J_T(s) ds,
\end{equation*}
where $J_T$, which was defined in \eqref{impact_density}, is specified to be
\begin{equation*}
J_T(t) = - \frac{\kappa}{T} v_T(t) + \left(1-\frac{t}{T} \right) \left[-\rho \Upsilon_T^v + v_T(t) \right] - \Upsilon_T^v(t)
\end{equation*}
The algorithm we implemented to simulate the impacted yield curve consists of the following steps.
\begin{steps}
\item Simulate a path of the short rate $r(t)$ given in \eqref{sde:vasicek} for $t \in [0,9\ \text{months}]$. 
\item Compute the unimpacted zero-coupon bond price $P(t,T)$ for the trading maturity  $T=5$ years using equation \eqref{affine:bond_price} for $t \in [0,9\ \text{months}]$.
\item Compute the unimpacted yield $Y(t,T)$ by plugging $P(t,T)$ in \eqref{def:classic_yield} for $t \in [0,9\ \text{months}]$.
\item Compute the (directly) impacted zero-coupon bond $\tilde{P}(t,T)$ using \eqref{impacted_bond_differential_form} for $t \in [0,9\ \text{months}]$.
\item Compute the (directly) impacted yield $\tilde{Y}(t,T)$ by plugging $\tilde{P}(t,T)$ into \eqref{def:impacted_yield} for $t \in [0,9\ \text{months}]$.
\item For all other maturities $S=1,2, 10, 15$ years, compute the cross impacted zero-coupon bond price $\tilde{P}(t,S)$ using equation \eqref{sde:cross_impacted_bond_S} for $t \in [0,9\ \text{months}]$.
\item Compute the cross impacted yield $\tilde{Y}(t,S)$ by plugging $\tilde{P}(t,S)$ into \eqref{def:impacted_yield} for $t \in [0,9\ \text{months}]$.
\item Repeat these steps $M=10.000$ times and compare the average of $Y(t,T)$ with the average of $\tilde{Y}(t,T)$.
\end{steps}

In Figure \ref{fig:imp_unimp_yield_average} we visualize for all maturities the average classic yield $\mathbb{E}[Y(t,T)]$ versus the average impacted yield $\mathbb{E}[\tilde{Y}(t,T)]$ at times $t=5$ days (middle of trading), $t=11$ days (right after trading is ended) and $t=270$ days (after $9$ months). 
\begin{figure}[H]
\centering
\includegraphics[scale=.50]{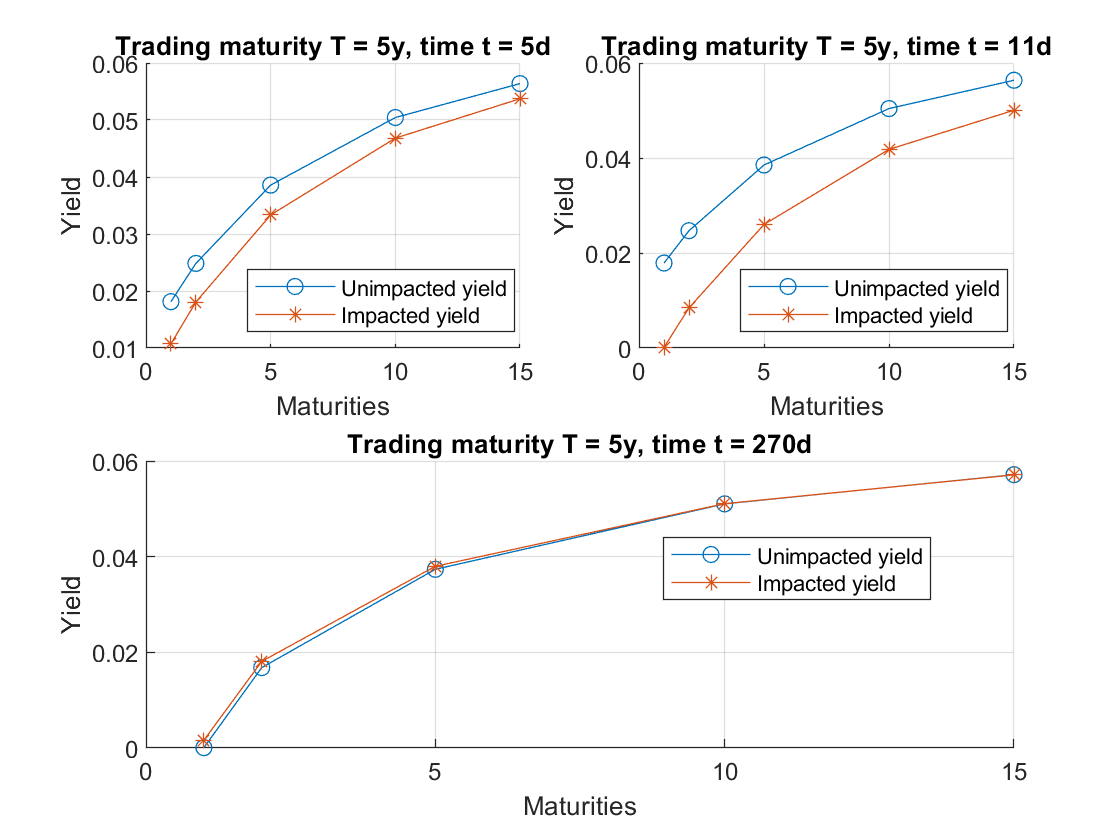}
\caption{Trading zero-coupon bond with maturity $T=5$ years. Average unimpacted yield curve and average impacted yield curve in the middle of trading (top left panel), right after trading is concluded (top right panel) and after nine months (bottom panel).}
\label{fig:imp_unimp_yield_average}
\end{figure}
In the top panels we see that the yield has decreased over all maturities as result of trading. This is consistent with the fact that a buy strategy of bonds pushes their prices up due to price impact, hence the yield decreases. Clearly, the almost parallel shift of the yield curve is a just a consequence of the very simple (constant) trading strategy we defined in equation \eqref{def:benchmark_strategy}. We expect to observe much more complicated behaviours when implementing more sophisticated strategies. In the bottom panel, instead, we observe that, roughly nine months after performing the trades, the two yield curves pretty much coincide. This is due to the transient component in the price impact model, which induces impacted yield curve to converge to its classic counterpart as time goes by. When analysing price impact due to zero-coupon bond trading, one aspect that certainly can't be ignored is the special nature of the assets we are trading. Unlike what happens with stocks, the time evolution of zero-coupon bonds is constrained, specifically by the fact that they must reach value $1$ at maturity. It therefore appears that two fundamental forces are in play: the intrinsic \emph{pull to par} effect, which makes both the impacted and unimpacted bond price go to $1$, hence the corresponding yields to $0$, and the \emph{price impact} effect, which induces the bond price to first increase (if we buy) or decrease (if we sell), and then revert back to its unimpacted value. Interestingly, when trading stocks, it will take the impacted asset forever to converge to the unimpacted counterpart, as the transient impact converges to $0$ as time $t$ goes to infinity. When trading bonds, though, this convergence occurs in finite time. In order to better understand the role played by price impact, in Figure \ref{fig:imp_classic_bonds} we compare directly the behaviour of the impacted bond $\tilde{P}(t,T)$ and of the classic bond $P(t,T)$ for different maturities $T$.
\begin{figure}[H]
\centering
\includegraphics[scale=.50]{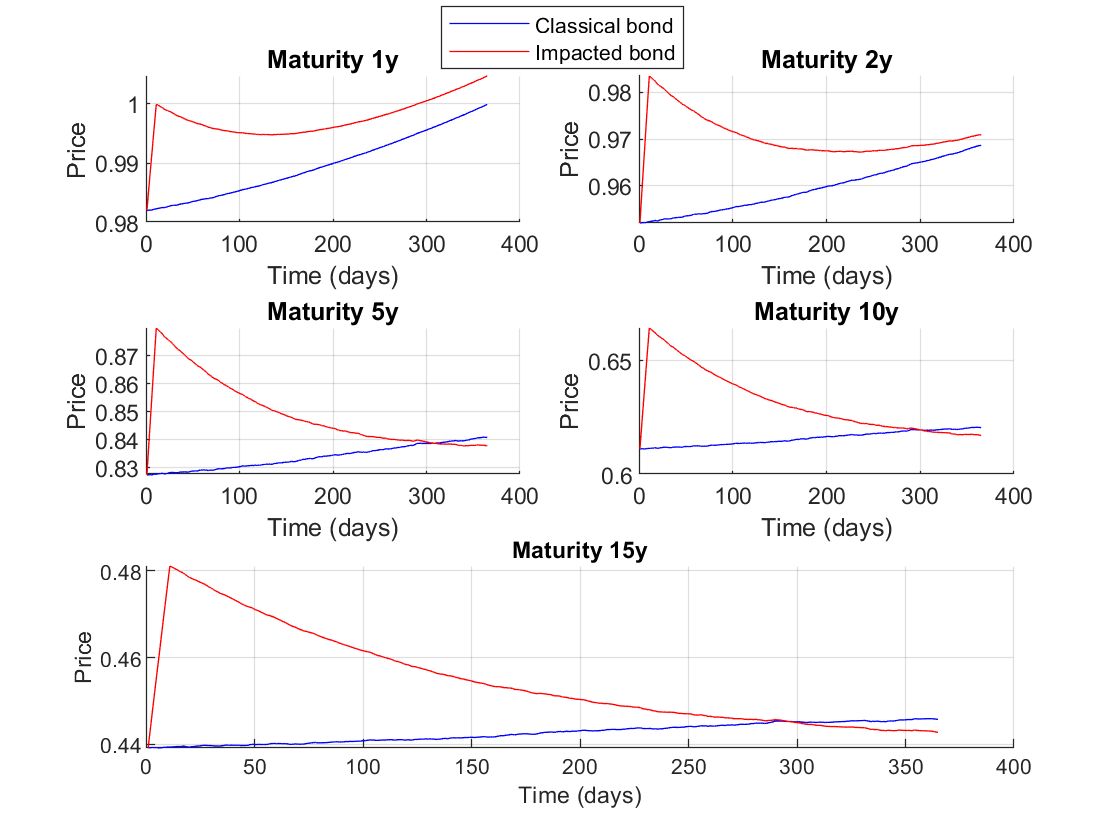}
\caption{Trading bond with maturity $T=5$ years. Averaged impacted zero-coupon bond vs averaged classic zero-coupon bond for maturities $1$ year (top left panel), $2$ years (top right panel), $5$ years (middle left panel), $10$ years (middle right panel), $15$ (bottom panel). All curves are observed over the interval $[0, 1\ \text{year}]$.}
\label{fig:imp_classic_bonds}
\end{figure}
We observe that, over one year, the pull-to-par effect is somehow stronger than the transient impact effect in bonds with short maturity $(T=1,2)$. By this, we mean that the unimpacted and impacted bonds meet at, or very close to, maturity. \footnote{It can be observed that the price of the cross-impacted zero-coupon bond with maturity $S=1$ year is not $1$ at expiration, as it should be, but slightly higher (top left panel). This is not a numerical error, but rather a consequence of our model not being able to ensure the cross-impacted bonds reach value precisely $1$ at their respective maturities. See Remark \ref{rem:cross_impacted_bonds_at_maturity}.} For bonds with long maturity ($T=5,10,15$), instead, the transient effect is prominent. This causes the impacted bond curve and the unimpacted bond curve to cross each other significantly before their maturity. In fact, we can numerically compute the first instant the two curves meet and we observe that the longer the maturity, the sooner this happens. This is illustrated in Figure \ref{fig:first_hitting_time}.
\begin{figure}[H]
\centering
\includegraphics[scale=.50]{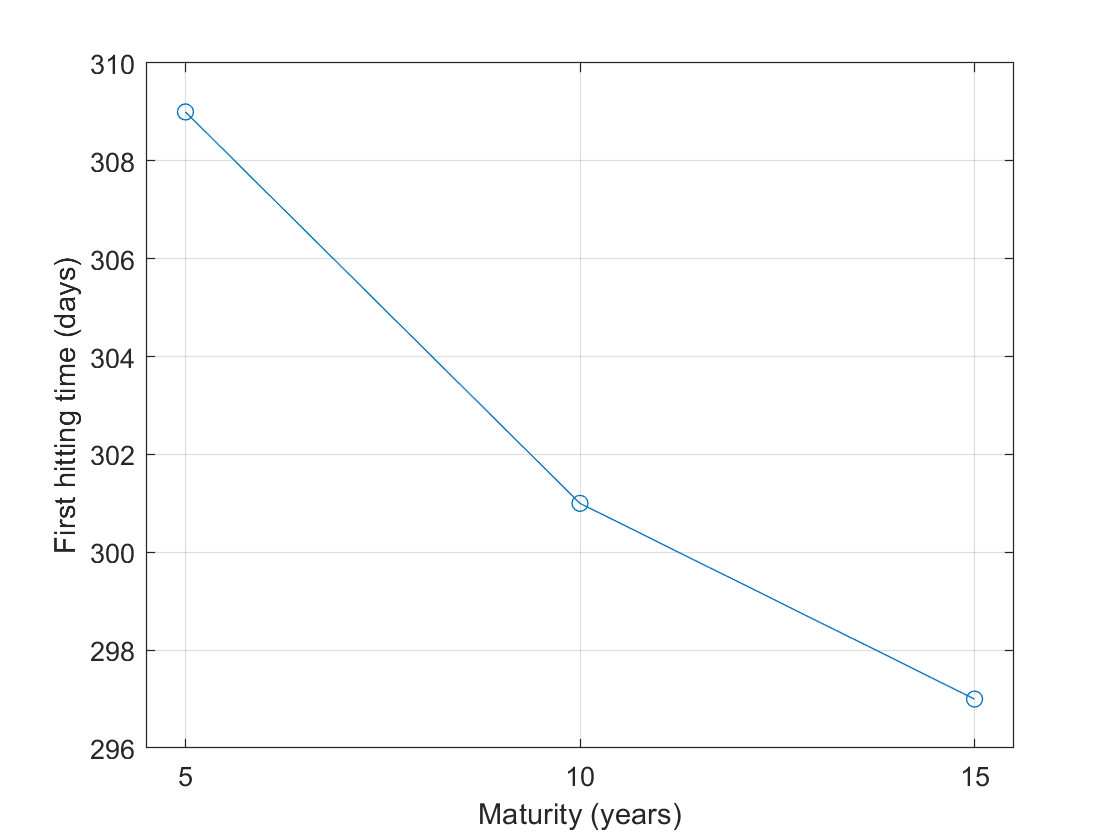}
\caption{Trading bond with maturity $T=5$ years. First instant (in days) impacted bond curve and unimpacted bond curve cross for maturities $5,10,15$ years. All curves are observed over the interval $[0, 1\ \text{year}]$.}
\label{fig:first_hitting_time}
\end{figure}

The interplay between the cross price impact effect, averaged over $10.000$ realizations, and the pull to par effect is demonstrated in Figure \ref{fig:average_imp_bond_1_rho} for the price of a zero-coupon bond with maturity $S=1$ year when trading a bond with maturity $T=5$ years. Trading takes place on the first 10 days of the year, while the time scale in the graph is of one year. We illustrate this effect for various values of the transient impact parameter $\rho$ in equation \eqref{benchmark_transient_impact}.
\begin{figure}[H]
\centering
\includegraphics[scale=.50]{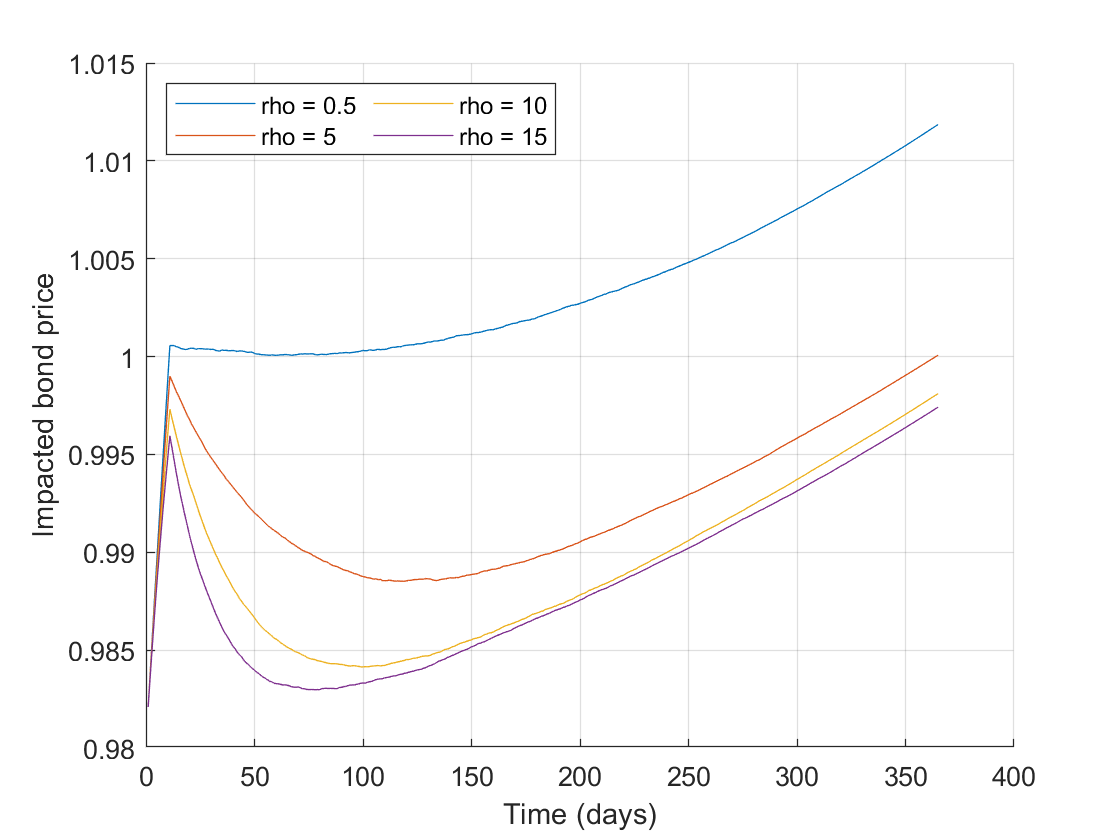}
\caption{Averaged cross price impact effect vs. pull to par effect over $10.000$ realizations is demonstrated for the price of a bond with maturity $S=1$ year when trading a bond with maturity $T=5$ years, for various values of $\rho$. Trading takes place on the first 10 days of the year, while the time scale in the graph is of one year.}
\label{fig:average_imp_bond_1_rho}
\end{figure}
It can be observed that the higher $\rho$, the more aggressively the price is "pulled down" close to the original price before the trades. At the beginning, far from maturity, the transient impact component dominates and the price decreases. After some time, though, the bond intrinsic nature takes over and the price starts to increase.

Another phenomenon which is revealed in our framework is the interplay between the mean reversion of the short rate model and the price impact. Recall that in Section \ref{subsec:vasicek_example} we found that the mean reversion speed $k$ under the measure $\mathbb{Q}$ and the mean reversion $\tilde{k}$ under the price-impacted measure $\tilde{\mathbb{Q}}$ are linked by \eqref{relationship_vasicek_pars} as follows
\[
\tilde{k} = k - \sigma(\tilde{\lambda}-\lambda),
\]
with $\tilde{\lambda}, \lambda$ representing the impacted market price of risk and the classic market price of risk respectively. We stress that the higher $k$, the faster the short rate $r$ under $\mathbb{Q}$ and its counterpart under $\tilde{\mathbb{Q}}$ converge to their respective stationary distributions. At the same time, since the variance of the stationary distribution is $\sigma^2/(2k)$, large values of $k$ reduce the overall variance of the model, thereby making the two types of rates that we consider closer to each other. This, in turn, implies that after a long time ($T=10,15$ years) the zero-coupon bond $P$ and impacted zero-coupon bond $\tilde{P}$, hence their yields, will be closer to each other. Conversely, if $k$ is small, the two short rates are quite far from each other and the overall variance of the model is large. Furthermore, looking again at \eqref{relationship_vasicek_pars}, we notice that the larger $k$, the less significant the impact component $-\sigma (\tilde{\lambda} - \lambda)$, and vice versa. In a way, the speed of mean reversion works in an opposite direction to the price impact. We demonstrate this in Figure \ref{fig:yield_curve_mean_reversion} for $k=0.01$ (top panel) and for $k=0.20$ (bottom panel). As above, we trade the zero-coupon bond with maturity $T=5$ years, trading occurs for the first $10$ days and the yields are observed after $9$ months. The difference in behaviour is evident for long maturities $(T=5,10,15)$. While in the bottom panel unimpacted yield and impacted yield are really close to each other (as in Figure \eqref{fig:imp_unimp_yield_average}, right panel), in the top panel the distance between the two yields is rather significant.
\begin{figure}[H]
\centering
\includegraphics[scale=.50]{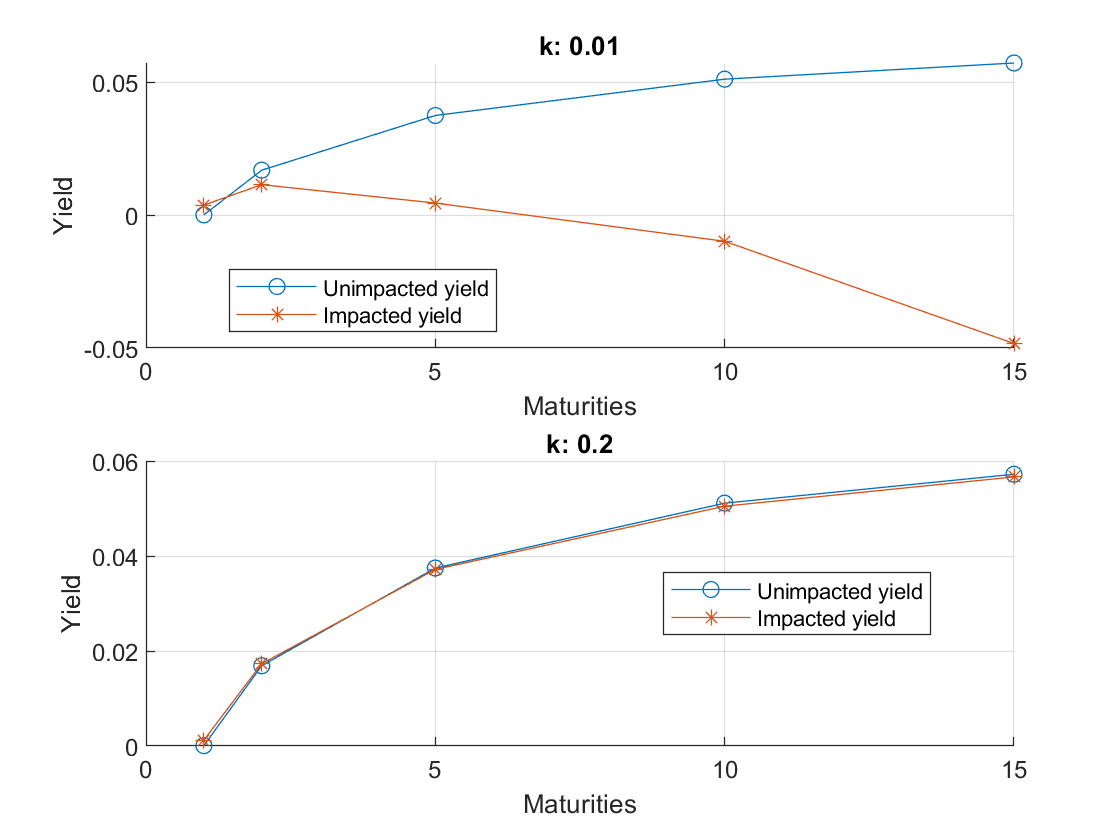}
\caption{Impacted and unimpacted yield curves for $k=0.01$ (top panel) and for $k=0.20$ (bottom panel) when trading zero-coupon bond with maturity $T=5$ years. Trading occurs during the first $10$ days. Yield curves are observed after nine months.}
\label{fig:yield_curve_mean_reversion}
\end{figure}

\section{Optimal execution of bonds in presence of price impact }\label{sec:optexec}
In this section we consider a problem of an agent who tries to liquidate a large inventory of $T$-bonds within a finite time horizon $[0,\tau]$ where $\tau < T$. We assume that the agent's transactions create both temporary and transient price impact and that the performance of the agent is measured by a revenue-cost functional that captures the transaction costs which result by price impact, and the risk of holding inventory for long time periods.  Our optimal execution framework is closely related to the framework which was proposed for execution of equities in Section \ref{subsec:def} of \cite{neuman2020optimal}. The main difference between the two frameworks is that in our framework the price impact has to vanish at the bond's maturity in order to satisfy the boundary condition $\tilde P(T,T)=1$. 

Let $T>0$ denote the bond's maturity. We assume that the unimpacted bond price $P(\cdot,T)$ is given by \eqref{class-b} and we consider the canonical decomposition $P(\cdot,T)= A(\cdot,T) + \bar{M}(\cdot,T)$, where 
\begin{equation*}
A(t,T) := \int_0^t \mu_T(s,r(s)) ds, \quad 0\leq t\leq T, 
\end{equation*}
is a predictable finite-variation process and  
\begin{equation*}
\bar{M}(t,T) := \int_0^t \sigma_T(s,r(s)) d W^{\mathbb{P}}(s) , \quad 0\leq t\leq T, 
\end{equation*}
local martingale. We assume that the coefficients $\sigma_{T},\mu_{T}$ in \eqref{class-b} are such that we have 
\begin{equation} \label{def: mathcal_h_squared} 
\mathbb{E}[\langle \bar{M}(\cdot,T) \rangle_\tau] + \mathbb{E} \left[ \left( \int_0^\tau |d A (\cdot,T)| \right)^2 \right] < \infty.
\end{equation} 
In this case we say that a bond price $\{P(t,T)\}_{t \in [0,T]}$ is in $\mathcal{H}^2$.

The initial position of the agent's inventory is denoted by $x>0$ and the number of shares the agent holds at time $t\in [0,\tau]$ is given by 
\begin{equation} \label{def:X}
X^{v_T}(t)\triangleq x-\int_0^t v_T(s)ds
\end{equation}
where $\{v_T(t)\}_{t \in [0,\tau]}$ denotes the trading speed. We say that the trading speed is admissible if it belongs to the following class of admissible strategies 
\begin{equation} \label{def:admissset} 
\mathcal A \triangleq \left\{ v_T \, : \, v_T \textrm{ progressively measurable s.t. } \mathbb E\left[ \int_0^\tau v^2_T(s) ds \right] <\infty \right\}.
\end{equation} 
We assume that the trader's trading activity causes price impact on the bond's price as described by $\{\tilde P(t,T)\}_{t \in [0,T]}$ in \eqref{impacted_bond}. 

As in Section 2 of \cite{neuman2020optimal}, we now suppose that the trader's optimal trading objective is to unwind her initial position $x>0$ in the presence of temporary and transient price impact through maximizing the following performance functional 
\begin{equation}  \label{j-fun}
\begin{aligned}
\mathcal{J}(v) := \mathbb{E} \bigg [\int_0^\tau \bigg(P(t,T) - K(t,T) \Upsilon_T^v(t) \bigg) v_T(t) dt - \int_0^\tau l(t,T) v_T^2(t) dt + X_T^v(\tau) P(\tau,T)  \\
- \phi \int_0^\tau (X_T^v(t))^2 dt - \varrho (X_T^v(\tau))^2 \bigg].
\end{aligned}
\end{equation} 
The first, second and third terms in $\mathcal J$ represent the trader's terminal wealth, meaning the final cash position including the accrued trading costs induced by temporary and transient price impact, as well as the remaining final risky asset position's book value. The fourth and fifth terms, instead, account for the penalties $\phi,\varrho>0$ on the trader's running penalty (i.e. the risk aversion term) and the penalty of holding any terminal inventory, respectively. 

Since $T$ is fixed, for the sake of readability we will omit the subscripts $T$ for the rest of this section. 
The main result of this section is the derivation of the unique optimal admissible strategy, namely 
\begin{equation}
\mathcal{J}(v) \to \max_{v \in \mathcal{A}}.
\label{opt_stoch_control}
\end{equation}
and exhibiting an explicit expression for the optimal trading strategy. We define
\begin{equation}
A(t) :=
\left(
\begin{matrix}
0 & 0 & -1 & 0 \\
0 & -\rho & \gamma & 0 \\
-2 \phi \Lambda(t) & \begin{matrix} \rho K(t,T) \Lambda(t)\\ \hfill{} - \Lambda'(t) K(t,T) - \Lambda(t) \partial_t K(t,T) \end{matrix}  & 0 & \Lambda'(t) + \rho \Lambda(t) \\
0 & 0 & K(t,T) \gamma & \rho
\end{matrix}
\right),
\label{def:time_dep_matrix_A}
\end{equation}
where
\begin{equation}
\Lambda(t) := \frac{1}{2 l(t,T)}. 
\label{def:Lambda_optimal_exe}
\end{equation}
Note that $\Lambda(t)$ is well defined for $0\leq t \leq \tau$ since $l(t,T)>0$ on this interval by \eqref{l-pos}. Let $\Phi$ be the fundamental solution of the matrix-valued ordinary differential equation
\begin{align}\label{matrix_ODE}
\begin{split}
\frac{d}{dt} \Phi(t) & = A(t) \Phi(t), \\
\Phi(0) & = \text{Id}.
\end{split}
\end{align}
Let us define the matrix
\begin{equation}\label{def:psi_matrix}
\Psi(t,\tau) := \Phi^{-1}(\tau) \Phi(t). 
\end{equation}
We also define the vector $G$:
\begin{align}\label{def:G_components}
\begin{split}
G^1(t,\tau) & := \frac{\varrho}{l(\tau,T)} \Psi^{11}(t,\tau) - \frac{K(\tau,T)}{2 l(\tau,T)} \Psi^{21}(t,\tau) - \Psi^{31}(t,\tau) ,\\
G^2(t,\tau) & := \frac{\varrho}{l(\tau,T)} \Psi^{12}(t,\tau) - \frac{K(\tau,T)}{2 l(\tau,T)} \Psi^{22}(t,\tau) - \Psi^{32}(t,\tau)  ,\\
G^3(t,\tau) & := \frac{\varrho}{l(\tau,T)} \Psi^{13}(t,\tau) - \frac{K(\tau,T)}{2 l(\tau,T)}\Psi^{23}(t,\tau) - \Psi^{33}(t,\tau) , \\
G^4(t,\tau) & := \frac{\varrho}{l(\tau,T)} \Psi^{14}(t,\tau) - \frac{K(\tau,T)}{2 l(\tau,T)}\Psi^{24}(t,\tau) - \Psi^{34}(t,\tau) . 
\end{split}
\end{align}
Next, we define the process
\begin{equation}
\Gamma^{\hat{v}}(t) := \frac{\Lambda'(t)}{\Lambda(t)} \left(P(t,T) + \tilde{M}(t) - 2 \phi \int_0^t X^{\hat{v}}(u) du \right),
\label{def:Gamma_quantity}
\end{equation}
where $\tilde{M}$ is the square integrable martingale 
\begin{equation}\label{def:square_int_mart_M_tilde}
\tilde{M}(s) := \mathbb{E}_s \bigg[2 \phi \int_0^\tau X^{\hat{v}}(u) du + 2 \varrho X^{\hat{v}}(\tau) - P(\tau,T) \bigg], 
\end{equation}
and $\mathbb{E}_t$ denotes the expectation conditioned on the filtration $\mathcal{F}_t$ for all $t \in [0,\tau]$. Finally we define the following functions on $0\leq t \leq \tau$,
\begin{align}\label{def:v_terms}
\begin{split}
v_0(t,\tau) & :=  \left(1-\frac{G^4(t,\tau) \Psi^{43}(t,\tau)}{G^3(t,\tau) \Psi^{44}(t,\tau)} \right)^{-1}, \\
v_1(t,\tau) & := \left(\frac{G^4(t,\tau) \Psi^{41}(t,\tau)}{G^3(t,\tau) \Psi^{44}(t,\tau)} - \frac{G^1(t,\tau)}{G^3(t,\tau)} \right), \\
v_2(t,\tau) & := \left(\frac{G^4(t,\tau) \Psi^{42}(t,\tau)}{G^3(t,\tau) \Psi^{44}(t,\tau)} - \frac{G^2(t,\tau)}{G^3(t,\tau)} \right) ,\\
v_3(t,\tau) & := \frac{G^4(t,\tau)}{G^3(t,\tau)}.
\end{split}
\end{align}
In order for the optimal strategy to be well defined, we will need additional assumptions. Note that if $l, K$ are positive constants these assumptions translate to Assumption 3.1 and Lemma 5.5 in \cite{neuman2020optimal}. 
\begin{assumption} \label{ass-opt}  
\begin{enumerate}[label=(A.\arabic*), ref=A.\arabic*] We assume that the following hold:
\item \label{A.1}
\[
\inf_{0\leq t \leq \tau} |G^4(t,\tau) \Psi^{43}(t,\tau) - G^3(t,\tau) \Psi^{44}(t,\tau)| > 0,
\] 
\item \label{A.2}
\[
\sup_{0 \leq t \leq \tau} |\Psi^{4j}(t,\tau)| < \infty, \ \ \sup_{0 \leq t \leq \tau} |G^j(t,\tau)| < \infty, \ \ \ j \in \left\{1,2,3,4\right\}
\]
\item \label{A.3}
\[
\inf_{0\leq t\leq \tau}  |\Psi^{44}(t,\tau)| >0, \quad  \inf_{0\leq t\leq \tau} | G^3(t,\tau)|>0.
\]
\end{enumerate}
\end{assumption}
\begin{remark}
At this point we stress the fact that the conditions in Assumption \ref{ass-opt} are not very general, however the purpose of this section is to show how to incorporate optimal execution into the impacted bonds framework. Future work may improve the theoretical results on this topic. 
\end{remark} 
Next we present the main result of this section, which derives the unique optimal trading speed.

\begin{theorem}[Optimal trading strategy]
\label{linear_feedback_form}
Under Assumption \ref{ass-opt}, there exists a unique optimal strategy $\hat{v} \in \mathcal{A}$ which maximises \eqref{opt_stoch_control} and it is given by the following feedback form 
\begin{align}\label{optimal_strategy_linear_feedback}
\begin{split}
v(t) & = v_0(t,\tau) \Bigg( v_1(t,\tau) X^v(t) + v_2(t,\tau) \Upsilon^v(t) \\
& \qquad \qquad + v_3(t,\tau) \mathbb{E}_t \left[ \int_t^\tau \frac{\Lambda(s) \Psi^{43}(s,\tau)}{\Psi^{44}(t,\tau)} (\mu(s) + \Gamma^{v}(s)) ds \right] \\
&  \qquad \qquad - \mathbb{E}_t \left[\int_t^\tau \Lambda(s) \frac{G^3(s,\tau)}{G^3(t,\tau)} (\mu(s) + \Gamma^{v}(s)) ds \right] \Bigg), 
\end{split}
\end{align}
for all $t \in (0,\tau)$.
\end{theorem}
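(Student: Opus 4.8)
The plan is to treat \eqref{opt_stoch_control} as a linear-quadratic stochastic control problem and to reduce it, via the calculus of variations and the associated first-order condition, to a linear forward--backward stochastic differential equation (FBSDE) that can be integrated explicitly by means of the fundamental matrix $\Phi$. I would first note that the state pair $(X^v, \Upsilon^v)$ depends affinely on the control $v$ (see \eqref{def:X} and \eqref{def:transient_impact}) and that the functional $\mathcal{J}$ in \eqref{j-fun} is quadratic in $(v, X^v, \Upsilon^v)$, with the instantaneous cost $l(t,T) v^2$ carrying a strictly positive coefficient by \eqref{l-pos}. Hence $v \mapsto \mathcal{J}(v)$ is strictly concave on the Hilbert space $\mathcal{A}$, so a maximiser is automatically unique and is characterised by the vanishing of the G\^ateaux derivative.

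Next I would derive the first-order condition. Perturbing $v$ by an admissible direction $w$, differentiating $\mathcal{J}(v + \epsilon w)$ at $\epsilon = 0$, and using Fubini together with the linearity of $X^v$ and $\Upsilon^v$ in $v$, produces a stochastic Euler--Lagrange equation. Introducing adjoint processes dual to the inventory and transient-impact states converts this necessary and sufficient optimality condition into a coupled FBSDE, whose forward equations are the dynamics of $X$ and $\Upsilon$ and whose backward equations come from the adjoints, with terminal data dictated by the penalties $\varrho (X^v(\tau))^2$ and $X^v(\tau) P(\tau,T)$. The square-integrable martingale $\tilde{M}$ in \eqref{def:square_int_mart_M_tilde} encodes the conditional expectation of the terminal bond price together with the inventory penalty terms, while the process $\Gamma^v$ in \eqref{def:Gamma_quantity} absorbs the inhomogeneous source term generated by the time-dependence of $\Lambda(t) = 1/(2 l(t,T))$.

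The core step is to integrate this linear FBSDE in closed form. The optimality condition yields a dynamic equation for the optimal speed $v$ itself, so collecting the inventory $X$, the transient impact $\Upsilon$, the speed $v$, and the remaining adjoint into a four-dimensional vector, the homogeneous part of the coupled system is governed exactly by the matrix $A(t)$ in \eqref{def:time_dep_matrix_A}: its first two rows reproduce $\dot{X} = -v$ and $\dot{\Upsilon} = -\rho\Upsilon + \gamma v$, while the last two encode the Euler--Lagrange equation for $v$ and the adjoint dynamics. Since all coefficients are deterministic, the fundamental solution $\Phi$ from \eqref{matrix_ODE} propagates the system by variation of parameters, and the transition matrix $\Psi(t,\tau) = \Phi^{-1}(\tau)\Phi(t)$ in \eqref{def:psi_matrix} transports the terminal constraints back to time $t$. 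Imposing these constraints produces the vector $G$ in \eqref{def:G_components}; solving the resulting linear relations for the speed and taking conditional expectations to treat the stochastic drift $\mu + \Gamma^v$ yields the feedback representation \eqref{optimal_strategy_linear_feedback}, with the coefficients $v_0, v_1, v_2, v_3$ in \eqref{def:v_terms} arising from the inversion of the relevant block of $\Psi$.

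The hard part will be twofold. First, the feedback law is implicit, since $\Gamma^v$ in \eqref{def:Gamma_quantity} depends on the optimal inventory through $X^{\hat{v}}$; one must therefore verify that the closed FBSDE is well posed and that the resulting fixed-point relation has a unique solution. This is precisely where Assumption \ref{ass-opt} enters: \ref{A.1} and \ref{A.3} keep the denominators $G^3(t,\tau) \Psi^{44}(t,\tau)$ and $G^4 \Psi^{43} - G^3 \Psi^{44}$ bounded away from zero, so that $v_0, v_1, v_2, v_3$ are well defined. Second, one must establish admissibility, $\hat{v} \in \mathcal{A}$, i.e. $\mathbb{E}[\int_0^\tau \hat{v}^2 \, dt] < \infty$; the uniform bounds \ref{A.2} on the entries of $\Psi$ and on $G$, combined with the square-integrability of $\tilde{M}$ and the $\mathcal{H}^2$-bounds from \eqref{def: mathcal_h_squared}, supply the required moment estimate. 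A final verification step, invoking the strict concavity established at the outset, confirms that this critical point is the global maximiser, which completes the proof.
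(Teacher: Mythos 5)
Your proposal follows essentially the same route as the paper's proof: strict concavity gives uniqueness and reduces the problem to the vanishing G\^ateaux derivative, the first-order condition is converted (via Fubini and optional projection) into the coupled FBSDE with matrix $A(t)$ from \eqref{def:time_dep_matrix_A}, the linear system is integrated with the fundamental solution $\Phi$ and the terminal constraints transported by $\Psi(t,\tau)$ yield the vector $G$ and the feedback form \eqref{optimal_strategy_linear_feedback}, and admissibility is checked exactly as in the paper using Assumption \ref{ass-opt}, the $\Gamma^{\hat v}$ moment bound (the paper's Lemma \ref{lem:Gamma_quantity}) and a Gronwall argument. Your ``adjoint processes'' are, in the paper's notation, the auxiliary process $Z^{\hat v}$ built from the martingale $\tilde N$ together with the BSDE for $\hat v$ itself, so this is a difference of phrasing rather than of method.
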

The proof Theorem \ref{linear_feedback_form} is given in Section \ref{sec:proof_linear_feedback}.

\section{Proofs of the results from Section \ref{sec:main}}{\label{sec:proofs}}

\begin{proof} [Proof of Theorem \ref{thm_impacted_market_price}]

We adapt the argument by Bjork in Section 3.2 of \cite{bjork1997interest} to our case. We fix two maturities $T$ and $S$, and we consider a portfolio $V$ consisting of $S$-bonds and $T$-bonds. 
We further assume that both bonds are traded with admissible trading speeds $v_T$ and $v_S$ which correspond by \eqref{impact_density} to impact densities $J_T$ and $J_S$.

From \eqref{class-b} and \eqref{impacted_bond_differential_form} we can write the dynamics of the impacted bonds as follows: 
\begin{equation}\label{dyn_imp_bonds}
\begin{aligned}
d \tilde{P}(t,T) = & \mu_T(t,r(t)) dt + J_T(t) dt + \sigma_T(t,r(t)) d W^\mathbb{P}(t), \\
d \tilde{P}(t,S) = & \mu_S(t,r(t)) dt + J_S(t) dt + \sigma_S(t,r(t)) d W^\mathbb{P}(t).
\end{aligned}
\end{equation}
Let $\tilde{h}_T,\tilde{h}_S$ by locally bounded predictable processes representing the weights of the $T$ and $S$ bonds, respectively. We denote by $\tilde{V}(t)$ the portfolio value process, i.e.
\begin{equation*}
\tilde{V}(t) \equiv \tilde{V}(t;\tilde{h}) := \tilde{h}_T(t) \tilde{P}(t,T) + \tilde{h}_S(t) \tilde{P}(t,S).
\end{equation*}
Since, by assumption, the impacted-portfolio is self-financing, it holds at any time $t$ (see Definition \ref{def:self_financing})
\begin{equation*}
d \tilde{V}(t;\tilde{h}) = \tilde{h}_T(t) d \tilde{P}(t,T) + \tilde{h}_S(t) d \tilde{P}(t,S).
\end{equation*}
It is convenient to define the relative (impacted) weights
\begin{equation*}
\alpha_{T_i}(t) := \frac{\tilde{h}_{T_i}(t) \tilde{P}(t,T_i)}{\tilde{V}(t;\tilde{h})},\ \ T_i \in \left\{T,S\right\},
\end{equation*}
We conclude that if the impacted portfolio is self financing, then
\begin{equation}
\frac{d \tilde{V}(t)}{\tilde{V}(t)} = \alpha_T(t) \frac{d \tilde{P}(t,T)}{\tilde{P}(t,T)} + \alpha_S(t) \frac{d \tilde{P}(t,S)}{\tilde{P}(t,S)}.
\label{self_fin_imp_portfolio_relative}
\end{equation}
In order to ease the notation, we suppress the dependence on $r(t)$ in the drift and volatility. Substituting the dynamics \eqref{dyn_imp_bonds} into \eqref{self_fin_imp_portfolio_relative}, we have
\begin{multline}
\frac{d \tilde{V}(t)}{\tilde{V}(t)} = \frac{\alpha_T(t)}{\tilde{P}(t,T)} (\mu_T(t) - J_T(t)) dt + \frac{\alpha_S(t)}{\tilde{P}(t,S)} (\mu_S(t) - J_S(t)) dt + \\
+ \left( \alpha_S(t) \frac{\sigma_S(t)}{\tilde{P}(t,S)} + \alpha_T(t) \frac{\sigma_T(t)}{\tilde{P}(t,T)} \right) d W^\mathbb{P}(t).
\label{relative_dynamics}
\end{multline}
At this point, we choose the relative weights so that the diffusive part of the equation above vanishes, that is, 
\begin{equation}\label{sys1}
\begin{aligned}
\alpha_T(t) + \alpha_S(t) & = 1, \\
\alpha_T(t) \frac{\sigma_T(t)}{\tilde{P}(t,T)} + \alpha_S(t) \frac{\sigma_S(t)}{\tilde{P}(t,S)} & = 0.
\end{aligned}
\end{equation}
Solving this system gives 
\begin{equation}\label{expression_relative_weights}
\begin{aligned}
\alpha_S(t) & = \frac{\sigma_T(t)/\tilde{P}(t,T)}{\sigma_T(t)/\tilde{P}(t,T) - \sigma_S(t)/\tilde{P}(t,S)}, \\
\alpha_T(t) & = - \frac{\sigma_S(t)/\tilde{P}(t,S)}{\sigma_T(t)/\tilde{P}(t,T) - \sigma_S(t)/\tilde{P}(t,S)}.
\end{aligned}
\end{equation}
Notice that the above expressions are well defined. Indeed, if the denominator was approaching zero, then the sum of the two weights would be zero and this would contradict \eqref{sys1}. Next, we substitute \eqref{expression_relative_weights} into \eqref{relative_dynamics}. Following again Bjork's argument, we use the fact that our impacted portfolio is locally risk-free (as in Definition \ref{def:locally_risk_free}) by assumption and deduce the following relationship must hold:
\begin{multline*}
\frac{\mu_T(t) - J_T(t)}{\tilde{P}(t,T)} \left(- \frac{\sigma_S(t)/\tilde{P}(t,S)}{\sigma_T(t)/\tilde{P}(t,T) - \sigma_S(t)/\tilde{P}(t,S)} \right) + \\
+ \frac{\mu_S(t) - J_S(t)}{\tilde{P}(t,S)} \left( \frac{\sigma_T(t)/\tilde{P}(t,T)}{\sigma_T(t)/\tilde{P}(t,T) - \sigma_S(t)/\tilde{P}(t,S)} \right) = r(t).
\end{multline*}
Multiplying both sides by the term
\begin{equation*}
\frac{\sigma_T(t)}{\tilde{P}(t,T)} - \frac{\sigma_S(t)}{\tilde{P}(t,S)},
\end{equation*}
we obtain
\begin{equation*}
\left(\frac{\mu_S(t) - J_S(t)}{\tilde{P}(t,S)} - r(t) \right) \left(\frac{\sigma_T(t)}{\tilde{P}(t,T)} \right) = \left(\frac{\mu_T(t) - J_T(t)}{\tilde{P}(t,T)} - r(t) \right) \left(\frac{\sigma_S(t)}{\tilde{P}(t,S)} \right).
\end{equation*}
It follows that,
\begin{equation*}
\left( \frac{\mu_S(t) - J_S(t)}{\tilde{P}(t,S)} - r(t) \right) \left( \frac{\tilde{P}(t,S)}{\sigma_S(t)}\right) = \left(\frac{\mu_T(t) - J_T(t)}{\tilde{P}(t,T)} - r(t) \right) \left( \frac{\tilde{P}(t,T)}{\sigma_T(t)}\right),
\end{equation*}
and rearranging we deduce
\begin{equation} \label{eq:lambda_tilde_independent_of_mat}
\frac{\mu_S(t) - J_S(t) - r(t) \tilde{P}(t,S)}{\sigma_S(t)} = \frac{\mu_T(t) - J_T(t) - r(t) \tilde{P}(t,T)}{\sigma_T(t)}.
\end{equation}
Notice that the left hand side of \eqref{eq:lambda_tilde_independent_of_mat} depends on $S$ but not on $T$, while the right hand side of \eqref{eq:lambda_tilde_independent_of_mat} depends on $T$, but not on $S$. Since $S$ and $T$ are arbitrary, we conclude that both sides of \eqref{eq:lambda_tilde_independent_of_mat} depend only on $t$ and $r(t)$.
\end{proof}

\begin{proof}[Proof of Theorem \ref{thm_absence_arbitrage}]
The proof is similar to the proof of Proposition 1.1 in Chapter 1.2 of \cite{bjork1997interest} (see also Harrison and Kreps \cite{harrison1979martingales} Theorem 2 and relative Corollary in Section 3 and Harrison and Pliska \cite{harrison1981martingales}, Theorem 2.7, Section 2). For the sake of completeness, we give the proof here, translated in our price impact environment. Let $T<+\infty$ be some finite maturity. Let $\tilde{h}$ be an arbitrage portfolio and $\tilde{V}$ the corresponding value process. Then, given the positivity of the discount factor (bank account) defined in \eqref{def:bank_account} and the equivalence between the real world measure $\mathbb{P}$ and the impacted risk neutral measure $\tilde{\mathbb{Q}}$, we immediately deduce
\begin{equation}
\tilde{\mathbb{Q}} \left( \frac{\tilde{V}(T)}{B(T)} \geq 0 \right) = 1, \ \ \ \tilde{\mathbb{Q}} \left( \frac{\tilde{V}(T)}{B(T)} >0 \right) > 0.
\label{Q_tilde_probs}
\end{equation}
Moreover we have
\begin{equation*}
0 = \tilde{V}(0) = \frac{\tilde{V}(0)}{B(0)} = \mathbb{E}^{\tilde{\mathbb{Q}}} \left[\frac{\tilde{V}(T)}{B(T)} \right] > 0,
\end{equation*}
where the first equality comes from the definition of arbitrage, the second from the fact that $B(0)=1$ and the third from the fact that $\tilde{V}(t)/B(t)$ is a martingale under $\tilde{\mathbb{Q}}$. Finally, the positivity of the expectation is a consequence of \eqref{Q_tilde_probs}. We get a contradiction so we conclude that absence of arbitrage must hold.
\end{proof}

\begin{proof}[Proof of Proposition \ref{prop:relationship_Jf_Jp}]
We start by writing the impacted forward rate defined in \eqref{impacted_forward_rate} as
\begin{equation*}
\tilde{f}(t,T) = f(t,T) + \int_0^t J^f(s,T) ds, 
\end{equation*}
where $f$ represents the unimpacted forward rate (see e.g. Chapter 6, of \cite{filipovic2009term}) and we used the assumption $\tilde{f}(0,t) = f(0,t)$. Then, using \eqref{impacted_bond_HJM}, we deduce
\begin{align}\label{eq:dummy}
\begin{split}
\tilde{P}(t,T) & = \exp \left\{-\int_t^T \tilde{f}(t,u) du\right\} \\
& = \exp \left\{-\int_t^T f(t,u) du - \int_t^T J^f(s,u) du \right\} \\
& = P(t,T) \exp \left\{-\int_t^T J^f(s,u) du \right\},
\end{split}
\end{align}
where $P$ denotes the unimpacted zero-coupon bond and we used the well known relation between $P(t,T)$ and $f(t,T)$. From \eqref{impacted_bond} and \eqref{def:overall_price_impact} we have 
\begin{equation} \label{bla} 
\tilde{P}(t,T) = P(t,T) - I_T(t).
\end{equation}
Substituting this last expression into \eqref{eq:dummy} and rearranging, we obtain
\begin{equation*}
\exp \left\{-\int_t^T J^f(s,u) du \right\} = \frac{\tilde{P}(t,T)}{\tilde{P}(t,T) + I_T(t)}.
\end{equation*}
By taking logarithms on both sides yields and using \eqref{bla} we get 
\begin{equation*}
\begin{aligned} 
\int_t^T J^f(s,u) du &= - \log \left( \frac{\tilde{P}(t,T)}{\tilde{P}(t,T) + I_T(t)} \right) \\
&= - \log \left( 1- \frac{I_T(t)}{P(t,T)} \right).
\end{aligned}  
\end{equation*}
Differentiating with respect to maturity, we get \eqref{relationship_Jf_Jp}.
\end{proof}

\begin{proof}[Proof of Theorem \ref{hjm_condition_market_impact} ]
Let $B(t)$ be the bank account defined in \eqref{def:bank_account} and let the impacted zero-coupon bond $\tilde{P}$ follow the dynamics \eqref{impaced_ZCB_HJM}. By applying Ito's formula to the discounted impacted zero-coupon bond price, we immediately find
\begin{equation*}
d \frac{\tilde{P}(t,T)}{B(t)} = \frac{\tilde{P}(t,T)}{B(t)} \tilde{b}(t,T) dt + \frac{\tilde{P}(t,T)}{B(t)} \nu(t,T) d W^{\mathbb{P}}(t),
\end{equation*}
with $\tilde{b}$ and $\nu$ defined as in \eqref{b-v-rel}. Changing measure form the real world $\mathbb{P}$ to the impacted risk neutral $\tilde{\mathbb{Q}}$ as in \eqref{def:Q_tilde_rnd_HJM} implies 
\begin{equation*}
d \frac{\tilde{P}(t,T)}{B(t)} = \frac{\tilde{P}(t,T)}{B(t)} \left(\tilde{b}(s,T) + \nu(t,T) \tilde{\gamma}(t) \right) dt + \frac{\tilde{P}(t,T)}{B(t)} \nu(t,T) d W^{\tilde{\mathbb{Q}}}(t).
\end{equation*}
Therefore, we clearly see that
\begin{equation*}
\frac{\tilde{P}(t,T)}{B(t)} \ \ \text{local martingale under}\ \ \tilde{\mathbb{Q}} \iff \tilde{b}(s,T) = - \nu(t,T) \tilde{\gamma}(t).
\end{equation*}
This is our new HJM condition. Notice also that since both functions $\nu$ and $\tilde{b}$ are continuous with respect to $T$, this condition is equivalent to saying that the impacted measure $\tilde{\mathbb{Q}}$ is an equivalent local martingale measure. Following Theorem  6.1 in \cite{filipovic2009term}, Chapter 6, we can plug in the explicit expression for $\tilde b$ in \eqref{b-v-rel} and write the HJM condition \eqref{HJM_condition} as
\begin{equation}\label{gh0} 
-\int_s^T \alpha(s,u) du - \int_s^T J^f(s,u) du + \frac{1}{2} \nu^2(s,T)  = - \nu(t,T) \tilde{\gamma}(t).
\end{equation} 
Differentiating both sides with respect to the maturity $T$ yields the equation 
\begin{equation*}
- \alpha(t,T) + \sigma(t,T) \int_t^T \sigma(t,u) du - J^f(t,T) = \sigma(t,T) \tilde{\gamma}(t),
\end{equation*}
that is
\begin{equation} \label{gh1} 
\alpha(t,T) + J^f(t,T) = \sigma(t,T) \int_t^T \sigma(t,u)du - \sigma(t,T) \tilde{\gamma}(t).
\end{equation}
Substituting \eqref{gh1} in the dynamics of the forward rate \eqref{impacted_forward_rate} and using Girsanov yields \eqref{impacted_forward_rate_Q_tilde}. Using \eqref{HJM_condition} along with \eqref{impaced_ZCB_HJM} and Girsanov gives \eqref{impacted_zc_bond_Q_tilde}.
\end{proof}

\section{Proof of Theorem \ref{linear_feedback_form}}{\label{sec:proof_linear_feedback}}
The uniqueness of the optimal strategy follows by a standard convexity argument for the performance functional \eqref{j-fun}. Hence we only need to derive the optimal strategy. 
  
We start by deriving a system of coupled forward-backward stochastic differential equations (FBSDEs) which is satisfied by the solution to the stochastic control problem. 

\begin{lemma}[FBSDE system]
\label{FBSDE_system}
A control $\hat{v} \in \mathcal{A}$ solves the optimization problem \eqref{opt_stoch_control} if and only if the processes $(X^{\hat{v}},\Upsilon^{\hat{v}},\hat{v},Z^{\hat{v}})$ satisfy the coupled forward-backward stochastic differential equations  
\begin{equation}
\begin{cases}
d X^{\hat{v}}(t) &= - \hat{v}(t) dt, \ \ \ X^{\hat{v}}(0) = x, \\
d \Upsilon^{\hat{v}}(t)& = - \rho \Upsilon^{\hat{v}}(t) dt + \gamma \hat{v}(t) dt, \ \ \ \Upsilon^{\hat{v}}(0) = y, \\
d \hat{v}(t) &= \Lambda(t) d P(t,T) - 2 \Lambda(t) \phi X^{\hat{v}}(t) dt \\
& \quad + \Upsilon^{\hat{v}}(t) \left[-\Lambda'(t) K(t,T) - \Lambda(t) \partial_t K(t,T) + \rho K(t,T) \Lambda(t) \right] dt \\
& \quad + Z^{\hat{v}}(t) \left[\Lambda'(t) + \rho \Lambda(t) \right] dt + \Lambda(t) \Gamma^{\hat{v}}(t) dt + d M(t), \\
 &\qquad \qquad \qquad \qquad \qquad\qquad\qquad \hat{v}(\tau) = \frac{\varrho}{l(\tau,T)} X^{\hat{v}}(\tau) - \frac{K(\tau,T)}{2 l(\tau,T)} \Upsilon^{\hat{v}}(\tau), \\
d Z^{\hat{v}}(t)& = \left(\rho Z^{\hat{v}}(t) + K(t,T) \gamma \hat{v}(t) \right) dt + d N(t), \ \ Z^{\hat{v}}(\tau) = 0,
\end{cases}
\end{equation}
for two suitable square integrable martingales $M=(M(\cdot,T))_{0 \leq t \leq \tau}$ and $N=(N(\cdot,T))_{0 \leq t \leq \tau}$, where the $\Lambda, \Gamma^{\hat{v}}$ and $\tilde{M}$ are defined in \eqref{def:Lambda_optimal_exe}, \eqref{def:Gamma_quantity} and \eqref{def:square_int_mart_M_tilde} respectively.
\begin{proof}
The proof follows the same lines as Lemmas 5.1 and 5.2 in \cite{neuman2020optimal}. Since for all $v \in \mathcal{A}$ the map $v \to \mathcal{J}(v)$ is strictly concave, we can study the unique critical point at which the Gateaux derivative of $\mathcal{J}$, which is defined as
\begin{equation*}
\langle \mathcal{J}'(v), \alpha \rangle := \lim_{\epsilon \to 0} \frac{\mathcal{J}(v+\epsilon \alpha) - \mathcal{J}(v)}{\epsilon},
\end{equation*}
is equal to $0$ for any $\alpha \in \mathcal{A}$. This derivative can be computed analytically as follows. Let $\epsilon>0$ and $v,\alpha \in \mathcal{A}$. Since for all $t \in [0,\tau]$,
\begin{align} \label{bb1}
\begin{split}
X^{v + \epsilon \alpha}(t) & = x - \int_0^t (v(s) + \epsilon \alpha(s) )ds = X^v(t) - \epsilon \int_0^t \alpha(s) ds \\
\Upsilon^{v+\epsilon \alpha}(t) & = \Upsilon^v(t) + \epsilon \gamma \int_0^t e^{-\rho (t-s)} \alpha(s) ds,
\end{split}
\end{align}
From \eqref{j-fun} and \eqref{bb1} we have
\begin{align*}
&\mathcal{J}(v+\epsilon \alpha) = \\
&= \mathbb{E} \Bigg[ \int_0^\tau \left( P(t,T) - K(t,T) \Upsilon^v(t) - K(t,T) \epsilon \gamma \int_0^t e^{-\rho (t-s)} \alpha(s) ds \right) \left(v(t) + \epsilon \alpha(t) \right) dt  \\
&\qquad - \int_0^\tau l(t,T) v^2(t) + \epsilon^2 l(t,T) \alpha_t^2 + 2 l(t,T) v(t) \epsilon \alpha(t) dt + X^v(\tau) P(\tau,T) - \epsilon P(\tau,T) \int_0^\tau \alpha(s) ds  \\
&\qquad - \phi \int_0^\tau (X^v(t))^2 + \epsilon \left(\int_0^t \alpha(s) ds \right)^2 - 2 X^v(t) \epsilon \int_0^t \alpha(s) ds dt  \\
&\qquad - \varrho \left( (X^v(\tau))^2 + \epsilon^2 \left(\int_0^\tau \alpha(s) ds \right)^2 - 2 X^v(\tau) \epsilon \int_0^\tau \alpha(s) ds \right) \Bigg]. 
\end{align*}
It follows that 
\begin{align*}
\mathcal{J}(v+\epsilon \alpha) - \mathcal{J}(v) &= \epsilon \mathbb{E} \Bigg[ \int_0^\tau \left( P(\tau,T) - K(t,T) \Upsilon^v(t) \right) \alpha(t) dt \\
&\qquad - \int_0^\tau K(t,T) v(t) \int_0^t \gamma e^{-\rho (t-s)} \alpha(s) ds dt - 2 \int_0^\tau l(t,T) v(t) \alpha(t) dt \\
&\qquad + 2 \phi \int_0^\tau X^v(t) \int_0^t \alpha(s) ds dt + 2 \varrho X^v(\tau) \int_0^\tau \alpha(s) ds
- P(\tau,T) \int_0^\tau \alpha(s) ds \Bigg]  \\
&\qquad+ \epsilon^2 \mathbb{E} \Bigg[ \gamma \int_0^\tau K(t,T) \alpha(t) \int_0^t e^{-\rho (t-s)} \alpha(s) ds dt - \int_0^\tau l^2(t,T) \alpha^2(t) dt  \\
&\qquad - \phi \int_0^\tau \left( \int_0^t \alpha(s) ds \right)^2 dt - \varrho \left(\int_0^\tau \alpha(s) ds \right)^2 \Bigg].
\end{align*}
Note that all the terms above are finite since $\ell$ and $K$ are bounded functions and since $\alpha, v \in \mathcal A$. Applying Fubini's theorem twice, we obtain
\begin{multline*}
\langle \mathcal{J}'(v), \alpha \rangle = \mathbb{E} \Bigg[ \int_0^\tau \alpha(s) \Bigg( P(s,T) - K(s,T) \Upsilon^v(s) - \int_s^\tau K(t,T) e^{-\rho(t-s)} \gamma v(t) dt + \\
- 2 l(s,T) v(s) + 2 \phi \int_s^\tau X^v(t) dt + 2 \varrho X^v(\tau) - P(\tau,T) \Bigg) ds \Bigg],
\end{multline*}
for any $\alpha \in \mathcal{A}$. We get the following condition on the optimal strategy
\begin{equation}
\begin{aligned}
\mathbb{E} \Bigg[ \int_0^\tau \alpha(s) \Bigg( P(s,T) - K(s,T) \Upsilon^v(s) - \int_s^\tau K(t,T) e^{-\rho(t-s)} \gamma v(t) dt  \\
- 2 l(s,T) v(s) + 2 \phi \int_s^\tau X^v(t) dt + 2 \varrho X^v(\tau) - P(\tau,T) \Bigg) ds \Bigg] = 0.
\label{first_order_condition}
\end{aligned}
\end{equation}
Next we show that given the optimal strategy $\hat{v} \in \mathcal{A}$, the vector $(X^{\hat{v}},\Upsilon^{\hat{v}})$ satisfies the first order condition \eqref{first_order_condition} if and only if the vector $(X^{\hat{v}},\Upsilon^{\hat{v}},\hat{v},Z^{\hat{v}})$ solves a FBSDE system, for some auxiliary process $Z$.

For any $s>0$ we denote by $\mathbb{E}_s$ the conditional expectation with respect to the filtration $\mathcal F_s$. Assume $\hat{v} \in \mathcal{A}$ maximizes the functional $\mathcal{J}$. Applying the optional projection theorem we obtain  
\begin{align*}
\mathbb{E} \Bigg[ \int_0^\tau \alpha(s) \bigg( P(s,T) - K(s,T) \Upsilon^v(s) - \mathbb{E}_s \bigg[\int_s^\tau K(t,T) e^{-\rho (t-s)} \gamma \hat{v}(t) dt \bigg] - 2 l(s,T) \hat{v}(s)  \\
+ \mathbb{E}_s \bigg[2 \phi \int_s^\tau X^{\hat{v}}(t) dt + 2 \varrho X^{\hat{v}}(\tau) - P(\tau,T) \bigg] \bigg) ds \Bigg] = 0,
\end{align*}
for all $\alpha \in \mathcal{A}$. This implies
\begin{equation}  \label{explicit_sol_BSDE_v_hat} 
\begin{aligned}
&P(s,T) - K(s,T) \Upsilon^{\hat{v}}(s) - e^{\rho s} \mathbb{E}_s \bigg[\int_s^\tau K(t,T) e^{-\rho t} \gamma \hat{v}(t) dt \bigg] - 2 l(s,T) \hat{v}(s) \\
&\qquad+ \mathbb{E}_s \bigg[2 \phi \int_s^\tau X^{\hat{v}}(t) dt + 2 \varrho X^{\hat{v}}(\tau) - P(\tau,T) \bigg] \\
&= P(s,T) - K(s,T) \Upsilon^{\hat{v}}(s) \\
&\qquad - e^{\rho s} \left( \mathbb{E}_s \bigg[ \int_0^\tau K(t,T) e^{-\rho t} \gamma \hat{v}(t) dt \bigg] - \int_0^s K(t,T) e^{-\rho t} \gamma \hat{v}(t) dt \right) \\
&\qquad - 2 l(s,T) \hat{v}(s) + \mathbb{E}_s \bigg[2 \phi \int_0^\tau X^{\hat{v}}(t) dt + 2 \varrho X^{\hat{v}}(\tau) - P(\tau,T) \bigg] - 2 \phi \int_0^s X^{\hat{v}}(t) dt \\
&= 0, \ \ \ d \mathbb{P} \otimes ds\ \ \text{a.e. on} \ \ \Omega \times [0,\tau]. \\
\end{aligned}
\end{equation} 
Next, we define the square-integrable martingale
\begin{equation}\label{def:square_int_mart_N_tilde}
\tilde{N}(s) := \mathbb{E}_s \left[\int_0^{\tau} K(t,T) e^{-\rho t} \gamma \hat{v}(t) dt \right]
\end{equation}
and the auxiliary square-integrable process
\begin{equation*}
Z^{\hat{v}}(s) := e^{\rho s} \bigg( \int_0^s K(t,T) e^{-\rho t} \gamma \hat{v}(t) dt - \tilde{N}(s) \bigg),
\end{equation*}
for all $s \in [0,\tau]$. Note that since both $l$ and $K$ are assumed to be uniformly bounded and $v \in \mathcal A$, we have that $P(\tau,T) \in L^2(\Omega,\mathcal{F}_\tau,\mathbb{P})$. Therefore, we obtain
\begin{equation} \label{fgr}
P(s,T) - K(s,T) \Upsilon^{\hat{v}}(s) + Z^{\hat{v}}(s) - 2 l(s,T) \hat{v}(s) + \tilde{M}(s) - 2 \phi \int_0^s X^{\hat{v}}(t) dt = 0
\end{equation}
almost everywhere on $\Omega \times [0,\tau]$, where $\tilde{M}$ is the square-integrable martingale defined in \eqref{def:square_int_mart_M_tilde}, and we immediately see that the process $Z^{\hat{v}}$ satisfies the BSDE
\begin{equation*}
d Z^{\hat{v}}(t) = \left(\rho Z^{\hat{v}}(t) + K(t,T) \gamma \hat{v}(t) \right) dt - e^{\rho t} d \tilde{N}(t), \quad Z^{\hat{v}}(\tau) = 0.
\end{equation*}
From \eqref{def:transient_impact} we get that $\Upsilon^{\hat{v}}$ satisfies   
\begin{equation*}
d \Upsilon^{\hat{v}}(t) = - \rho \Upsilon^{\hat{v}}(t) dt + \gamma \hat{v}(t) dt, \quad \Upsilon^{\hat{v}}(0) = y.
\end{equation*}
Recall that $\Lambda$ was defined in \eqref{def:Lambda_optimal_exe}. From \eqref{fgr} it follows that $\hat{v}$ satisfies the backward stochastic differential equation 
\begin{align*}
\begin{split}
d \hat{v}(s) & = \Lambda'(s) \left(P(s,T) - K(s,T) \Upsilon^{\hat{v}}(s) + Z^{\hat{v}}(s) + \tilde{M}(s) - 2 \phi \int_0^s X^{\hat{v}}(u) du \right)ds \\
& \quad + \Lambda(s) \bigg(d P(s,T) - \partial_s K(s,T) \Upsilon^{\hat{v}}(s) ds - K(s,T) d \Upsilon^{\hat{v}}(s) + d Z^{\hat{v}}(s) \\
& \quad \quad + d \tilde{M}(s) - 2 \phi X^{\hat{v}}(s) ds \bigg) \\
& = \Lambda'(s) \left(P(s,T) - K(s,T) \Upsilon^{\hat{v}}(s) + Z^{\hat{v}}(s) + \tilde{M}(s) - 2 \phi \int_0^s X^{\hat{v}}(u) du \right)ds \\
& \quad + \Lambda(s) d P(s,T) - \Lambda(s) \partial_s K(s,T) \Upsilon^{\hat{v}}(s) ds + \rho K(s,T) \Upsilon^{\hat{v}}(s) \Lambda(s) ds  \\
& \quad   + \Lambda(s) \rho Z^{\hat{v}}(s) ds  - 2 \Lambda(s) \phi X^{\hat{v}}(s) ds + \Lambda(s) d \tilde{M}(s) - \Lambda(s) e^{\rho s} d \tilde{N}(s)  \\
\hat{v}(\tau) & = \frac{\varrho}{l(\tau,T)} X^{\hat{v}}(\tau) - \frac{K(\tau,T)}{2 l(\tau,T)} \Upsilon^{\hat{v}}(\tau),
\end{split}
\end{align*}
Putting these equations together with \eqref{inv}, we obtain the FBSDE system \eqref{FBSDE_system} with $M,N$ square-integrable martingales defined as
\begin{align*}
\begin{split}
M(t) & := \int_0^t \Lambda(s) d \tilde{M}(s) - \int_0^t \Lambda(s) e^{\rho s} d \tilde{N}(s) \\
N(t) & := - \int_0^t e^{\rho s} d \tilde{N}(s).
\end{split}
\end{align*}
In order to check the integrability of $M$, recall that $\Lambda$ was defined in \eqref{def:Lambda_optimal_exe}. Since $l$ is bounded away from $0$ on $[0,\tau] $ (see \eqref{l-pos}) we have 
\[
\sup_{0 \leq t \leq \tau} | \Lambda(t) | < \infty.
\]
Then, it holds
\begin{align*}
\begin{split}
\mathbb{E}[M^2(t)] & \leq \mathbb{E} \left[ \int_0^t \Lambda^2(s) d [\tilde M]_s \right] + \mathbb{E} \left[ \int_0^t\Lambda^2(s) e^{2\rho s} d[\tilde N]_s \right] \\
& \leq C_1 \mathbb{E} [\tilde M]_T + C_2 \mathbb{E} [ \tilde N]_T \\
&< \infty
\end{split}
\end{align*}
for some constants $C_1,C_2$, where in the last inequality we used the fact that both $\tilde{M}$ and $\tilde{N}$ are square integrable martingales.

Next, assume that  $(\hat{v},X^{\hat{v}},\Upsilon^{\hat{v}},Z^{\hat v})$ is a solution to the FBSDE system \eqref{FBSDE_system} and $\hat{v} \in \mathcal{A}$. We will show that $\hat{v}$ satisfies the first order condition \eqref{first_order_condition}, hence it maximizes the cost functional \eqref{j-fun}. First, note that the BSDE for $\hat{v}$ can be solved explicitly and the solution is indeed given in \eqref{explicit_sol_BSDE_v_hat}
\begin{align*}
\begin{split}
\hat{v}(s) & = \frac{1}{2 l(s,T)} \Bigg(P(s,T)- K(s,T) \Upsilon^{\hat{v}}(s) + Z^{\hat{v}}(t) + \tilde{M}(s) - 2 \phi \int_0^s X^{\hat{v}}(t) dt \Bigg) \\
& = \frac{1}{2l(s,T)} \Bigg(P(s,T) - K(s,T) \Upsilon^{\hat{v}}(s) - e^{\rho s} \left(\tilde{N}(s) - \int_0^s K(t,T) e^{-\rho t} \gamma \hat{v}(t) dt \right) \\
& \quad + \tilde{M}(s) - 2 \phi \int_0^s X^{\hat{v}}(u) du \Bigg),
\end{split}
\end{align*}
with $\tilde{N},\tilde{M}$ defined in \eqref{def:square_int_mart_N_tilde} and \eqref{def:square_int_mart_M_tilde}, respectively. Plugging this into the first order condition \eqref{first_order_condition} yields
\begin{align*}
&\mathbb{E} \Bigg[\int_0^\tau \bigg(e^{\rho s} \left(\tilde{N}(s) - \int_0^\tau K(t,T) e^{-\rho t} \gamma \hat{v}(t) dt \right) - \tilde{M}(s) \\
&\qquad + 2 \phi \int_0^\tau X(t) dt + 2 \varrho X^{\hat{v}}(\tau) - P(\tau,T) \bigg)ds \Bigg] \\
&= \mathbb{E} \Bigg[ \int_0^\tau \alpha(s) \bigg( e^{\rho s} (\tilde{N}(s) - \tilde{N}(\tau)) - \tilde{M}(s) + \tilde{M}(\tau) \bigg) ds \Bigg] \\
&= \mathbb{E} \Bigg[ \int_0^\tau \alpha(s) \bigg(e^{\rho s} (\tilde{N}(s) - \mathbb{E}_s[\tilde{N}(\tau)]) - \tilde{M}(s) + \mathbb{E}_s[\tilde{M}(\tau)] \bigg)ds \Bigg] \\
& = 0,
\end{align*}
for all $\alpha \in \mathcal{A}$. Since $\tilde{N},\tilde{M}$ are martingales, hence the first order condition \eqref{first_order_condition} is satisfied and $\hat{v} \in \mathcal{A}$ is the optimal strategy.
\end{proof}
\end{lemma}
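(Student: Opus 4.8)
The plan is to characterise optimality through the vanishing of the Gateaux derivative of the strictly concave functional $\mathcal J$, and then to recast this stationarity condition as the claimed FBSDE system. First I would confirm strict concavity of $v\mapsto\mathcal J(v)$ on $\mathcal A$: expanding $\mathcal J(v+\epsilon\alpha)$ and using the linearity of the maps $v\mapsto X^v$ and $v\mapsto\Upsilon^v$ (see \eqref{def:X} and \eqref{def:transient_impact}), the $\epsilon^2$-coefficient is a negative quadratic form in $\alpha$, with the instantaneous-cost term $-\int_0^\tau l\,\alpha^2\,dt$ together with the risk penalties $-\phi\int_0^\tau(\int_0^t\alpha\,du)^2\,dt$ and $-\varrho(\int_0^\tau\alpha\,du)^2$ dominating the indefinite transient cross-term. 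Consequently $\hat v$ solves \eqref{opt_stoch_control} if and only if $\langle\mathcal J'(\hat v),\alpha\rangle=0$ for every direction $\alpha\in\mathcal A$, which reduces the ``solves'' side of the equivalence to the analysis of this first-order condition.

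Next I would compute $\langle\mathcal J'(v),\alpha\rangle$ explicitly by isolating the $\epsilon^1$-term of $\mathcal J(v+\epsilon\alpha)$ and applying Fubini's theorem (twice) to every double time-integral, so that the derivative takes the form $\mathbb E\big[\int_0^\tau\alpha(s)\,\mathcal G^{v}(s)\,ds\big]$ with $\mathcal G^{v}$ the bracketed expression in \eqref{first_order_condition}; finiteness of each term uses boundedness of $l,K$ and $\alpha,v\in\mathcal A$. Since admissible directions $\alpha$ range only over \emph{adapted} processes, I cannot conclude $\mathcal G^{v}=0$ pointwise; instead I would invoke the optional projection theorem to replace the anticipating integrals $\int_s^\tau(\cdot)\,dt$ by their $\mathcal F_s$-conditional expectations, arriving at the conditional stationarity identity displayed in \eqref{explicit_sol_BSDE_v_hat}.

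The core of the argument is then algebraic. I would introduce the two square-integrable martingales $\tilde M$ and $\tilde N$ of \eqref{def:square_int_mart_M_tilde} and \eqref{def:square_int_mart_N_tilde}, together with the auxiliary process $Z^{\hat v}(s):=e^{\rho s}\big(\int_0^s K(t,T)e^{-\rho t}\gamma\hat v(t)\,dt-\tilde N(s)\big)$, which by construction absorbs the forward-looking transient term and satisfies the linear BSDE $dZ^{\hat v}=(\rho Z^{\hat v}+K\gamma\hat v)\,dt-e^{\rho s}\,d\tilde N$ with $Z^{\hat v}(\tau)=0$. Substituting turns the conditional stationarity identity into the pointwise relation \eqref{fgr}. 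Differentiating \eqref{fgr} in $s$ — using $d\Upsilon^{\hat v}=-\rho\Upsilon^{\hat v}\,dt+\gamma\hat v\,dt$, the $Z$-BSDE, and $\Lambda=1/(2l)$ — produces the backward equation for $\hat v$, whose terminal datum $\hat v(\tau)=\frac{\varrho}{l(\tau,T)}X^{\hat v}(\tau)-\frac{K(\tau,T)}{2l(\tau,T)}\Upsilon^{\hat v}(\tau)$ is read off from \eqref{fgr} at $s=\tau$. Collecting these with the forward dynamics of $X$ and $\Upsilon$ gives the system \eqref{FBSDE_system}, and I would close this direction by verifying $\mathbb E[M^2(t)]<\infty$ via $\sup_{[0,\tau]}|\Lambda|<\infty$ (from \eqref{l-pos}) and the square-integrability of $\tilde M,\tilde N$.

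For the converse I would assume a solution $(\hat v,X^{\hat v},\Upsilon^{\hat v},Z^{\hat v})$ of \eqref{FBSDE_system} with $\hat v\in\mathcal A$, solve its linear BSDE in closed form to recover exactly the representation of $\hat v$ in \eqref{explicit_sol_BSDE_v_hat}, and substitute back into \eqref{first_order_condition}; the anticipating terms reassemble into the increments $\tilde N(s)-\tilde N(\tau)$ and $\tilde M(s)-\tilde M(\tau)$, and conditioning on $\mathcal F_s$ together with the martingale property of $\tilde M,\tilde N$ makes the integrand vanish in expectation for every $\alpha\in\mathcal A$, so the first-order condition holds and, by concavity, $\hat v$ is optimal. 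I expect the main obstacle to be precisely the passage from naive pointwise stationarity to its adapted version: because admissible perturbations are only progressively measurable, the optional projection step is essential, and the subsequent identification of the correct martingale increments (so that $\hat v$ genuinely solves a BSDE rather than merely an $d\mathbb P\otimes ds$-a.e. algebraic relation) is the most delicate point to justify rigorously.
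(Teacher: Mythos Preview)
Your proposal is correct and follows essentially the same approach as the paper's own proof: concavity plus Gateaux differentiation to obtain the first-order condition \eqref{first_order_condition}, optional projection to pass to the adapted version, the same definitions of $\tilde M$, $\tilde N$, and $Z^{\hat v}$, differentiation of the pointwise identity \eqref{fgr} to derive the BSDE for $\hat v$, and the same converse argument via explicit solution and the martingale property of $\tilde M,\tilde N$. The only minor elaboration you add is a remark on why strict concavity holds despite the indefinite transient cross-term, which the paper simply asserts.
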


Before giving the proof of our main theorem, we will need the following Lemma, which will help us to show the optimal strategy in \eqref{linear_feedback_form} is indeed admissible.

\begin{lemma}\label{lem:Gamma_quantity}
Let $\Gamma^{\hat{v}}$ be defined as in \eqref{def:Gamma_quantity}. Then, there exist constants $C_1, C_2 > 0$ such that 
\begin{equation*}
\mathbb{E} \left[  \int_0^\tau \big(\Gamma^{\hat{v}}(s) \big)^{2}  ds \right] \leq C_1 + C_2\mathbb{E} \left[\int_0^\tau v^2(s) ds \right].
\end{equation*}
\begin{proof}
Firstly, by the assumptions on $l$ (see \eqref{l-pos} and \eqref{k-assump}) it follows that
\begin{equation}\label{bounded_ratio_lambda_lambda_prime}
\sup_{0 \leq t \leq \tau} \bigg | \frac{\Lambda'(t)}{\Lambda(t)} \bigg| = \sup_{0 \leq t \leq \tau} \bigg| \frac{\partial_t l(t,T)}{l(t,T)} \bigg| < \infty,
\end{equation}
where $\Lambda$ is given in \eqref{def:Lambda_optimal_exe}. Therefore, from \eqref{bounded_ratio_lambda_lambda_prime},  \eqref{def:Gamma_quantity} and Jensen's inequality we get that there exist constants $C_1,C_2>0$ such that
\begin{align*} 
 \mathbb{E} \left[  \int_0^\tau \Gamma^{\hat{v}}(s)^2  ds \right] 
&\leq \mathbb{E} \left[  \int_0^\tau \Big(\frac{\Lambda'(s)}{\Lambda(s)}\Big)^2 \left(P^2(s,T) + \tilde{M}^2(s) + 4 \phi^2 \Big(\int_0^s X^{\hat{v}}(u) du \Big)^2 \right) ds \right] \\ 
&\leq C_1 \mathbb{E} \left[ \int_0^\tau \left(  P^2(s,T) + \tilde{M}^2(s) +  4 \phi^2  \Big(\int_0^s X^{\hat{v}}(u) du \Big)^2 \right)ds \right] \\ 
&\leq C_2 +  4 \phi^2  \mathbb{E} \left[ \int_0^\tau \Big(\int_0^s X^{\hat{v}}(u) du \Big)^2 ds \right],
\end{align*}
where we used \eqref{def: mathcal_h_squared} and the fact that the martingale $\tilde{M}$ defined in \eqref{def:square_int_mart_M_tilde} is square-integrable. 

Next, using the definition of $X^{\hat{v}}$ in \eqref{def:X} and Jensen's inequality twice, we deduce
\begin{align*} 
\mathbb{E} \left[ \int_0^\tau \Big(\int_0^s X^{\hat{v}}(u) du \Big)^2 ds\right] & = 
\mathbb{E} \left[ \int_0^\tau \Big(\int_0^s \big(x-\int_0^u  \hat v(y) dy  \big)du\Big)^2 ds \right] \\
&\leq C_1 + C_2\mathbb{E} \left[\int_0^\tau \int_0^s \int_0^u \hat v^2(y) dy du ds \right] \\
&\leq C_1 + C_2 \mathbb{E} \left[ \int_0^\tau \int_0^\tau \int_0^\tau \hat v^2(y) dy du ds  \right] \\
&\leq C_1 + C_2 \mathbb{E} \left[\int_0^\tau  \hat v^2(y) ds \right], 
\end{align*} 
for some constants $C_1,C_2$, and we are done. 
\end{proof}
\end{lemma}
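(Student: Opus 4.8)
The plan is to expand the definition \eqref{def:Gamma_quantity}, bound the prefactor $\Lambda'/\Lambda$ uniformly on $[0,\tau]$, and then estimate the three resulting terms one at a time, with only the inventory term depending on the control. First I would observe that, since $\Lambda(t)=1/(2l(t,T))$ by \eqref{def:Lambda_optimal_exe}, one has $\Lambda'(t)/\Lambda(t) = -\partial_t l(t,T)/l(t,T)$; assumptions \eqref{l-pos} and \eqref{k-assump} then give $\inf_{0\le t\le\tau} l(t,T)>0$ and $\sup_{0\le t\le\tau}|\partial_t l(t,T)|<\infty$, so that $\sup_{0\le t\le\tau}|\Lambda'(t)/\Lambda(t)|<\infty$. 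This uniform bound is what lets the prefactor be pulled out as a constant in every subsequent estimate.

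With this in hand, applying $(a+b+c)^2\le 3(a^2+b^2+c^2)$ to \eqref{def:Gamma_quantity} reduces the claim to bounding $\mathbb{E}[\int_0^\tau P^2(s,T)\,ds]$, $\mathbb{E}[\int_0^\tau \tilde M^2(s)\,ds]$, and $\mathbb{E}[\int_0^\tau(\int_0^s X^{\hat v}(u)\,du)^2\,ds]$. For the first, I would use the canonical decomposition $P(\cdot,T)=A(\cdot,T)+\bar M(\cdot,T)$: the finite-variation part satisfies $\sup_{s\le\tau}A^2(s,T)\le(\int_0^\tau|dA(\cdot,T)|)^2$ and the martingale part satisfies $\mathbb{E}[\bar M^2(s,T)]\le\mathbb{E}[\langle\bar M(\cdot,T)\rangle_\tau]$, both finite by the $\mathcal H^2$ condition \eqref{def: mathcal_h_squared}, whence $\mathbb{E}[\int_0^\tau P^2(s,T)\,ds]$ is a constant. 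For the second, since $\tilde M$ in \eqref{def:square_int_mart_M_tilde} is a square-integrable martingale, $s\mapsto\mathbb{E}[\tilde M^2(s)]$ is nondecreasing, so $\mathbb{E}[\int_0^\tau\tilde M^2(s)\,ds]\le\tau\,\mathbb{E}[\tilde M^2(\tau)]<\infty$. These two terms therefore contribute only to the additive constant.

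The control enters only through the third term. Writing $X^{\hat v}(u)=x-\int_0^u\hat v(y)\,dy$ from \eqref{def:X} and applying Jensen's inequality twice, first to the outer average $\int_0^s(\cdot)\,du$ and then to the inner average $\int_0^u\hat v(y)\,dy$, I would obtain $\mathbb{E}[\int_0^\tau(\int_0^s X^{\hat v}(u)\,du)^2\,ds]\le C_1+C_2\,\mathbb{E}[\int_0^\tau\hat v^2(y)\,dy]$, the constants absorbing the powers of $\tau$ and the contribution of $x$. Combining the three estimates yields the stated inequality.

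The hard part is really the uniform control of $\Lambda'/\Lambda$ in the first step: the whole estimate collapses without it, and this is precisely where the structural hypotheses \eqref{l-pos} and \eqref{k-assump} on the temporary-impact function $l$, namely boundedness away from zero and boundedness of its time-derivative on $[0,\tau]$, are essential. The remaining work is routine, the only delicate point being that the nested time integrals in the last step must be unwound in the correct order so that the final bound is genuinely linear in $\mathbb{E}[\int_0^\tau\hat v^2(y)\,dy]$.
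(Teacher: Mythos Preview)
Your proposal is correct and follows essentially the same approach as the paper's proof: bound $\Lambda'/\Lambda$ uniformly via \eqref{l-pos} and \eqref{k-assump}, split the square into the three terms $P^2$, $\tilde M^2$, and the inventory integral, dispose of the first two as constants using the $\mathcal H^2$ condition and square-integrability of $\tilde M$, and handle the last by expanding $X^{\hat v}$ and applying Jensen twice. Your treatment of the $P^2$ and $\tilde M^2$ terms is in fact slightly more explicit than the paper's, but the argument is the same.
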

We are now ready to prove Theorem \ref{linear_feedback_form}. 

\begin{proof}[Proof of Theorem \ref{linear_feedback_form}] 
Define
\begin{equation*}
\mathbf{X}^{{v}}(t) := 
\left(
\begin{matrix}
X^{\hat v}(t) \\
\Upsilon^{\hat v}(t) \\
\hat v(t) \\
Z^{\hat v}(t)
\end{matrix}
\right), \ \ \ 
\mathbf{M}(t) :=
\left(
\begin{matrix}
0 \\
0 \\
P(t,T) + \int_0^t \Gamma^{\hat v}(s) ds + \int_0^t \Lambda^{-1}(s) d M(s) \\
\int_0^t \Lambda^{-1}(s) d N(s)
\end{matrix}
\right),
\end{equation*}
where $\Lambda$ and $\Gamma^{\hat v}$ are defined in \eqref{def:Lambda_optimal_exe} and \eqref{def:Gamma_quantity} respectively. The FBSDE system \eqref{FBSDE_system} can be written as
\begin{equation*}
d \mathbf{X}_t^{\hat v}= A(t) \mathbf{X}_t^{\hat v} dt + \Lambda(t) d \mathbf{M}(t), \ \ \ 0 \leq t \leq \tau,
\end{equation*}
where the matrix $A(t)$ is defined in \eqref{def:time_dep_matrix_A}, with initial conditions
\begin{equation*}
\mathbf{X}^{{\hat v},1}(0) = x, \ \ \ \mathbf{X}^{{\hat v},2}(0) = y,
\end{equation*}
and terminal conditions
\begin{equation} \label{term}
\left(\frac{\varrho}{l(\tau,T)}, - \frac{K(\tau,T)}{2 l(\tau,T)}, -1, 0 \right) \mathbf{X}^{\hat v}(\tau) = 0, \ \ \ (0,0,0,1) \mathbf{X}^{\hat v}(\tau) = 0.
\end{equation}
Exploiting linearity, the unique solution can be expressed as
\begin{equation*}
\mathbf{X}^{\hat v}(\tau) = \Phi(\tau) \Phi^{-1}(t) \mathbf{X}^{\hat v}(t) + \int_t^\tau \Phi(\tau) \Phi^{-1}(s) \Lambda(s) d \mathbf{M}(s),
\end{equation*}
where $\Phi$ solves the ODE \eqref{matrix_ODE}. Moreover, it can be immediately seen that the first terminal condition in \eqref{term} yields
\begin{align*}
\begin{split}
0 & = G^1(t,\tau) X^{\hat v}(t) + G^2(t,\tau) \Upsilon^{\hat v}(t) + G^3(t,\tau) \hat v(t) + G^4(t,\tau) Z^{\hat v}(t) \\
& \quad + \int_t^\tau \Lambda(s) \left(G^3(s,\tau) \left(d P(s,T) + \Gamma^{\hat v}(s) ds + \Lambda^{-1}(s) d M(s) \right) + G^4(s,\tau) \Lambda^{-1}(s) d N(s) \right)
\end{split}
\end{align*}
with $G = (G^1,G^2,G^3,G^4)$ defined in \eqref{def:G_components}. Solving for the trading speed $v$, taking expectations and using that $P \in \mathcal{H}^2$, together with the fact that both $M$ and $N$ are square integrable martingales, implies
\begin{align}\label{trading_speed_u_with_z}
\begin{split}
\hat v(t) & = - \frac{G^1(t,\tau)}{G^3(t,\tau)} X^{\hat v}(t) - \frac{G^2(t,\tau)}{G^3(t,\tau)} \Upsilon^{\hat v}(t) - \frac{G^4(t,\tau)}{G^3(t,\tau)} Z^{\hat v}(t) \\
& \quad - \mathbb{E}_t \left[\int_t^\tau \Lambda(s) \frac{G^3(s,\tau)}{G^3(t,\tau)}(\mu(s) + \Gamma^{\hat v}(s)) ds \right].
\end{split}
\end{align}
Recall that $\Psi$ was defined in \eqref{def:psi_matrix}. Then the second terminal condition in \eqref{term} implies
\begin{align*}
\begin{split}
0 & = (0,0,0,1) \Psi(t,\tau) \mathbf{X}^{\hat v}(t) + (0,0,0,1) \int_t^\tau \Psi(s,\tau) \Lambda(s) d \mathbf{M}(s) \\
& = \Psi^{41}(t,\tau) X^{\hat v}(t) + \Psi^{42}(t,\tau) \Upsilon^{\hat v}(t) + \Psi^{43}(t,\tau) \hat v(t) + \Psi^{44}(t,\tau) Z^{\hat v}(t) \\
& \quad + \int_t^\tau \Lambda(s) \left(\Psi^{43}(s,\tau) \left(d P(s,T) + \Gamma^{\hat v}(s) ds + \Lambda^{-1}(s) d M(s) \right) + \Psi^{44}(s,\tau) \Lambda^{-1}(s) d N(s) \right).
\end{split}
\end{align*}
Hence, taking expectation and solving for $Z^u$ yields
\begin{align}\label{aux_process_Z}
\begin{split}
Z^{\hat v}(t) & = - \frac{\Psi^{41}(t,\tau)}{\Psi^{44}(t,\tau)} X^{\hat v}(t) - \frac{\Psi^{42}(t,\tau)}{\Psi^{44}(t,\tau)} \Upsilon^{\hat v}(t) - \frac{\Psi^{43}(t,\tau)}{\Psi^{44}(t,\tau)} {\hat v}(t) \\
& \quad - \mathbb{E}_t \left[\int_t^\tau \frac{\Lambda(s) \Psi^{43}(s,\tau)}{\Psi^{44}(t,\tau)} (\mu(s) + \Gamma^{\hat v}(s)) ds \right].
\end{split}
\end{align}
Therefore, plugging \eqref{aux_process_Z} into \eqref{trading_speed_u_with_z} gives
\begin{align*}
\begin{split}
\hat v(t) & = - \frac{G^1(t,\tau)}{G^3(t,\tau)} X^{\hat v}(t) - \frac{G^2(t,\tau)}{G^3(t,\tau)} \Upsilon^{\hat v}(t) + \frac{G^4(t,\tau) \Psi^{41}(t,\tau)}{G^3(t,\tau) \Psi^{44}(t,\tau)} X^{\hat v}(t) \\
& \quad + \frac{G^4(t,\tau) \Psi^{42}(t,\tau)}{G^3(t,\tau) \Psi^{44}(t,\tau)} \Upsilon^{\hat v}(t) + \frac{G^4(t,\tau) \Psi^{43}(t,\tau)}{G^3(t,\tau) \Psi^{44}(t,\tau)} {\hat v}(t) \\
& \quad + \frac{G^4(t,\tau)}{G^3(t,\tau)} \mathbb{E}_t \left[\int_t^\tau \frac{\Lambda(s) \Psi^{43}(s,\tau)}{\Psi^{44}(t,\tau)} (\mu(s) + \Gamma^{\hat v}(s)) ds \right] \\
& \quad - \mathbb{E}_t \left[\int_t^\tau \Lambda(s) \frac{G^3(s,\tau)}{G^3(t,\tau)} (\mu(s) + \Gamma^{\hat v}(s)) ds \right]. 
\end{split}
\end{align*}
Rearranging and using the Definitions \ref{def:v_terms}, we obtain the linear feedback form \eqref{optimal_strategy_linear_feedback}. Finally, we prove that the optimal trading strategy is admissible, that is, $\hat{v} \in \mathcal{A}$, as defined in \eqref{def:admissset}. Thanks to assumptions  \eqref{A.1} and \eqref{A.2}, we immediately see that
\begin{equation*}
\sup_{0 \leq t \leq \tau} |v_0(t,\tau)| < \infty.
\end{equation*}
Similarly, from assumptions \eqref{A.1}-\eqref{A.3} we deduce that $v_1$ and $v_2$ are both bounded on $[0,\tau]$. Exploiting again assumptions \eqref{A.1}-\eqref{A.3}, together with \eqref{def: mathcal_h_squared} we get that  
\begin{align*}
& \sup_{0 \leq t \leq \tau} \bigg| \mathbb{E}_t \bigg[\int_t^\tau \frac{\Lambda(s) \Psi^{43}(s,\tau)}{\Psi^{44}(t,\tau)} (\mu(s) + \Gamma^{\hat v}(s)) ds 
 - \mathbb{E}_t \int_t^\tau \Lambda(s) \frac{G^3(s,\tau)}{G^3(t,\tau)} (\mu(s) + \Gamma^{\hat v}(s)) ds \bigg] \bigg|\\
& \leq C \mathbb{E} \left[\int_0^\tau( |\mu(s) |+ |\Gamma^{\hat v}(s)| )ds \right] \\
& \leq \tilde{C}_1 + \tilde{C}_2\left( \mathbb{E} \left[\int_0^\tau \Gamma^{\hat v}(s)^{2}ds \right] \right)^{1/2} \\
& \leq \tilde{C}_1 +\tilde C_2\left(\mathbb{E} \left[\int_0^\tau \hat{v}^2(s)ds \right] \right)^{1/2}, 
\end{align*}
where we have used Jensen's inequality and Lemma \ref{lem:Gamma_quantity} in the last two inequalities. 
Using the above bound, together with  equations \eqref{def:X} and \eqref{def:transient_impact} we get from  \eqref{optimal_strategy_linear_feedback} that
\begin{equation*}
\mathbb{E}[\hat{v}^2(t)] \leq C_1 + C_2 \int_0^\tau \mathbb{E}[\hat{v}^2(s)] ds, \ \ \ 0 \leq t \leq \tau,
\end{equation*}
for some positive constants $C_1,C_2$, where we used again Jensen's inequality. Thanks to Gronwall's lemma, we get that
\begin{equation*}
\sup_{0 \leq t \leq \tau} \mathbb{E}[\hat{v}^2(t)] < \infty,
\end{equation*}
which implies
\begin{equation*}
\int_0^\tau \mathbb{E}[\hat v^2(s)] ds < \infty.
\end{equation*}
Hence Fubini's theorem, we conclude that $\hat v\in \mathcal A$. 
\end{proof}
	
\section*{Acknowledgements}
We are very grateful to an anonymous referees for  careful reading of the manuscript,
and for a number of useful comments and suggestions that significantly improved this paper.

\newpage
 

\end{document}